\documentclass[prx,twocolumn,aps,superscriptaddress,longbibliography, nofootinbib, 10pt]{revtex4-2}
\usepackage{tikz}
\usetikzlibrary{decorations,decorations.pathmorphing,decorations.markings,calc,snakes,positioning,fixedpointarithmetic,shapes,shapes.geometric,patterns}

\usepackage{etoolbox}
\AtBeginEnvironment{tikzpicture}{\catcode`$=3 } 

\tikzset{alignmid/.style={baseline={([yshift=-.5ex]current bounding box.center)}}} 
\tikzset{every picture/.append style=alignmid}

\tikzset{
bottomzigzag/.style={postaction={draw,decorate, decoration={zigzag,amplitude=1pt,segment length=3pt,raise=1pt}}},
zigzag/.style={draw,decorate, decoration={zigzag,amplitude=1pt,segment length=3pt}},
rc/.style=rounded corners,
}

\tikzset{
    -|/.style={to path={-| (\tikztotarget)}},
    |-/.style={to path={|- (\tikztotarget)}},
}

\usepackage{ifthen}
\usepackage{xparse,pgffor}

\tikzset{
netmark/.code={
\tikzset{postaction={/network/mark/.cd,#1,/tikz/.cd,decorate,decoration={name=markings,mark=at position \netmarkpos with{
\begin{scope}[netmarktrafo]
\netmarkcode
\end{scope}
}}}}
\def\netmarkpos{0.5}
},
}

\def\netmarkpos{0.5}
\def\netmarkposoff{0cm}
\def\netmarkcode{}

\tikzset{
netmarktrafo/.style={},
netmarkstyle/.style={solid,semithick,sharp corners},
}

\pgfkeys{
/network/mark/.cd,
sty/.code={
\tikzset{netmarkstyle/.style={#1}}
},
asty/.code={
\tikzset{netmarkstyle/.append style={#1}}
},
f/.style={asty=fill},
p/.code={ 
\def\netmarkpos{#1}
},
poff/.code={ 
\def\netmarkposoff{#1cm}
},
e/.code={ 
\def\netmarkpos{\pgfdecoratedpathlength-0.005cm-\netmarkposoff}

\tikzset{netmarktrafo/.append style={shift={(-\netmarkwidth,0)}}}
},
s/.code={ 
\def\netmarkpos{0.005cm+\netmarkposoff}

\tikzset{netmarktrafo/.append style={shift={(\netmarkwidth,0)},xscale=-1,yscale=-1}}
},
a/.code={ 
\def\netmarkpos{\pgfdecoratedpathlength-0.005cm}

\tikzset{netmarktrafo/.append style={xscale=-1,shift={(-\netmarkwidth,0)}}}
},
b/.code={ 
\def\netmarkpos{0.005cm}

\tikzset{netmarktrafo/.append style={xscale=-1,shift={(\netmarkwidth,0),yscale=-1}}}
},
-/.code={ 
\tikzset{netmarktrafo/.append style={xscale=-1}}
},
r/.code={ 
\tikzset{netmarktrafo/.append style={yscale=-1}}
},
sideoff/.code={ 
\tikzset{netmarktrafo/.append style={shift={(0,#1)}}}
},
slab/.code={ 
\def\netmarkwidth{0}
\def\netmarkcode{
\node[inner sep=0.04cm,netmarkstyle,draw=none] (mylabelwidthtest) at (0,0){\phantom{#1}};
\path let \p1=(mylabelwidthtest.north east), \p2=(mylabelwidthtest.south east), \n1 = {max(abs(\y1),abs(\y2))} in node[inner sep=0.04cm,netmarkstyle] at (0,\n1) {#1};
}
},
lab/.code={ 
\def\netmarkwidth{0}
\def\netmarkcode{
\node[inner sep=0.04cm,anchor=\netmarkanchor] (mylabelwidthtest) at (0,0) {\phantom{#1}};
\draw[white] (mylabelwidthtest.\pgfdecoratedangle)--(mylabelwidthtest.\pgfdecoratedangle+180);
\node[inner sep=0.04cm,anchor=\netmarkanchor,netmarkstyle] at (0,0) {#1};
}
},
rotlab/.code={ 
\def\netmarkwidth{0}
\def\netmarkcode{
\node[inner sep=0.04cm,fill=white,transform shape,rotate=90,anchor=\netmarkrotanchor,netmarkstyle] (mydecorationnodename) at (0,0) {#1};
}
},
mrotlab/.style={ 
rotlab={$\scriptstyle{#1}$}
},
arr/.code={ 
\def\netmarkwidth{0.04}
\def\netmarkcode{\draw[netmarkstyle] (-0.04,0.08)--(0.04,0)--(-0.04,-0.08);}
},
arrbar/.code={ 
\def\netmarkwidth{0.08}
\def\netmarkcode{\draw[netmarkstyle] (-0.08,0.08)--(0,0)--(-0.08,-0.08) (0.04,0.08)--(0.04,-0.08);}
},
rarr/.code={ 
\def\netmarkwidth{0.04}
\def\netmarkcode{\draw[netmarkstyle] (-0.04,-0.08)arc(90-180:90:0.08);}
},
dot/.code={ 
\def\netmarkwidth{0.08}
\def\netmarkcode{\draw[netmarkstyle] (0,0)circle(0.08);}
},
flag/.code={ 
\def\netmarkwidth{0.06}
\def\netmarkcode{\draw[netmarkstyle] (-0.06,0)--(0,0.09)--(0.06,0)--cycle;}
},
darr/.code={ 
\def\netmarkwidth{0.08}
\def\netmarkcode{\draw[netmarkstyle] (-0.04,0)--(0.04,0)--(-0.04,0.08)--cycle;}
},
circ/.code={
\def\netmarkwidth{0.1}
\def\netmarkcode{\draw[netmarkstyle] (0,0) circle (0.1);}
},
hcirc/.code={ 
\def\netmarkwidth{0.1}
\def\netmarkcode{\draw[netmarkstyle] (-0.1,0) arc (180:0:0.1);}
},
diamond/.code={
\def\netmarkwidth{0.1}
\def\netmarkcode{\draw[netmarkstyle] (-0.1,0)--(0,-0.1)--(0.1,0)--(0,0.1)--cycle;}
},
bar/.code={ 
\def\netmarkwidth{0.05}
\def\netmarkcode{
\draw[netmarkstyle] (0,-0.08cm-0.5*\pgflinewidth)--(0,0.08cm+0.5*\pgflinewidth);
}
},
dbar/.code={ 
\def\netmarkwidth{0.13}
\def\netmarkcode{
\draw[netmarkstyle] (-0.04cm,-0.08cm-0.5*\pgflinewidth)--(-0.04cm,0.08cm+0.5*\pgflinewidth) (0.04cm,-0.08cm-0.5*\pgflinewidth)--(0.04cm,0.08cm+0.5*\pgflinewidth);
}
},
hbar/.code={ 
\def\netmarkwidth{0.05}
\def\netmarkcode{
\draw[netmarkstyle] (0, 0.5*\pgflinewidth)--++(0,0.12);
}
},
mill/.code={
\def\netmarkwidth{0.16}
\def\netmarkcode{
\draw[netmarkstyle] (0,-0.5*\pgflinewidth)--++(-0.08,-0.08)--++(0,0.08);
\draw[netmarkstyle] (0,0.5*\pgflinewidth)--++(0.08,0.08)--++(0,-0.08);
}
},
three dots/.code={
\def\netmarkwidth{0.2}
\def\netmarkcode{
\fill (-0.12,0) circle (0.5*0.05) (0,0) circle (0.5*0.05) (0.12,0) circle (0.5*0.05);
}
},
}

\tikzset{wid/.style={minimum width=#1cm}}
\tikzset{hei/.style={minimum height=#1cm}}

\tikzset{sx/.style={xshift=#1cm}}
\tikzset{sy/.style={yshift=#1cm}}

\tikzset{box/.style={draw,rectangle}}
\tikzset{fbox/.style={draw,rectangle, line width=1.1}}
\tikzset{roundbox/.style={draw,rectangle,rounded corners}}
\tikzset{froundbox/.style={draw,rectangle, rounded corners, line width=1.1}}
\tikzset{rounddiamond/.style={draw,diamond,rounded corners}}
\tikzset{dot/.style={draw, shape=circle, fill=black, scale=0.5}}

\tikzset{
netbox/.code={
\node[draw,netbdstyle] (\atomname) at (0,0) {#1};
\coordinate (\atomname-r) at (\atomname.east);
\coordinate (\atomname-l) at (\atomname.west);
\coordinate (\atomname-t) at (\atomname.north);
\coordinate (\atomname-b) at (\atomname.south);
\coordinate (\atomname-tr) at (\atomname.north east);
\coordinate (\atomname-br) at (\atomname.south east);
\coordinate (\atomname-tl) at (\atomname.north west);
\coordinate (\atomname-bl) at (\atomname.south west);
},
}

\tikzset{bdlw/.code={\tikzset{mybdstyle/.style={draw, line width=#1}}}}
\tikzset{bdcol/.code={\tikzset{mybdstyle/.append style={#1}}}}

\newcommand\atoms[2]{
\foreach \name/\keys in {#2}{
\expandafter\atom\expandafter{\keys,#1}{\name}
}
}

\newcommand\atom[2]{
\def\atomname{#2}
\tikzset{
nettrafo/.style={},
netatompos/.style={},
netdeco/.style={},
netpostdeco/.style={},
}

\pgfkeys{/network/atom/.cd,#1}

\begin{scope}[netatompos] 
\begin{scope}[nettrafo] 
\netshapecoords 
\fill[netbackstyle] \netshapepath;
\clip \netshapepath;
\tikzset{netdeco}
\draw[netbdstyle] \netshapepath;
\end{scope}
\tikzset{netpostdeco} 
\end{scope}

}

\pgfkeys{
/network/atom/.cd,
square/.code={

\def\netshapepath{(-\tempsize,-\tempsize)rectangle (\tempsize,\tempsize)}
\def\netshapecoords{
\node[rectangle,wid=2*\tempsize,hei=2*\tempsize,inner sep=0,transform shape](\atomname)at(0,0){};
\coordinate(\atomname-c) at (0,0);
\coordinate(\atomname-r) at (\tempsize,0);
\coordinate(\atomname-l) at (-\tempsize,0);
\coordinate(\atomname-t) at (0,\tempsize);
\coordinate(\atomname-b) at (0,-\tempsize);
\coordinate(\atomname-br) at (\tempsize,-\tempsize);
\coordinate(\atomname-tr) at (\tempsize,\tempsize);
\coordinate(\atomname-bl) at (-\tempsize,-\tempsize);
\coordinate(\atomname-tl) at (-\tempsize,\tempsize);
}},
circ/.code={

\def\netshapepath{(0,0)circle(\tempsize)}
\def\netshapecoords{
\node[circle,wid=2*\tempsize,hei=2*\tempsize,inner sep=0,transform shape](\atomname)at(0,0){};
\coordinate(\atomname-c) at (0,0);
\coordinate(\atomname-r) at (\tempsize,0);
\coordinate(\atomname-l) at (-\tempsize,0);
\coordinate(\atomname-t) at (0,\tempsize);
\coordinate(\atomname-b) at (0,-\tempsize);
}},
triang/.code={

\def\netshapepath{(-30:\tempsize)--(90:\tempsize)--(-150:\tempsize)--cycle}
\def\netshapecoords{
\node[regular polygon,regular polygon sides=3,wid=2*\tempsize,inner sep=0,transform shape](\atomname)at(0,0){};
\coordinate(\atomname-c) at (0,0);
\coordinate(\atomname-cr) at (-30:\tempsize);
\coordinate(\atomname-cl) at (-150:\tempsize);
\coordinate(\atomname-ct) at (90:\tempsize);
\coordinate(\atomname-mb) at (-90:0.5*\tempsize);
\coordinate(\atomname-mr) at (30:0.5*\tempsize);
\coordinate(\atomname-ml) at (150:0.5*\tempsize);
}},
diamond/.code={

\def\netshapepath{(0,-\tempsize)--(\tempsize,0)--(0,\tempsize)--(-\tempsize,0)--cycle}
\def\netshapecoords{
\node[rotate=45,rectangle,wid=sqrt(2)*\tempsize,hei=sqrt(2)*\tempsize,inner sep=0,transform shape](\atomname)at(0,0){};
\coordinate(\atomname-c) at (0,0);
\coordinate(\atomname-r) at (\tempsize,0);
\coordinate(\atomname-l) at (-\tempsize,0);
\coordinate(\atomname-t) at (0,\tempsize);
\coordinate(\atomname-b) at (0,-\tempsize);
}},
pentagon/.code={

\def\netshapepath{(-126:\tempsize)--(-54:\tempsize)--(18:\tempsize)--(90:\tempsize)--(162:\tempsize)--cycle}
\def\netshapecoords{
\node[regular polygon,regular polygon sides=5,wid=2*\tempsize,inner sep=0,transform shape](\atomname)at(0,0){};
\coordinate(\atomname-c) at (0,0);
\coordinate (\atomname-mb)at(-90:{\tempsize*cos(36)});
\coordinate (\atomname-mbr)at(-18:{\tempsize*cos(36)});
\coordinate (\atomname-mtr)at(54:{\tempsize*cos(36)});
\coordinate (\atomname-mtl)at(126:{\tempsize*cos(36)});
\coordinate (\atomname-mbl)at(-162:{\tempsize*cos(36)});
\coordinate (\atomname-cbr)at(-54:\tempsize);
\coordinate (\atomname-cr)at(18:\tempsize);
\coordinate (\atomname-ct)at(90:\tempsize);
\coordinate (\atomname-cl)at(162:\tempsize);
\coordinate (\atomname-cbl)at(-126:\tempsize);
}},
halfcirc/.code={

\def\netshapepath{(\tempsize,0)arc(0:180:\tempsize)--++(0,-0.04)-|cycle}
\def\netshapecoords{
\node[circle,wid=2*\tempsize,hei=2*\tempsize,inner sep=0,transform shape](\atomname)at(0,0){};
\coordinate(\atomname-c) at (0,0);
\coordinate(\atomname-r) at (\tempsize,0);
\coordinate(\atomname-l) at (-\tempsize,0);
\coordinate(\atomname-t) at (0,\tempsize);
\coordinate(\atomname-b) at (0,0);
}},
void/.code={
\def\netshapepath{}
\def\netshapecoords{
\coordinate(\atomname) at (0,0);
\coordinate(\atomname-c) at (0,0);
}},
labbox/.code={
\def\netshapepath{(0,0)}
\def\netshapecoords{}
\tikzset{netpostdeco/.append style={netbox=#1}}
},
}

\pgfkeys{
/network/atom/.cd,
p/.code={\tikzset{netatompos/.append style={shift={(#1)}}}},
xscale/.code={\tikzset{nettrafo/.append style={xscale=#1}}},
yscale/.code={\tikzset{nettrafo/.append style={yscale=#1}}},
scale/.code={\tikzset{nettrafo/.append style={scale=#1}}},
rot/.code={\tikzset{nettrafo/.append style={rotate=#1}}},
hflip/.style={xscale=-1},
vflip/.style={yscale=-1},
big/.style={scale=3/2},
small/.style={scale=2/3},
huge/.style={scale=9/4},
tiny/.style={scale=4/9},
flat/.style={yscale=2/3},
fflat/.style={yscale=4/9},
}

\tikzset{
netbdstyle/.style={line width=0.15em}, 
netdecstyle/.style={},
netpostdecstyle/.style={},
netbackstyle/.style={white},
}
\pgfkeys{
/network/atom/.cd,
bdstyle/.code={\tikzset{netbdstyle/.style={#1}}},
bdastyle/.code={\tikzset{netbdstyle/.append style={#1}}},
decstyle/.code={\tikzset{netdecstyle/.style={#1}}},
decastyle/.code={\tikzset{netdecstyle/.append style={#1}}},
postdecstyle/.code={\tikzset{netpostdecstyle/.style={#1}}},
postdecastyle/.code={\tikzset{netpostdecstyle/.append style={#1}}},
backstyle/.code={\tikzset{netbackstyle/.style={#1}}},
backastyle/.code={\tikzset{netbackstyle/.append style={#1}}},
style/.style={bdstyle=#1, decstyle=#1, postdecstyle=#1},
astyle/.style={bdastyle=#1, decastyle=#1, postdecastyle=#1},
whiteblack/.style={decstyle=white,backstyle=black},
nobd/.style={bdstyle={line width=0}},
}

\tikzset{
netbscope/.code={\begin{scope}[#1]},
netescope/.code={\end{scope}},
}

\pgfkeys{
/network/atom/.cd,
bscope/.code={\tikzset{netdeco/.append style={netbscope={#1}}}},
escope/.code={\tikzset{netdeco/.append style={netescope}}},
dec/.style 2 args={bscope={#1},#2,escope}, 
vbar/.style={dec={rotate=90}{hbar}},
dcross/.style={dec={rotate=45}{hbar},dec={rotate=-45}{hbar}},
cross/.style={hbar,vbar},
sect6/.style={sect=60},
sect8/.style={sect=45},
sect10/.style={sect=36},
}

\def\regdec#1{\pgfkeys{/network/atom/.cd,#1/.code={\tikzset{netdeco/.append style={net#1}}}}}
\regdec{all}\regdec{rhalf}\regdec{rquart}\regdec{brquart}\regdec{dot}\regdec{spiral}\regdec{swirl}\regdec{hstripe}\regdec{hbar}\regdec{rrey}
\pgfkeys{/network/atom/.cd,sect/.code={\tikzset{netdeco/.append style={netsect=#1}}}}

\tikzset{
netall/.code={\fill[netdecstyle] (-0.3,-0.3)rectangle (0.3,0.3);}, 
netrhalf/.code={\fill[netdecstyle] (0,-0.3)rectangle (0.3,0.3);}, 
netrquart/.code={\fill[netdecstyle] (0.075,-0.3)rectangle (0.3,0.3);}, 
netbrquart/.code={\fill[netdecstyle] (0,0)rectangle (0.3,-0.3);}, 
netsect/.code={\fill[netdecstyle] (0,0)--(0,-0.3)arc(-90:-90+#1:0.3)--cycle;}, 
netdot/.code={\fill[netdecstyle] (0,0)circle(0.07);}, 
netspiral/.code={\draw[netdecstyle] plot [variable=\t,domain=0:4] ({0.075*\t*cos(pi*(\t-0.5) r)},{0.075*\t*sin(pi*(\t-0.5) r)});}, 
netswirl/.code={\fill[netdecstyle] plot [variable=\t,domain=0:2] ({0.15*\t*cos(pi*(\t-0.5) r)},{0.15*\t*sin(pi*(\t-0.5) r)}) arc(-90:-450:0.3)--cycle;}, 
nethstripe/.code={\fill[netdecstyle] (-0.3,-0.05)rectangle(0.3,0.05);}, 
nethbar/.code={\draw[netdecstyle] (-0.3,0)--(0.3,0);}, 
netrrey/.code={\draw[netdecstyle] (0,0)--(0.3,0);} 
}

\pgfkeys{
/network/atom/.cd,
postbscope/.code={\tikzset{netpostdeco/.append style={netbscope={#1}}}},
postescope/.code={\tikzset{netpostdeco/.append style={netescope}}},
postdec/.style 2 args={postbscope={#1},#2,postescope}, 
arc/.code={\tikzset{netpostdeco/.append style={netarc=#1}}},
lab/.code={\tikzset{netpostdeco/.append style={netlab={#1}}}},
shadecirc/.code={\tikzset{netpostdeco/.append style=netshadecirc}},
shaderect/.code={\tikzset{netpostdeco/.append style={netshaderect={#1}}}},
debug/.code={\tikzset{netpostdeco/.append style=netdebug}},
markline/.code 2 args={\tikzset{netpostdeco/.append style={netmarkline={#1}{#2}}}},
postcirc/.code={\tikzset{netpostdeco/.append style=netpostcirc}},
}

\tikzset{
netlab/.code={
\pgfkeys{/network/atom/lab/.cd,#1}
\node[netpostdecstyle] at (\ifdefined\netlabpos\netlabpos\else\netlabang:\netlabdist\fi) {\netlabwrap{\netlabtext}};
},
netarc/.code args={#1:#2:#3}{
\draw[netpostdecstyle] (#1:#3) arc (#1:#2:#3);
},
netshadecirc/.code= {
\fill[opacity=0.4,netpostdecstyle] (0,0)circle(0.4);
},
netpostcirc/.code= {
\draw[netpostdecstyle] (0,0)circle(0.15);
},
netshaderect/.code= {
\fill[rc,opacity=0.4,netpostdecstyle] ($-1*(#1)$) rectangle (#1);
},
netdebug/.code= {
\node[red] at (0,0){\atomname};
},
netmarkline/.code 2 args= {
\draw (\atomname)edge[netmark={#2}]++(#1);
},
}

\def\netlabwrap#1{#1}
\pgfkeys{
/network/atom/lab/.cd,
t/.code={\def\netlabtext{#1}}, 
p/.code={\def\netlabpos{#1}}, 
ang/.code={\def\netlabang{#1}},
dist/.code={\def\netlabdist{#1}},
wrap/.code={\def\netlabwrap##1{#1}}, 
}

\usepackage{graphicx}
\usepackage[dvipsnames]{xcolor}
\usepackage{amsmath,amssymb,amsthm}
\usepackage{mathtools}
\usepackage{dsfont}
\usepackage[colorlinks=true,linkcolor=BrickRed,citecolor=Violet,urlcolor=Violet]{hyperref}
\usepackage{bm}      
\usepackage{times}
\usepackage{physics}

\tikzset{
ind/.style={netmark={lab=#1,a}}, 
}

\usepackage{tikz}
\usetikzlibrary{positioning, calc, shapes.geometric, plotmarks}
\usepackage{pgfplots}
\pgfplotsset{width=\linewidth,compat=1.18}

\definecolor{myred}{RGB}{255, 0, 0}
\definecolor{mygreen}{RGB}{0, 200, 30}
\definecolor{myblue}{RGB}{0, 100, 255}

\colorlet{global_color}{myblue}
\colorlet{JIT_x_color}{myred}
\colorlet{JIT_xz_color}{RedViolet}
\colorlet{JIT_xz_heu_color}{myred}
\colorlet{three_zero_D_color}{myblue}

\usepackage[ruled,vlined]{algorithm2e} 

\def\zz{\mathbb Z}

\theoremstyle{remark}
\newtheorem{defn}{Definition}
\newcommand\myparagraph[1]{\smallskip\noindent\textbf{#1. }}

\theoremstyle{remark}
\newtheorem{remark}{Remark}

\theoremstyle{plain}
\newtheorem{thm}{Theorem}

\newcommand{\bZ}{\mathbb{Z}}
\newcommand{\bR}{\mathbb{R}}

\newcommand{\sD}{\mathsf{D}}
\newcommand{\cH}{\mathcal{H}}
\newcommand{\one}{\mathds{1}}
\newcommand{\sDJIT}{\sD_{\mathrm{JIT}}}

\def\ttr{\texttt r}
\def\ttg{\texttt g}
\def\ttb{\texttt b}
\def\twod{\text{2D}}
\def\threed{\text{3D}}
\def\col{\mathcal C}

\DeclareMathOperator*{\argmin}{arg\,min}
\DeclareMathOperator*{\spn}{span}

\newcommand\mycube{
\path (0.8,0)coordinate(x) (0,0.8)coordinate(y) (0.5,0.35)coordinate(z);
\draw[dashed] (0,0)--++(z)--++(x) (z)--++(y);
}
\newcommand\mycubefront{
\draw (0,0)rectangle++($(x)+(y)$) (x)--++(z)--++(y)--++($-1*(x)$)--++($-1*(z)$) ($(x)+(y)$)--++(z);
}

\begin{document}

\title{High-threshold decoding of non-Pauli codes for 2D universality}
\author{Julio C. Magdalena de la Fuente}
\email{jm@juliomagdalena.de}
\author{Noa Feldman}
\affiliation{
Dahlem Center for Complex Quantum Systems, Freie Universit\"at Berlin, 14195 Berlin, Germany}
\author{Jens Eisert}
\affiliation{
Dahlem Center for Complex Quantum Systems, 
Freie Universit\"at Berlin, 14195 Berlin, Germany}
\affiliation{Helmholtz-Zentrum Berlin f{\"u}r Materialien und Energie, 14109 Berlin, Germany}
\author{Andreas Bauer}
\affiliation{
Department of Mechanical Engineering, Massachusetts Institute of Technology, Cambridge, MA 02139, USA}

\begin{abstract}
Topological codes have many desirable properties that allow fault-tolerant quantum computation with relatively low overhead. A core challenge for these codes, however, is to achieve a low-overhead universal gate set with limited connectivity. In this work, we explore a non-Pauli stabilizer code that can be used to complete a universal gate set on topological toric and surface codes in strictly two dimensions.
Fault-tolerant syndrome extraction for the non-Pauli code requires mid-circuit $X$ corrections, a key difference to conventional Pauli codes.
We construct and benchmark a just-in-time (JIT) matching decoder to reliably decide these corrections.
Under a phenomenological error model with equally likely physical and measurement errors, we find a high threshold of $\approx 2.5\,\%$, close to the $\approx 2.9\,\%$ of a decoder with access to the full syndrome history.
We also perform a finite-size scaling analysis to estimate how the logical error rate scales below threshold and verify an exponential suppression in both physical error rate and in the system size.
A second global decoding step for $Z$ errors is required and the non-Clifford gates in the circuit reduce the threshold from $\approx 2.9\,\%$ to $\approx 1.8\,\%$ with a naive decoder.
We show how $Z$ decoding can be improved using knowledge of the $X$ corrections, pushing the threshold to $\approx 2.2\,\%$.
Our results suggest non-Clifford logic in 2D codes could perform comparably to 2D quantum memory.
Our formalism for efficient benchmarking and decoding directly generalizes to a broader family of CSS codes whose $X$ stabilizers are twisted by diagonal Clifford operators, and spacetime versions thereof, defined by CSS-like circuits enriched by $CCZ$, $CS$, and $T$ gates.
\end{abstract}

\maketitle

\section{Introduction}

\myparagraph{Context}
In any realistic scenario, reliable large-scale quantum computing requires \emph{quantum error correction} (QEC)~\cite{shor1995QEC,SteaneCode,Shor1996, Roads, terhal2015review,MindTheGaps}.
Despite the presumably daunting overhead of active QEC, significant theoretical progress 
has been made in this field of research, and 
small-scale experimental demonstrations 
have 
just become available~\cite{Quantinuum2024demonstration, Daguerre2025Demonstration, Google2025Threshold, rosenfeld2025cultivation, Lukin2024QEC, Bluvstein2025architecture}.
To further scale a fault-tolerant computation with reasonable overheads, substantial challenges and bottlenecks remain.

To really scale a fault-tolerant quantum computing scheme, a universal and addressable set of logic gates that have a high error-correcting threshold, using an efficient decoder, is needed.
In topological codes all these aspects are well understood.
\emph{Two-dimensional} (2D) topological 
codes admit planar syndrome extraction circuits and a flexible implementation of logical Clifford gates that can be decoded with a high-threshold, scalable and efficient decoder~\cite{Dennis2002, fowler2012surface, Litinski2019game, Brown2017poking,Boundaries, Kubica2023colorcodedecoders}.
This makes them particularly promising candidates for the first generation of useful quantum computers.
To complete a universal gate set, non-Clifford operations are also needed.
These turn out to be not natively implementable in 2D Pauli codes~\cite{BravyiKoenig, pastawski2015nogologic, OConnor2018disjointness}, giving rise to substantial challenges.
Magic-state distillation is 
the leading approach to circumvent these obstructions~\cite{Bravyi2005MSD, Bravyi2012low-overheadMSD, Litinski2019notascostly, Gidney2019efficientmagicstate, gidney2019autoCCZ, Lee2025MSDColor}.
Despite a large effort in optimization throughout the last decade \cite{Litinski2019notascostly, Gidney2019efficientmagicstate}, the cost of running distillation circuits on logical qubits remains high. Other non-scalable schemes that rely on post-selection techniques perform surprisingly well but must likely be combined with distillation to prevent error floors while maintaining a reasonable overhead~\cite{Psi2024postselection, gidney2024cultivation}. Complementing these efforts along a different line of thought, 
recently, significant theoretical advances have been made to realize a fault-tolerant universal gate set on 2D topological codes in a more native way without distillation.
These schemes rely on encoding information in a non-Pauli stabilizer code during the computation.

\myparagraph{Universal QC in 2D without distillation}
Bombin presented a 2D scheme for universal fault tolerant quantum computation in Ref.~\cite{bombin20182d3d}.
The construction relies on a non-Clifford logic gate in a 3D code that is implemented by turning one of the spatial directions into a time direction~\cite{bombin20182d3d, Brown2020cczsurfacecode}.
Despite being a 2D scheme, both the large qubit overhead and the more complex decoding problem seemed to render these methods impractical~\cite{Scruby2022JITnumerics}.
The main reason for the more complicated decoding problem is that the schemes require many rounds of syndrome extraction, each of which includes non-Clifford operations.
Since corrections cannot be freely propagated through such a circuit in software, like in Clifford circuits, corrections must be applied actively while performing syndrome extraction.
These corrections can be chosen reliably in the presence of faulty measurements with a \emph{just-in-time (JIT) decoder} that only has access to limited syndrome information.
The performance of such a decoder seemed to be significantly reduced compared to global decoding schemes~\cite{Scruby2022JITnumerics}.

Recently, progress has been made to reduce the local physical qubit and circuit overhead by observing that within the non-Clifford circuit the quantum system evolves through a code defined by non-Pauli stabilizers and realizes non-Abelian topological order~\cite{davydova2025universal}.
Logical information can be transferred between these non-Abelian non-Pauli codes and conventional Pauli codes via simple space or time-like interfaces.
This perspective has led to a wealth of different types of non-Clifford logic gates for 2D surface and color codes, including unitary gates and protocols that prepare certain magic states, as well as low-overhead circuits that implement syndrome extraction within the non-Pauli code~\cite{davydova2025universal, bauer2025planar, huang2026hybridsurgery, warman2026clifford-hierarchy, kobayashi2025cliffordcodes, huang2025magicstates, manjunath2026GSC}.
While it was shown that a JIT decoder achieves a finite threshold against local stochastic noise in principle, the practical value of these new low-overhead schemes highly depend on the numerical value of the threshold and the below-threshold logical error-rate scaling.
Previous decoders showed a poor performance which lead to high resource estimates for previous schemes for universal computation in 2D~\cite{scruby2025nodistillation}.

\myparagraph{Results}
In this work, we show that the threshold can be significantly improved by using a better decoding strategy and indicate that also below-threshold scaling of a JIT decoder is comparable to a more standard decoder.
We devise an efficient JIT decoder that is matching based and optimized to the decoding graph at hand.
We observe that the threshold is on par with a global (standard) decoder that is given access to the full syndrome information at once.
We perform explicit numerical benchmarking of the memory error in a non-Pauli code that appears in the middle of several non-Clifford gates first presented in Ref.~\cite{davydova2025universal}.
These logical errors within the non-Pauli code are the dominant source of errors for the logic gate.
The noise model we consider is a phenomenological one with i.i.d.\ measurement errors and Pauli $X$ and $Z$ errors.
The threshold we find is an order of magnitude larger than previously found thresholds~\cite{Scruby2022JITnumerics}, comparable with the memory threshold of a toric code under the same noise model.
We believe that the comparability with the toric-code threshold will persist also for more realistic noise models.

The performance of the full protocol also depends on other types of errors that can be decoded globally.
In the circuits that we consider the JIT decoder is necessary to decode the $X$ errors in the protocol and it will leave (sparse) residual errors.
The non-Clifford gates in the circuit propagate these residual errors causing additional \emph{twisted} $Z$ errors.
This is analogous to the error propagation in the 3+0D dimensional jump protocols for non-Clifford gates~\cite{bombin2016dimensionaljump, beverland2021costuniversal, bombin2018error-propagation, Scruby2022nonPauli} and reduces the threshold of the second, global, decoding step.
We also numerically estimate by how much the error propagation due to the non-Clifford gate reduces the $Z$ threshold, and devise a new decoder that mitigates the effect of the propagation to a certain extent.
We again find that the $Z$ threshold remains comparable to the toric-code threshold under the same noise model.

We also lay the groundwork for analyzing error correction for a broad class of non-Clifford circuits in general.
While decoding and simulating error-correcting Clifford circuits and Pauli stabilizer codes is well understood \cite{Aaronson2004simulation, delfosse2023spacetime, Gidney2021stim}, not much is known for non-Clifford circuits.
For these circuits existing techniques to efficiently sample the correct syndrome statistics and assessing the success probability of a proposed correction~\cite{surti2025logicalmagic} are not applicable, and new methods are required.
We develop a method to compute the interaction between $X$ and $Z$ errors using a path-integral approach.
Our formalism applies to CSS-type QEC circuits consisting of controlled-$X$ gates and $X$ and $Z$ measurements, along with diagonal gates in the third level of the Clifford hierarchy.
This includes syndrome-extraction circuits of non-commuting ``twisted'' CSS codes whose $X$ stabilizers are decorated with diagonal Clifford operators~\cite{Ni2015XS, kobayashi2025cliffordcodes, williamson2026higher-form-gauging, zhu2026twisted-ldpc, christos2026twisted-ldpc}.

\myparagraph{Decoding and simulating non-Clifford QEC circuits}
Before describing our formalism in technical detail, we outline the conceptual differences between the commonly studied error-correcting Clifford circuits and the non-Clifford circuits considered here.
These circuits repeatedly measure a set of non-commuting Clifford stabilizers with eigenvalues of $\pm 1$.
Measuring such stabilizers projects the system onto their joint eigenspaces, and the goal of the QEC circuit is to reliably preserve logical information encoded in the $+1$ eigenspace.
The main obstacle to fault-tolerantly recover the information after many measurement rounds is the non-commuting nature of the stabilizers.
In particular, it follows that subspaces identified with a set of $-1$ outcomes can have reduced error-correcting capabilities.
Hence, to ensure that only high weight errors can alter the encoded information, corrections must be applied within the syndrome extraction circuit.
Reliable corrections require the use of a \emph{just-in-time} (JIT) decoder that decides on a correction based on the partial syndrome information that is available at the time that the correction must be applied.

For topological protocols, including the code we benchmark here, it has been shown that a JIT decoder can achieve a finite threshold under local stochastic noise~\cite{bombin20182d3d, Brown2020cczsurfacecode, davydova2025universal}.
The structural feature that enables JIT decoding in the circuits studied here is the existence of linear constraints among the measurement outcomes.
These constraints are used to detect the presence of harmful errors and yield a well-defined JIT decoding problem.

A fundamental complication in simulating protocols that involve many non-Clifford gates accurately is that Pauli errors can cause random measurement outcomes.
This makes it challenging to efficiently simulate the correct measurement statistics for a given error configuration and to decide on the logical action of the erroneous circuit given the output of a decoder.
One of the main technical contributions of this work is an efficient parametrization of the modified distribution over measurement outcomes through random additional twisted errors.
Moreover, we provide sufficient conditions to decide on the success of a given correction.
Together, this allows for a sampling-based, efficient, simulation method to estimate the logical failure probability of a large class of error-correcting circuits that include non-Clifford operations.
 
\section{TQD circuits with errors}\label{sec:TQDcircuits}
In this section, we define the \emph{twisted quantum double (TQD)} error-correcting circuit and the error model we consider, state the crucial properties of errors, measurement results and corrections, and discuss the implications on the way we decode and benchmark the logical performance.
We will give a rigorous and detailed derivation of the properties of errors, outcomes and corrections in Appendix~\ref{sec:tqd_derivation} by relating the circuit to a topological path integral.

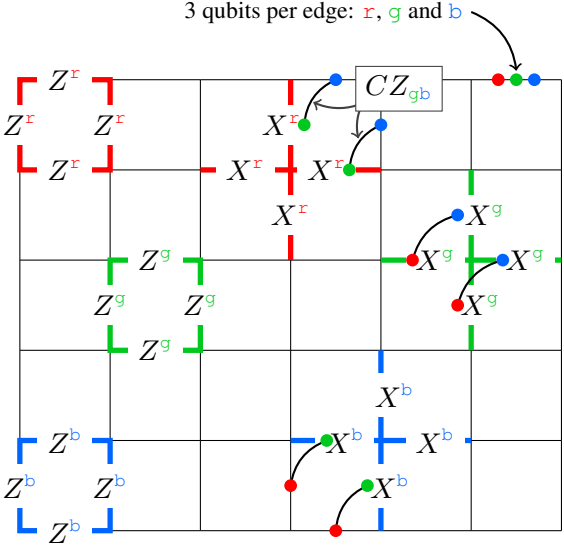
\begin{figure}
\begin{tikzpicture}[scale=1.2]
\foreach \x in {-2,...,3}{
\foreach \y in {-1,...,4}{
\draw (\x,\y) -- (\x+1,\y);
}
}
\foreach \x in {-2,...,4}{
\foreach \y in {-1,...,3}{
\draw (\x,\y) -- (\x,\y+1);
}
}
%
\coordinate (A) at (3,4.75);
\node[anchor=east] at (A) {3 qubits per edge: ${\color{myred}\texttt{r}}$, ${\color{mygreen}\texttt{g}}$ and ${\color{myblue}\texttt{b}}$};
\fill[myred] (3.3,4) circle (2pt) coordinate (p1);
\fill[mygreen] (3.5,4) circle (2pt) coordinate (p2);
\fill[myblue] (3.7,4) circle (2pt) coordinate (p3);
\draw[->, thick] (A) to[bend left=30] (3.5,4.1);
%
\begin{scope}[scale=1, shift={(-2,3)}, transform shape]
\draw[myred, line width=2pt] (0,0) rectangle +(1,1);
\node[fill=white] at (0,0.5) {$Z^{\color{myred}\texttt{r}}$};
\node[fill=white] at (0.5,1) {$Z^{\color{myred}\texttt{r}}$};
\node[fill=white] at (1,0.5) {$Z^{\color{myred}\texttt{r}}$};
\node[fill=white] at (0.5,0) {$Z^{\color{myred}\texttt{r}}$};    
\end{scope}
%
%
%
\begin{scope}[scale=1, shift={(-1,1)}, transform shape]
\draw[mygreen, line width=2pt] (0,0) rectangle +(1,1);
\node[fill=white] at (0,0.5) {$Z^{\color{mygreen}\texttt{g}}$};
\node[fill=white] at (0.5,1) {$Z^{\color{mygreen}\texttt{g}}$};
\node[fill=white] at (1,0.5) {$Z^{\color{mygreen}\texttt{g}}$};
\node[fill=white] at (0.5,0) {$Z^{\color{mygreen}\texttt{g}}$};    
\end{scope}
%
\begin{scope}[scale=1, shift={(-2,-1)}, transform shape]
\draw[myblue, line width=2pt] (0,0) rectangle +(1,1);
\node[fill=white] at (0,0.5) {$Z^{\color{myblue}\texttt{b}}$};
\node[fill=white] at (0.5,1) {$Z^{\color{myblue}\texttt{b}}$};
\node[fill=white] at (1,0.5) {$Z^{\color{myblue}\texttt{b}}$};
\node[fill=white] at (0.5,0) {$Z^{\color{myblue}\texttt{b}}$};    
\end{scope}
%
%
\begin{scope}[scale=1, shift={(0,3)}, transform shape]
\draw[myred, line width=2pt] (0,0) -- +(2,0);
\draw[myred, line width=2pt] (1,-1) -- +(0,2);
\node[fill=white] at (0.5,0) {$X^{\color{myred}\texttt{r}}$};
\node[fill=white] at (1.4,0) {$X^{\color{myred}\texttt{r}}$};
\node[fill=white] at (0.9,0.5) {$X^{\color{myred}\texttt{r}}$};
\node[fill=white] at (1,-0.5) {$X^{\color{myred}\texttt{r}}$};
\coordinate (C1) at (1.15,0.5);
\coordinate (Z1) at (1.5,1);
\draw[thick] (C1) to[bend left=30] (Z1);
\fill[mygreen] (C1) circle (2pt);
\fill[myblue] (Z1) circle (2pt);
\coordinate (C2) at (1.65,0);
\coordinate (Z2) at (2,0.5);
\draw[thick] (C2) to[bend left=30] (Z2);
\fill[mygreen] (C2) circle (2pt);
\fill[myblue] (Z2) circle (2pt);
\draw[thick, ->, darkgray] (2.1,1) to[bend right=40] (1.75, 0.35);
\draw[thick, ->, darkgray] (2.1,1) to[bend left=40] (1.25, 0.75);
\node[fill=White, draw=darkgray] (label) at (2.2,0.9) {$CZ_{\color{mygreen}\texttt{g}\color{myblue}\texttt{b}}$};
\end{scope}
%
\begin{scope}[scale=1, shift={(2,2)}, transform shape]
\draw[mygreen, line width=2pt] (0,0) -- +(2,0);
\draw[mygreen, line width=2pt] (1,-1) -- +(0,2);
\node[fill=white] at (0.6,0) {$X^{\color{mygreen}\texttt{g}}$};
\node[fill=white] at (1.6,0) {$X^{\color{mygreen}\texttt{g}}$};
\node[fill=white] at (1.15,0.5) {$X^{\color{mygreen}\texttt{g}}$};
\node[fill=white] at (1.1,-0.5) {$X^{\color{mygreen}\texttt{g}}$};
\coordinate (C1) at (0.35,0);
\coordinate (Z1) at (0.85,0.5);
\draw[thick] (C1) to[bend left=30] (Z1);
\fill[myred] (C1) circle (2pt);
\fill[myblue] (Z1) circle (2pt);
\coordinate (C2) at (0.85,-0.5);
\coordinate (Z2) at (1.35,0);
\draw[thick] (C2) to[bend left=30] (Z2);
\fill[myred] (C2) circle (2pt);
\fill[myblue] (Z2) circle (2pt);
\end{scope}
%
\begin{scope}[scale=1, shift={(1,0)}, transform shape]
\draw[myblue, line width=2pt] (0,0) -- +(2,0);
\draw[myblue, line width=2pt] (1,-1) -- +(0,2);
\node[fill=white] at (0.6,0) {$X^{\color{myblue}\texttt{b}}$};
\node[fill=white] at (1.6,0) {$X^{\color{myblue}\texttt{b}}$};
\node[fill=white] at (1.15,0.5) {$X^{\color{myblue}\texttt{b}}$};
\node[fill=white] at (1.1,-0.5) {$X^{\color{myblue}\texttt{b}}$};
\coordinate (C1) at (0,-0.5);
\coordinate (Z1) at (0.4,0);
\draw[thick] (C1) to[bend left=30] (Z1);
\fill[myred] (C1) circle (2pt);
\fill[mygreen] (Z1) circle (2pt);
\coordinate (C2) at (0.5,-1);
\coordinate (Z2) at (0.85,-0.5);
\draw[thick] (C2) to[bend left=30] (Z2);
\fill[myred] (C2) circle (2pt);
\fill[mygreen] (Z2) circle (2pt);
\end{scope}
\end{tikzpicture}
\caption{The TQD code can be defined on the square lattice, with three qubits assigned to each edge, which we label in three colors, $\{\ttr, \ttg, \ttb\}\eqqcolon \col$. The code is defined by its stabilizers:
For each color $\lambda\in\col$ and face $f\in F$ we define a stabilizer $B_{\lambda f} = \prod_{e\in f} Z_{\lambda e}$ acting with a product of $Z$ operators on the qubits of color $\lambda$ that are adjacent to $f$.
For each color $\lambda\in\col$ and each vertex $v\in V$ and we define a Clifford stabilizer $A_{\lambda v} = CZ_{(\lambda v)}\prod_{e\ni v}X_{\lambda e}$ that acts with a product of Pauli $X$ operators on adjacent qubits of color $\lambda$ and two $CZ$ operators that we together denote as $CZ_{(\lambda v)}$.
They each act on two neighboring qubits of the two color different from $\lambda$.
The figure shows a set of stabilizer generators -- all others are obtained from spatial translation.
Each bent line segment corresponds to one $CZ$ gate and the color on the endpoints indicates which qubit it acts on.}
\label{fig:stabilizers}
\end{figure}

\myparagraph{Error-correcting circuit and error model}
The error-correcting \emph{TQD circuit} consists of repeated measurements of the stabilizers of the \emph{TQD code} shown in Figure~\ref{fig:stabilizers}.
The stabilizers of the TQD code are Clifford operators that square to the identity but do not mutually commute.
The TQD code has first been introduced explicitly in Refs.~\cite{Hsin2025, kobayashi2025cliffordcodes}, but syndrome-extraction circuits for the same code were also constructed directly using spacetime methods in Refs.~\cite{davydova2025universal, Bauer2025allAbelian}.
Concretely, the code is defined on a square lattice with periodic boundary conditions, whose set of vertices, edges, and faces we denote by $V$, $E$, and $F$, respectively.
There are three qubits on each edge $e\in E$, each associated to a red/green/blue color $\lambda\in \col\coloneqq \{\ttr,\ttg,\ttb\}$.
We denote the Pauli-$X$ and $Z$ operators acting on the qubit of color $\lambda\in\col$ on edge $e\in E$ by $X_{\lambda e}$ and $Z_{\lambda e}$, respectively.
The stabilizers come in two types: One \emph{vertex stabilizer} $A_{\lambda v}$ for each color $\lambda\in\col$ and each vertex $v\in V$, and one \emph{plaquette stabilizer} and $B_{\lambda f}$ for each $\lambda\in \col$ and each face $f\in F$.
$B_{\lambda f}$ is equal to the product 
\begin{equation}
\label{eq:plaquette_stabilizer}
B_{\lambda f}=\prod_{e\in f} Z_{\lambda e}\;
\end{equation}
of Pauli-$Z$ on the edges of $f$.
$A_{\lambda v}$ is equal to the product 
\begin{align}
\label{eq:vertex_stabilizer}
A_{\lambda v} = CZ_{\lambda_0e_0,\lambda_0'e_0'}CZ_{\lambda_1e_1,\lambda_1'e_1'}\prod_{e\ni v} X_{\lambda e}\;
\end{align}
of Pauli-$X$ on the edges adjacent to $v$, along with two $CZ$ (controlled-$Z$) operations acting on nearby qubit pairs.
The precise location of the $CZ$ terms is shown in Figure~\ref{fig:stabilizers}.
Both plaquette and vertex stabilizers are Hermitian operators with eigenvalues $\{\pm 1\}$, and correspond to measurements whose outcomes we label by $s\in \{0,1\}\leftrightarrow\{\pm1\}$.

Since the stabilizers do not commute, we need to specify in which order the TQD circuit measures them.
In our numerical simulation we will consider the order $B_\ttb\rightarrow A_\ttb\rightarrow B_\ttg\rightarrow A_\ttg \rightarrow B_\ttr\rightarrow A_\ttr$, repeated periodically.
The order of $A$ and $B$ measurements of the same color among one another does not matter as they commute mutually.
We consider the circuit under a phenomenological error model:
Right before measuring $B_\lambda$, each $\lambda$-colored qubit experiences a Pauli-$X$ error with probability $p_X$, and a Pauli-$Z$ error with probability $p_Z$.
Additionally, each $B_\lambda$ measurement experiences a random bit-flip error with probability $p_X$, and each $A_\lambda$ measurement experiences an error with $p_Z$.
Finally, it will be necessary to apply Pauli-$X$ corrections on the $\lambda$-colored qubits during the execution of the circuit, namely right after measuring $B_\lambda$.

\myparagraph{Error-correction distance in the TQD code}
Let us develop some intuition on why active corrections are necessary within the TQD circuit introduced above using properties of the 2D non-commuting stabilizer group.
We assume to start with an ideal code state vector $\ket{\psi}$ for which $A_{\lambda v} \ket{\psi} = B_{\lambda f}\ket{\psi} = \ket{\psi}$, $\forall v,f,\lambda$.
Then, a Pauli $X$ error $E_X$ is applied to a subset of qubits.
This rotates $\ket{\psi}$ into a different eigenspace of a subset of $B$ stabilizers since each generator either commutes or anti-commutes with the error.
The ideal $B$ measurement can be labeled by a bitstring $b_m\in \bZ_2^{F}$ and the post-measurement state without measurement errors will be in a subspace $\cH^{b_m}$ which we call $b_m$-sector.
Next, we consider the measurement of $A$ generators in some arbitrary order on the state vector $E_X\ket{\psi}\in \cH^{b_m}$.
Since $A$ stabilizers commute with $B$ stabilizers their measurement preserves every $b_m$-sector.
The action within a non-trivial sector, however, explicitly depends on $b_m$.
To describe every possible state after measuring a complete set of $A$ stabilizers we calculate the commutation relation between the $A$ stabilizers and find
\begin{align}\label{eq:A-groupcomm}
    A_{\lambda v}A_{\lambda' v'} A_{\lambda v}^{\dagger} A_{\lambda' v'}^\dagger \in \langle \{B_{\ell f}\}_{\ell, f}\rangle.
\end{align}
For a fixed value of $b_m$, 
the right hand side is an array of $\pm 1$ such that within each sector, there exists a maximal subgroup $H_1^{b_m}$ of $A$ stabilizers whose elements commute among one another and an orthogonal subgroup $H_2^{b_m}$ that includes all elements that anti-commute with $H_1^{b_m}$.
We can hence treat $H_1^{b_m}$ as a non-Pauli stabilizer group of a subsystem code in the sense that the logical information is left invariant by each $A$ stabilizer but only the operators in $H_1^{b_m}$ provide useful syndrome information. 
In particular, we will not be able to detect any error that commutes with $H_1^{b_m}$.
Since $H_2^{b_m}$ is generated by operators that only act non-trivially in the vicinity of violated $B$ operators (see Eq.~\eqref{eq:A-groupcomm}), the set of undetectable $Z$ errors in the $\cH^{b_m}$ sector are $Z$ strings with endpoints in the vicinity of $B$-violations.
This is sufficient to see that the effective $Z$-distance depends on $b_m$.
Roughly, it scales like the minimum length of a non-trivial path that connects endpoints of $X$ strings.
This is only extensive in the system size if the $X$ errors are corrected before they spread on a length scale comparable to the linear system size.

There is also another reason 
why we do not 
want $X$ errors to spread arbitrarily in TQD circuits, even if they do not violate any $B$ stabilizer.
Namely, $A$ stabilizers do not commute with both $Z$ and $X$ errors.
$A_{\lambda v}$ anti-commutes with $Z$ errors on edges $e$ with $e\ni v$, so we use their syndrome information to infer the location of $Z$ errors.
The group commutator with $X$ errors on the other hand, is a product of $Z$ operators and it follows that the measurement of $A$ stabilizers that overlap with an $X$ error effectively create additional $Z$ errors.
This removes the ability to detect $Z$ errors further, not only in the vicinity of violated $B$ stabilizers but along the entire $X$ error.
Note that as a consequence, a product of $X$ operators along a loop in the dual lattice is not a stabilizer of the code, although it commutes with every $B$ operator.

Both properties highlighted above motivate that the instantaneous state throughout the TQD circuit must be close to an ideal code state at any time in the circuit to preserve the error-correcting distance.
Later in this section we will provide a rigorous formulation of the analogous statements to the above in spacetime and define a JIT decoder to reliably choose the $X$ corrections in the presence of measurement errors in Sec.~\ref{sec:JIT-decoder}.

\myparagraph{Basic notions of cellular cohomology}
The erroneous circuit is best described on a 3D cubic spacetime lattice using the language of $\zz_2$ cellular cohomology, which we briefly review in the following.
The spacetime lattice consists of cells of different dimensions.
The vertices correspond to $0$-cells, edges to $1$-cells, faces to $2$-cells and volumes to $3$-cells.
An \emph{$i$-chain} $a$ is a $\zz_2$-valued vector whose entries correspond to the different $i$-cells.
We denote the space of all $i$-chains by $C^i$.
The \emph{$i$th coboundary map} $d_i$ is the adjacency matrix between $i$-cells and $i+1$-cells:
$(d_i)_{xy}=1$ if the $i$-cell $y$ is adjacent to the $i+1$-cell $x$, and $(d_i)_{xy}=0$ otherwise.
We can view the coboundary map as a $\zz_2$-linear operator $d_i\colon C^i\to C^{i+1}$:
If $a$ is an $i$-chain, then $d_ia$ is an $i+1$-chain, and $d_{i-1}^Ta$ is an $i-1$-chain.
The central identity of cohomology theory is
\begin{equation}
\label{eq:cohomology_relation}
d_{i+1}d_i=0 \quad\forall i\;.
\end{equation}
This holds because every $i+2$-cell and $i$-cell share either no adjacent $i$-cells, or they share $2\equiv 0\in \zz_2\simeq\{0,1\}$.
An $i$-chain $a\in C^i$ is called an \emph{$i$-cocycle} if $d_ia=0$, and it is called an \emph{$i$-coboundary} if $a=d_{i-1}b$ for some $i-1$-chain $b$.
Similarly, $a$ is called an \emph{$i$-cycle} if $d_{i-1}^Ta=0$, and it is called an \emph{$i$-boundary} if $a=d_{i}^Tb$ for some $i+1$-chain $b$.
Every (co)boundary is automatically a (co)cycle due to Eq.~\eqref{eq:cohomology_relation}.
The sets of chains, (co)cycles, and (co)boundaries form groups under element-wise $\zz_2$ addition.
The quotient of the group of $i$-(co)cycles by the subgroup of $i$-(co)boundaries is called the \emph{$i$th (co)homology group}.
We will use $\langle a,b\rangle$ to denote the standard inner product between two $i$-chains $a,b\in C^i$.
On an $n$-dimensional lattice, we define the \emph{fundamental class} $\epsilon\in C^n$ to be the $n$-chain with $\epsilon_c=1\in\zz_2$ for all $n$-cells $c$.
$\epsilon^T=\langle\epsilon,\bullet\rangle$ can be viewed as a linear map $\epsilon^T\colon C^n \to \bZ_2$, and we have
\begin{equation}
\label{eq:fundamental_class_property}
\epsilon^T d_{n-1}=0\;,
\end{equation}
similar to Eq.~\eqref{eq:cohomology_relation}.
The subscript $i$ of the coboundary map $d_i$ can usually be inferred from the context, so we drop it as custom in cohomology theory.

Finally, there is a simple zoomed-out picture for lattice cohomology:
0-chains in $C^0$ are collections of points (those with vector entry $1$), 1-chains in $C^1$ are collections of strings (formed by paths of edges with vector entry $1$), 2-chains in $C^2$ are collections of surface segments (formed by the faces with vector entry $1$), and so on.
$d^Ta$ for $a\in C^1$ corresponds to the endpoints of the string collection $a$, and $a$ is a 1-cycle if it has no endpoints and all strings are closed loops.
$d^Ta$ for $a\in C^2$ corresponds to the boundary lines of the surface segments of $a$, and $a$ is a 2-cycle if there is no boundary and the surface segments form closed membranes.
$i$-cocycles are the same as $n-i$-cycles on the Poincar\'e dual lattice, and have the same zoomed-out picture.

Equipped with cellular cohomology, we are ready to state the properties of measurement outcomes and errors in the circuit which are essential for decoding.
There are two types of properties, namely constraints and equivalences.

\myparagraph{Constraints and equivalences in the toric code}
Before we get to the TQD circuit, it is instructive to recap the constraints and equivalences for the ordinary toric-code syndrome-extraction circuit.
The toric code has stabilizers similar to $A$ and $B$ in Eqs.~\eqref{eq:vertex_stabilizer} and \eqref{eq:plaquette_stabilizer}, but with only a single qubit per edge and without any $CZ$ terms.
As was already pointed out in Ref.~\cite{Dennis2002}, Pauli-$Z$ errors in the circuit can be associated with the space-like edges of a cubic spacetime lattice, and $X$-type measurement outcomes and measurement errors can be associated with the time-like edges.
Taken together, these edges form a $\zz_2$-valued 1-chain $c=c_m+c_e$ on the cubic lattice, where $c_m$ corresponds to the measurement outcomes and $c_e$ to the (measurement or Pauli) errors.
We will refer to $c$ as the \emph{charge configuration}.
Analogously, Pauli-$X$ errors and $Z$-type measurements (measurement errors) form a 1-chain on the Poincar\'e dual lattice, or equivalently a 2-chain on the primal cubic lattice, called the \emph{flux configuration} $b=b_m+b_e$.
For a fixed flux and charge configuration the syndrome extraction circuit implements a product of linear operators, which we denote with $C_{b,c}$.
It can be shown that within the circuit the flux configuration $b$ fulfills the following constraint and equivalence,
\begin{subequations}
\begin{align}
\label{eq:tcb_constraint}
db\neq& 0\quad\Rightarrow\quad C_{c,b}=0\;,\\
\label{eq:tcb_equivalence}
C_{b+d\beta, c} \propto & C_{b,c} \quad\forall \beta\in C^1\;.
\end{align}
\end{subequations}
The constraint in Eq.~\eqref{eq:tcb_constraint} implies that the probability for obtaining measurement outcomes such that $db\neq 0$ is zero.
This helps us to estimate the error $b_e$ given the measurement outcomes $b_m$, as it tells us that $b=b_e+b_m\Rightarrow db_e=db_m$.
In this context, $db_m$ is known as the \emph{syndrome} and the volumes (on which $db_m$ is defined) correspond to \emph{detectors}.
The $\zz_2$-matrix $d$ is the adjacency matrix of the \emph{decoding graph} for $X$-like errors.
For a code defined on a square lattice this is a 3D cubic lattice.
The equivalence in Eq.~\eqref{eq:tcb_equivalence} implies that all configurations of measurement outcomes in the same cohomology class are equally likely.
In the context of decoding, $b$ configurations of the form $b=d\beta$ correspond to undetectable but inconsequential errors, such as the same Pauli-$X$ error happening in two consecutive time steps along with two measurement errors at the two affected $Z$-type stabilizers.
This equivalence means that decoding is successful if the cohomology class of $b_e$ is estimated correctly.
We do not need to estimate its precise support.
Both properties together allow us to decode and successfully correct errors at the end of the protocol with a very high probability.

The situation for $c$ is the same as for $b$, apart from exchanging the cubic lattice and its Poincar\'e dual, and exchanging $d$ and $d^T$.
The constraints and equivalences are
\begin{subequations}
\label{eq:tcc_conseq}
\begin{align}
\label{eq:tcc_constraint}
d^Tc\neq& 0\quad\Rightarrow\quad C_{c,b}=0\;,\\
\label{eq:tcc_equivalence}
C_{b, c+d^T\gamma} \propto& C_{b,c}\quad \forall \gamma\in C^2\;.
\end{align}    
\end{subequations}

\myparagraph{Constraints and equivalences in the TQD circuit}
After discussing the toric code, let us see what is different for the constraints and equivalences of errors, measurement results, and corrections in the TQD circuit.
A rigorous derivation using a path-integral formulation of TQD circuits can be found in Appendix~\ref{sec:tqd_derivation}.
Just like for the toric code, errors and measurement outcomes in the TQD circuit can be mapped onto chains on the cubic lattice.
The only difference is that $b=(b_\ttr,b_\ttg,b_\ttb)$ is now a triple of 2-chains, and $c=(c_\ttr,c_\ttg,c_\ttb)$ is a triple of 1-chains as the TQD code is defined with three qubits on every edge, and three stabilizer generators on each vertex and plaquette, respectively.
Note that in addition to measurement outcomes and errors, we also need to apply Pauli-$X$ corrections during the execution of the circuit.
Similar to $b_e$ and $b_m$, the locations of these corrections can be mapped to another 2-chain $b_c$, and we set $b\coloneqq b_e+b_m+b_c$.
The constraints and equivalences for $b$ in the TQD circuit are
\begin{subequations}
\label{eq:tqdb_conseq}
\begin{align}
\label{eq:tqdb_constraint}
db\neq& 0\quad\Rightarrow\quad C_{c,b}=0\;,\\
\label{eq:tqdb_equivalence}
C_{b+d\beta, c} \propto& C_{b,c} \quad\forall \beta\in C^1\colon \beta \text{ isolated from } b\;,
\end{align}
\end{subequations}
The constraint in Eq.~\eqref{eq:tqdb_constraint} is the same as Eq.~\eqref{eq:tcb_constraint}, except that $b$ is a triple and the coboundary $d$ acts on each of its three components.
So we can still use $db_e=d(b_m+b_c)$ to gain some information about the error $b_e$.
In other words, the decoding graph and syndrome for $X$ errors is unchanged.
The equivalence in Eq.~\eqref{eq:tqdb_equivalence} is much weaker than Eq.~\eqref{eq:tcb_equivalence}:
We can only add $d\beta$ to $b$ if $b$ is zero on the support of $d\beta$ (possibly enlarged by a small constant).
The loss of full cohomological invariance is related to the fact that insertions of Pauli-$X$ strings (corresponding to $b$) can no longer be propagated through the non-Pauli measurements.

So if we want to make sure that the configuration of $b$ causes no logical error, it must not contain any non-contractible loops, in which case we will say that $b$ is \emph{topologically trivial}.
This is more strict than in the toric-code case where it sufficed for $b$ to be cohomologically trivial.
For example, two non-contractible loops of the same type would be trivial in $\zz_2$ cohomology, but they are topologically non-trivial.
Vice versa, if $b$ contains any non-contractible loop, we cannot in general infer the logical effect of the TQD circuit.
In particular, it is not necessarily a stochastic Pauli channel on the logical level.
To keep $b$ topologically trivial we need to apply Pauli-$X$ corrections on the fly to continuously shrink the endpoints of $b$ on the current time slice. 
If we do not do this, a large configuration of $b$ will likely build up over time from which we cannot recover the desired logical effect of the full protocol.
In order to choose the Pauli-$X$ corrections as we go, we need \emph{just-in-time decoding}, which we describe in Section~\ref{sec:JIT-decoder}.

Next, we will state the constraints and equivalences for $c$.
We will refer to these as \emph{twisted} constraints and equivalences to distinguish them from the ``untwisted'' constraints and equivalences of the toric code.
They are given by
\begin{subequations}
\label{eq:tqdc_conseq}
\begin{align}
\label{eq:tqdc_constraint}
K^bd^Tc\neq& \kappa^b\quad\Rightarrow\quad C_{c,b}=0\;,\\
\label{eq:tqdc_equivalence}
C_{b, c+d^T\gamma + \widetilde M^b v} \propto& C_{b,c}\quad \forall \gamma\in C^2, v\in C^0\;.
\end{align}
\end{subequations}
Here, $K^b: \zz_2^{\col\times V}\rightarrow \zz_2^X$ is a linear map from triples of 0-chains to vectors over some abstract set $X$ of generating twisted constraints, $\kappa^b\in\zz_2^X$ is a $\zz_2$-valued vector over the generating twisted constraints, and $\widetilde M^b:\zz_2^{\col\times V}\rightarrow \zz_2^{\col\times E}$ is a linear map from 0-chain triples to 1-chain triples.
Most notably, the constraints and equivalences of $c$ depend on the configuration $b$.
This is an important difference from the toric-code case:
The $c$ decoding problem in the TQD circuit depends on the $b$ configuration.
Explicitly, the map $\widetilde M^b$ is shown in Figure~\ref{fig:zeff_illustration}, and $K^b$ is an injective map whose image is the kernel of $d^T\widetilde M^b$.
If $b=0$, then $K^0=\one$, $\kappa^0=0$, and $\widetilde M^b=0$, so the $c$ constraints and equivalences are equal to these for (three copies of) the toric code shown in Eq.~\eqref{eq:tcc_conseq}.
If $b\neq 0$, then some constraints in Eq.~\eqref{eq:tqdc_constraint} are reduced to the image of $(K^b)^T$ relative to these for $b=0$, shown in Eq.~\eqref{eq:tcc_constraint}.
At the same time, new equivalences are added in Eq.~\eqref{eq:tqdc_equivalence} relative to $b=0$, namely these in the image of $\widetilde M^b$.
In fact, for every missing generating constraint, we get precisely one new generating equivalence.
So the overall ``twisted homology group'' of the decoding problem, the space of $c$ configurations satisfying the constraints modulo the equivalences, remains invariant.
In Figure~\ref{fig:zeff_illustration} (b) we present a simple zoomed-out picture for the modified constraints and equivalences.
The twisted constraints allow $d^Tc$ to be non-zero in the vicinity of $b$, that is, the $c$ strings can terminate near $b$.
While the $b=0$ equivalences correspond to adding small contractible loops to $c$, the new twisted equivalences also allow adding short strings with endpoints in the vicinity of $b$.

In decoding language, Eq.~\eqref{eq:tqdc_constraint} means that the decoding graph is reduced from a cubic graph, defined by $d^T$ to a hypergraph with adjacency matrix $K^bd^T$ and $\kappa^b$ to a error-free syndrome.
The vector $\kappa^b$ exactly captures the linking charge phenomenon~\cite{bombin2018error-propagation}.
The new hypergraph has a potentially smaller code distance, that is, a smaller weight of the minimal $c$ with $K^bd^Tc=0$ of non-trivial homology.
Roughly, the larger $b$ the more the distance can be reduced, as illustrated in Figure~\ref{fig:zeff_illustration} (c).
Intuitively, a non-trivial $c$ does not need to form a full non-contractible loop, but it can connect different $b$ segments to form the loop.
In particular, the distance can approach $0$ if $b$ itself contains any non-contractible loop, even if it is homologically trivial.
Note that in these cases the logical action of $b$ itself is also unknown and there is no way to recover the correct logical channel, even without any $Z$ errors.
The fact that $b$ reduces the distance for the $c$ decoding problem is an additional reason why we need to keep $b$ small as we go, for which we need just-in-time decoding.

Finally, we want to highlight a useful perspective on the new twisted equivalences of the form $\widetilde M^b v$.
They can be understood as Pauli-$Z$ and $X$-measurement errors that occur with probability $\frac12$ everywhere along the support of $b$.
In fact, these $Z$ errors can be correlated, with up to $6$ edges of the cubic lattice of 2 different colors experiencing an error at the same time.
The exact structure of the twisted errors depends on the location of errors and corrections with respect to the non-Clifford elements in the circuit.
We call these probability-$\frac12$ errors \emph{twisted errors}.
Note that any decoder can only know part of the support of $b$, namely $b_m$ (observed measurement outcomes) but not $b_e$ (physical errors).
This allows for a partial heralding strategy for the twisted errors, similar to erasure errors.
We describe one explicit, simple, strategy to make use of this additional information for the $Z$ decoding in Section~\ref{sec:charge-decoder}.

\begin{figure*}
\input{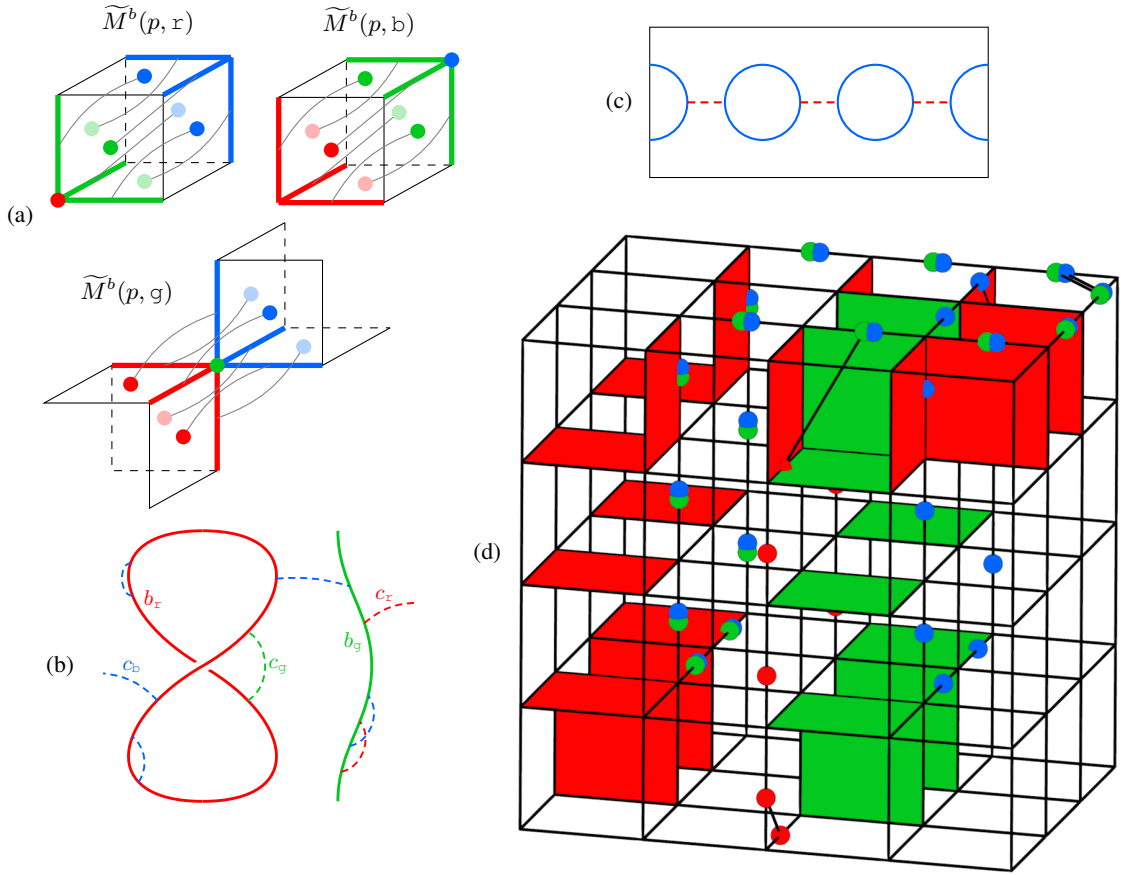}
\caption{
Twisted errors in the TQD circuit.
Figure (a) shows how to obtain the twisted errors from the flux configuration $b$ identified as a set of faces.
There are three types of twisted errors per cubical unit cell, corresponding to $\widetilde M^b (\lambda,p)$ for any color-vertex pair $(\lambda, p)$.
Each twisted error consists of a subset of 6 color-edge pairs, which are marked in the drawing as colored edges.
Each of these 6 color-edge pairs is part of the subset iff $b_{\lambda f}=1$ for a specific color-face pair $(\lambda, f)$ depending on the color-edge pair.
These color-face pairs are marked by a colored dot at the center of $f$ and connected to their color-edge pair with a thin line.
In particular, the twisted errors are trivial (the empty subset) if $b=0$ on all faces nearby.
Figure (b) displays a zoomed-out picture of an example $b$ configuration (solid lines) and an example $c$ configuration (dashed lines) that fulfills the twisted constraints. We see that the twisted constraints allow some charge lines $c$ to ``terminate'' in the vicinity of the flux lines $b$ without being detected by the twisted detectors.
Figure (c) illustrates how a non-trivial flux $b\neq 0$ can reduce the fault distance for the $Z$ decoding.
A non-trivial cycle $c_{\ttr}$ in the $Z$ decoding graph in the presence of flux loops $b_{\ttb}$ can have lower weight than a corresponding non-trivial cycle in the absence of fluxes since the $c_{\ttr}$ is allowed to ``jump between'' the loops.
Figure (d) shows an exemplary flux configuration $b$ consisting of red and green faces ($b_\ttr$ and $b_\ttg$ with $b_\ttb=0$). Each connected component forms a loop on the Poincar\'e dual lattice.
Each color-edge pair that is part of a twisted error is marked by a small ball of the corresponding color on the edges.
For each flux configuration of one color, there are twisted errors of the other two colors on edges in its geometric neighborhood.
For an isolated flux loop of one color, the set of edges that host all the twisted errors are obtained by shifting the dual lattice onto the direct lattice in a specific way.
In the example, there are green and blue twisted errors along the red $b$ loop and red and blue twisted errors along the green $b$ loop.
Most twisted errors correspond to a single color-edge pair.
Some twisted errors are also correlated weight-2 errors. These are indicated by a connecting line between the two corresponding balls.
When multiple $b$ loops coincide at a cube, the situation can get more complicated.
For example, at the top front there is a twisted error consisting of both a red and a green edge, so the non-trivial $b$ leads to a coupling of the red, green, and blue $Z$ decoding problems.
For general $b$ configurations, twisted errors can consist of up to 6 color-edge pairs, even though the example above shows only twisted errors with 1 or 2 edges (which are the most common for a sparse $b$ configuration).
We would like to note that this analysis directly applies to the 3-copy-cup gate on 3D toric codes on the cubic lattice~\cite{breuckmann2025cupsgatesicohomology}.
}
\label{fig:zeff_illustration}
\end{figure*}

\myparagraph{Benchmarking of the TQD circuit}
Given the constraints and equivalences of the measurement outcomes, errors, and corrections, we can efficiently compute an upper bound for the logical error rate of the TQD circuit by Monte-Carlo sampling.
Notably, we do not have to perform a state-vector simulation of the circuit containing non-Clifford gates.
Instead, it suffices 
to sample different configurations of errors and measurement outcomes.
This is similar to how errors and corrections on Pauli codes or in Clifford circuits can directly be sampled on the decoding graph~\cite{Gidney2021stim, delfosse2023spacetime}.

To begin with, we draw error configurations $b_e$ and $c_e$ i.i.d. randomly in accordance with the phenomenological noise model.
Given the error configuration, we determine the $X$ syndrome $db_m=db_e$ via Eq.~\eqref{eq:tqdb_constraint}, and feed it to the just-in-time decoder.
The just-in-time decoder determines an $X$ correction $b_c$ in a step-by-step fashion.
Due to the weakened equivalence in Eq.~\eqref{eq:tqdb_equivalence}, the just-in-time decoder is successful if $b_c+b_m+b_e$ does not have any non-contractible loops.
If there are non-contractible loops, we abort with failure, as the logical effect that $b$ has on the circuit is irreversible, and the following $Z$ decoding problem becomes ill-defined.

Given $b=b_e+b_m+b_c$ and successful just-in-time decoding, we proceed with the global $Z$ decoding.
Even without any actual $Z$-type errors, the twisted errors give rise to a random syndrome, which is a solution to Eq.~\eqref{eq:tqdc_constraint}.
Each solution is equally likely due to Eq.~\eqref{eq:tqdc_equivalence}.
We can sample a random solution by choosing one fixed solution $c_0$ to Eq.~\eqref{eq:tqdc_constraint} and adding a random element of the kernel of $d^TK^b$, which is equal to the image of $\widetilde M^b$.
For the latter, we choose a random 0-chain $v\coloneqq (v_\ttr,v_\ttg,v_\ttb)\in \zz_2^{\col\times V}$, whose value on each color-vertex pair is $0$ or $1$ with probability $\frac12$.
After adding the actual $Z$ error configuration $c_e$, we have $d^Tc_m = d^T (c_e + c_0+\widetilde M^b v)$.
After sampling the $Z$ syndrome $d^T c_m$, we pass it to the global $Z$ decoder and obtain as output a 1-chain $\widetilde c_e$ with $d^T\tilde c_e=d^Tc_m$.
If $\widetilde c_e + c_e+c_0+\widetilde M^b v$ has trivial homology, then the circuit with the actual $c=c_m+c_e$ is equal to the circuit with the estimated $\widetilde c=c_m+\widetilde c_e$, due to the equivalence in Eq.~\eqref{eq:tqdc_equivalence}.
So the decoder is successful in this case -- conversely, the decoder fails if $\widetilde c_e + c_e + c_0+\widetilde M^b v$ has non-trivial homology.
The benchmarking procedure is summarized in Algorithm~\ref{alg:benchmarking}.
The simulation results for the full TQD circuit can be found in Sec.~\ref{sec:charge-decoder}.

\begin{remark}
    In our actual numerical simulation in Sec.~\ref{sec:charge-decoder}, we have slightly simplified Algorithm~\ref{alg:benchmarking} by setting $c_0=0$ in step 5, in other words, we ignore the linking-charge phenomenon.
    Since linked loops are unlikely to occur in a sparse residual flux configuration, the real $c_0$ can be chosen close to zero in most cases, and this is a valid approximation for the noise model and type of JIT decoder we consider.
\end{remark}

\begin{algorithm}
\caption{Sampling TQD circuits with errors}
\label{alg:benchmarking}
\DontPrintSemicolon
Generate $m$ samples of logical success or failure as follows\;
\nl Sample i.i.d.\ $X$ error configurations $b_e$ by including each face of the cubic spacetime lattice with probability $p_X$\;
\nl Apply the just-in-time decoder over $T$ time steps. The output is a 2-chain $b_c$ such that $d(b_e+b_m+b_c) = 0$\;
\nl Determine if $b\coloneqq b_e+b_m+b_c$ contains any non-contractible loops. If yes, declare decoder failure and abort. If no, continue with $Z$ decoding\;
\nl Sample i.i.d.\ $Z$ error configuration $c_e$ by including each edge of the cubic spacetime lattice with probability $p_Z$\;
\nl Compute an arbitrary solution $c_0$ to Eq.~\eqref{eq:tqdc_constraint}, that is, $K^bd^T c_0 = \kappa^b$\;
\nl Sample i.i.d.\ twisted error configuration $v$ by choosing $v_{\lambda v}\in \{0,1\}$ for each vertex $v\in V$ and each color $\lambda\in \col$ with probabilities $\{\frac12,\frac12\}$\;
\nl Run the global decoder for $Z$ errors. The input is given by the syndrome $d^T(c_e + c_0 +\widetilde M^b v)$. The decoder may also be granted access to $b_m+b_c$. The output of the decoder is a 1-chain $\widetilde c_e$ such that $d^T\widetilde c_e = d^T(c_e+c_0+\widetilde M^b v)$\;
\nl If $\widetilde c_e+c_e+c_0\widetilde M^bv$ has trivial homology, declare decoder success, otherwise declare decoder failure\;
\end{algorithm}

\section{The JIT matching decoder}\label{sec:JIT-decoder}
In this section we provide a procedure of how to define a \emph{just-in-time} (JIT) decoder from a global matching decoder.
A similar procedure has also been described in Ref.~\cite{bombin20182d3d} as a ``naive'' JIT decoder and in Ref.~\cite{bauer2025planar}.
In Appendix~\ref{app:generalJIT} we define a more general JIT decoder that might be relevant in a broader context when a correction must be applied with only partial syndrome information.

\myparagraph{Intuitive explanation}
There are two reasons why corrections must be applied actively within TQD circuits which we discussed in detail in Section~\ref{sec:TQDcircuits}.
On a high level, one reason is that the measurements of Clifford stabilizers only provide reliable information about the presence of errors in the subspace stabilized by the plaquette stabilizers.
Another reason is that the logical action of the entire protocol depends on the topology of the chain formed by errors and corrections, not only on its $\bZ_2$-homology.
That is, the logical effect of two parallel chains of $X$ errors and corrections along homologous cycles in the spacetime lattice do not cancel each other.
A viable strategy to ensure that the spacetime region in which the errors and corrections are supported is topologically trivial is by keeping each cluster of errors small.
Note that this is not the case in Clifford circuits, where the Pauli frame can be tracked efficiently and any Pauli correction can be postponed.

A \emph{just-in-time} (JIT) decoder is a decoder that, whenever needed, decides on a correction at that timestep while given access to the syndrome information acquired until that timestep.
In TQD circuits, a JIT decoder must decide on a correction supported within the narrow time window between the last round of syndrome extraction and the current one.
We call that window the \emph{commit region}.

We define a JIT decoder based on a minimum-weight decoder for the global decoding graph with variable weights.
For the phenomenological noise model we consider in our simulations the decoding graph is $\bZ_2$-matchable and we use a \emph{minimum-weight-perfect-matching} (MWPM) decoder.
The same construction also works for any other global decoder whose output is a specific correction.
At time $t$, a matching JIT decoder consists of two steps:
\begin{enumerate}
    \item Make a \emph{current estimate} of the lowest-weight global error that fits the observed syndrome.
    \item Find the lowest-weight correction at the current timestep that \emph{merges previous corrections} with the current estimate.
\end{enumerate}
It is crucial for the performance of the decoder to use the maximal amount of available syndrome information in the first step.
This information is used to decide on a set of endpoints on the boundary of the commit region.
In the second step, the endpoints of the previous timestep and the current one are matched within the commit region.
We illustrate the individual steps in Figure~\ref{fig:jit}.

\begin{figure*}
    \centering
    \begin{tikzpicture}[scale=0.575, transform shape,
    vertex/.style={circle,fill=black,inner sep=0pt,minimum size=3pt},
    edge/.style={thin},
    synd/.style={circle,fill=Red,inner sep=0pt,minimum size=7pt},
    futuresynd/.style={circle,fill=Red,inner sep=0pt,minimum size=7pt, opacity=0.6},
    guessedsynd/.style={circle,fill=Dandelion,inner sep=0pt,minimum size=7pt},
    error/.style={line width=2pt, RedViolet},
    cor1/.style={line width=3pt, YellowOrange, dotted},
    cor2/.style={line width=2pt, Dandelion},
    comm/.style={line width=2pt, Dandelion}
]

\newcommand{\drawgrid}[2]{%
\def\rows{#1} 
\def\cols{#2} 
\def\spacing{1} 
\foreach \i in {0,...,\numexpr\rows-1\relax}{
    \foreach \j in {0,...,\numexpr\cols-1\relax}{
        \coordinate (v\i\j) at (\j*\spacing, \i*\spacing);
    }
}
\foreach \i in {0,...,\numexpr\rows-1\relax}{
    \foreach \j in {0,...,\numexpr\cols-2\relax}{ 
        \pgfmathtruncatemacro{\next}{\j+1}
        \draw[edge] (v\i\j) -- (v\i\next);
    }
}
\foreach \i in {0,...,\numexpr\rows-2\relax}{ 
    \foreach \j in {0,...,\numexpr\cols-1\relax}{
        \pgfmathtruncatemacro{\next}{\i+1}
        \draw[edge] (v\i\j) -- (v\next\j);
    }
}
\foreach \i in {0,...,\numexpr\rows-1\relax}{
    \foreach \j in {0,...,\numexpr\cols-1\relax}{
        \node[vertex] at (v\i\j) {};
    }
}
}
\begin{scope}[shift={(-9,1)}, scale=1.25, transform shape]
\drawgrid{7}{6}
\foreach\i in {0,...,5}{
    \node[anchor=east, shift={(-0.25,0.5)}] at (v\i0) {$t=\i$}; 
}

\draw[error] (v12) -- (v11);
\draw[error] (v40) -- (v43);
\draw[error] (v65) -- (v55);
\draw[error] (v24) -- (v22) -- (v32);
\draw[comm, dashed] (v12) -- (v11);
\draw[comm] (v40) -- (v50) -- (v53) -- (v43);
\draw[comm, dashed] (v65) -- (v55);
\draw[comm] (v32) -- (v34) -- (v24);

\node[synd] at (v12) {};
\node[synd] at (v11) {};
\node[synd] at (v40) {};
\node[synd] at (v65) {};
\node[synd] at (v43) {};
\node[synd] at (v55) {};
\node[synd] at (v24) {};
\node[synd] at (v32) {};

\begin{scope}[shift={(-2.75,3)}]
\draw[Gray, rectangle, fill=white, opacity=0.85, rounded corners, line width=1pt] (-1,-1.05) rectangle (1.55,0.8);
\node[align=left, anchor=west] at (-0.1,0) {errors};
\draw[error] (-0.9,0) -- (-0.3,0);
\node[align=left, anchor=west] at (-0.1,0.5) {syndromes};
\node[synd, align=center] at (-0.65, 0.5) {};
\node[align=left, anchor=west, text width=2.5cm] at (-0.1,-0.6) {corrections from $\sDJIT$};
\draw[comm] (-0.9,-0.6) -- (-0.3,-0.6);
\end{scope}

\end{scope}





\begin{scope}[shift={(0,3)}]
\draw[Black, rectangle, fill=white, rounded corners, line width=1pt] (-0.25,-0.25) rectangle (11.25,5.25);
\node[anchor=west, scale=1.5, text width=2cm, align=center] at (0,4.5){$\sDJIT(1)$};
\begin{scope}[shift={(5,3)}]
\drawgrid{3}{6}

\node[synd] at (v12) {};
\node[synd] at (v11) {};
\fill[white] (-0.2, 1.8) rectangle (5.2,2.2);
%
\node[anchor=east, scale=1.25, text width=2cm, align=center] at (0.5,1.5){input:};
\draw[dashed, Gray] (-1.5,2) -- (-1.5,-0.25) -- (6,-0.25);

\end{scope}

\begin{scope}[shift={(0,0)}]

\node[align=center, text width=3cm] at (2.5, 2.5) {1. \textbf{global estimate}};
\drawgrid{2}{6}
\coordinate (w) at (2.5,2);
\foreach \i in {0,...,5}{%
    \draw[edge] (w) -- (v1\i);
}
\node[vertex] at (w) {};

\draw[cor1] (v11) -- (v12);
\node[synd] at (v12) {};
\node[synd] at (v11) {};
\end{scope}

\begin{scope}[shift={(6,1)}]
\node[align=center, text width=2cm] at (2.5, 1.5) {2. \textbf{merge}};
\node[scale=1.25, text width=2cm, align=center] (out) at (3.5,-0.7){output of $\sDJIT(1)$};
\draw[->, thick, Dandelion] (2.75,-0.8) to[bend left=35] (1.5,-0.05);

\drawgrid{2}{6}

\draw[cor2] (v01) -- (v02);
\node[guessedsynd] at (v01) {};
\node[guessedsynd] at (v02) {};
\end{scope}    
\end{scope}

\begin{scope}[shift={(0,-5)}]
\draw[Black, rectangle, fill=white, rounded corners, line width=1pt] (-0.25,-0.25) rectangle (11.25,7.25);
\node[anchor=west, scale=1.5, text width=2cm, align=center] at (0,6.5){$\sDJIT(2)$};

\begin{scope}[shift={(5,4)}]
\drawgrid{4}{6}

\node[synd] at (v12) {};
\node[synd] at (v11) {};
\node[synd] at (v24) {};
\node[synd] at (v32) {};
\fill[white] (-0.2, 2.8) rectangle (5.2,3.2);
\node[anchor=east, scale=1.25, text width=2cm, align=center] at (0.5,2.5){input:};
\draw[dashed, Gray] (-1.5,3) -- (-1.5,-0.25) -- (6,-0.25);
\end{scope}

\begin{scope}[shift={(0,0)}]

\node[align=center, text width=3cm] at (2.5, 3.5) {1. \textbf{global estimate}};
\drawgrid{3}{6}
\coordinate (w) at (2.5,3);
\foreach \i in {0,...,5}{%
    \draw[edge] (w) -- (v2\i);
}
\node[vertex] at (w) {};

\draw[cor1] (v11) -- (v12);
\draw[cor1] (v24) -- (w);
\node[synd] at (v12) {};
\node[synd] at (v11) {};
\node[synd] at (v24) {};
\node[synd] at (w) {};
\end{scope}

\begin{scope}[shift={(6,2)}]
\node[align=center, text width=2cm] at (2.5, 1.5) {2. \textbf{merge}};
\node[scale=1.25, text width=2cm, align=center] (out) at (2.5,-1){output of $\sDJIT(2)$};
\draw[->, thick, Dandelion] (out) to[bend left=40] (3.90,0.5);

\drawgrid{2}{6}

\draw[cor2] (v04) -- (v14);
\node[guessedsynd] at (v04) {};
\node[guessedsynd] at (v14) {};

\end{scope}    
\end{scope}

\begin{scope}[shift={(-19,-5)}]
\draw[Black, rectangle, fill=white, rounded corners, line width=1pt] (-0.25,-0.25) rectangle (18.25,5.25);
\node[anchor=west, scale=1.5, text width=2cm, align=center] at (0,4.5){$\sDJIT(3)$};

\begin{scope}[shift={(1.5,0)}]
\drawgrid{5}{6}

\node[guessedsynd] at (v34) {};
\node[synd] at (v12) {};
\node[synd] at (v11) {};
\node[synd] at (v24) {};
\node[synd] at (v32) {};
\fill[white] (-0.2, 3.8) rectangle (5.2,4.2);
\node[anchor=east, scale=1.25, text width=2cm, align=center] at (0.5,3.5){input:};
\draw[dashed, Gray] (-1.5,4) -- (5.25,4) -- (5.25,0);
\end{scope}

\begin{scope}[shift={(7.5,0)}]

\node[align=center, text width=3cm] at (2.5, 4.5) {1. \textbf{global estimate}};
\drawgrid{4}{6}
\coordinate (w) at (2.5,4);
\foreach \i in {0,...,5}{%
    \draw[edge] (w) -- (v3\i);
}
\node[vertex] at (w) {};

\draw[cor1] (v11) -- (v12);
\draw[cor1] (v24) -- (v22) -- (v32);
\node[synd] at (v12) {};
\node[synd] at (v11) {};
\node[synd] at (v24) {};
\node[synd] at (v32) {};
\end{scope}

\begin{scope}[shift={(13,3)}]
\node[align=center, text width=2cm] at (2.5, 1.5) {2. \textbf{merge}};
\node[scale=1.25, text width=2cm, align=center] (out) at (2.5,-2){output of $\sDJIT(3)$};
\draw[->, thick, Dandelion] (out) to[bend right=30] (3,-0.1);

\drawgrid{2}{6}

\draw[cor2] (v04) -- (v02);
\node[guessedsynd] at (v04) {};
\node[guessedsynd] at (v02) {};
\end{scope}    
\end{scope}

\coordinate (A) at (-17,1.75);
\coordinate (B) at (-17,3.5);
\coordinate (C) at (-17,5);
\coordinate (D) at (-19,6.5);

\node[anchor=south west, scale=1.5, align=left, draw=Black, rectangle, fill=white, rounded corners, line width=1pt] at (A) {$\sDJIT(4)$};
\node[anchor=south west, scale=1.5, align=left, draw=Black, rectangle, fill=white, rounded corners, line width=1pt] at (B) {$\sDJIT(5)$};
\node[anchor=south west, scale=1.5, align=left, draw=Black, rectangle, fill=white, rounded corners, line width=1pt] at (C) {$\sDJIT(6)$};
\node[anchor=south west, scale=2, align=center, fill=white, draw=Dandelion, rounded corners, line width=2pt] at (D) {just-in-time decoder\\ \Large $\sDJIT$};

\draw[->, line width=1.5pt, darkgray] (-0.5,3.5) to[bend right=40] (-0.5,1.5);
\draw[->, line width=1.5pt, darkgray] (-0.5,1) to[bend right=40] (-1.75,0.5);
\draw[->, line width=1.5pt, darkgray] (-13,0.5) to[bend right=40] (-14.75,2);
\draw[->, line width=1.5pt, darkgray] (-14.75,2.25) to[bend right=40] (-14.75,3.5);
\draw[->, line width=1.5pt, darkgray] (-14.75,3.75) to[bend right=40] (-14.75,5);
\end{tikzpicture}
    \caption{A JIT decoder is a decoder that decides on a correction at time $t$ given access to syndrome information acquired until $t$ only.
    We devise a JIT decoder from a global decoder $\sD$ defined on the full decoding graph.
    The resulting JIT decoder calls $\sD$ two times on the decoding graph with modified weights.
    In the first step a \emph{current estimate} is obtained from $\sD$ on the currently available syndrome information. In the second step previous corrections and the new best estimate are \emph{merged} by solving a matching problem on a single time-slice of the decoding graph.
    The figure illustrates the workings of this JIT matching decoder step-by-step on a representative error configuration in a 2D toy picture of the actual decoding graph.
    At each timestep the JIT decoder first performs an estimate of the error based on the currently available syndrome information and then merges this with the endpoints of the corrections done in the previous step.
    This guarantees that all syndromes are paired up after the execution of the circuit.
    We see that while the JIT decoder fails to identify the minimum-weight correction, it matches up the syndromes within each connected component of the error chain.}
    \label{fig:jit}
\end{figure*}

\myparagraph{Formal definition}
We define a \emph{just-in-time} (JIT) version $\sDJIT$ of a global matching decoder $\sD$.
For the purpose of this section, $\sD$ can be thought of as a \emph{minimum-weight perfect-matching} (MWPM) decoder that can take any (weighted) graph $G = (V,E)$ and vertex set $\sigma\subseteq V$ as an input and returns the minimum-weight edge set $\varepsilon\subseteq E$ whose boundary equals $\sigma$.
Throughout this section we identify $\varepsilon$ and $\sigma$ as vectors in $\bZ_2^E$ and $\bZ_2^V$, respectively.

The decoding graph $G$ is predefined by the circuit and the error model.
We decompose the vertices in $G$ into \emph{equal-time slices}
\begin{align}
    V = \bigsqcup_t V_t,
\end{align}
where each $V_t$ denotes the vertices of a connected subgraph of $G$ that is only connected to vertices in $V_{t-1}$ and $V_{t+1}$ via edges in $G$.
We call edges within $V_t$ \emph{spacelike at time $t$}, the edges between $V_{t-1}$ and $V_t$ \emph{past-oriented at time $t$} and the edges between $V_{t}$ and $V_{t+1}$ \emph{future-oriented at time $t$}.

At time $t$ a JIT decoder only has access to the syndrome in $V_{\leq t}$ and must decide on a correction on spacelike and future-oriented edges.
\begin{defn}
    At time $t$, we define the \emph{accessible subgraph} $\tilde{G}_{\leq t}$ as the graph with vertex set
    \begin{align}
        \tilde{V}_{\leq t} = V_{\leq t} \sqcup \{w\},
    \end{align}
    where $V_{\leq t} = \bigsqcup_{\tau\leq t} V_\tau$ and $w$ a newly introduced vertex.
    The edge-set of $G^{(t)}$ is given by 
    \begin{align}
        \tilde{E}_{\leq t} = E_{\leq t} \sqcup \{(v,w)\}_{v\in V_t},
    \end{align}
    where $E_{\leq t}$ is the edge set in $G$ within $V_{\leq t}$.
    The weights of the edges $\{(v,w)\}_{v\in V_t}$ are set to 1.
\end{defn}

\begin{remark}
    The newly introduced vertex $w$ stands in for all the syndrome vertices that lie the future. Instead of adding a single vertex one could equivalently define $\tilde{G}_{\leq t}$ as a graph with the same connectivity as $G$ but with all weights of edges that are not connected to $V_{\leq t}$ set to 0.
\end{remark}

This defines the accessible information of $\sDJIT$ at $t$.
The output of $\sDJIT$ is a set of edges in a commit region.

We define the \emph{commit graph} $\tilde{G}_{t}$ at $t$ as the graph whose edges are the union of spacelike and future-oriented edges at $t$,
\begin{align}
    C_{t} = E_t \sqcup E_{(t,t+1)},
\end{align}
where $E_{(t,t+1)}$ is the set of edges in $G$ between $V_t$ and $V_{t+1}$.

We are now in position to define $\sDJIT$ from a global decoder $\sD$, defined for any weighted graph $G=(V,E)$ with weight-vector $w\in \bR^E$.
$\sD$ is assumed to return $\argmin_{\varepsilon\in \bZ_2^E}(\norm{\varepsilon}\;|\; \partial\varepsilon = s)$, where $\norm{\varepsilon} = \sum_{i}\varepsilon_i w_i$.

\begin{defn}
Let $\sD$ be a global $\bZ_2$ matching decoder as defined above.
Its \emph{JIT-version} $\sDJIT$ consists of a sequence of classical subroutines $\{\sDJIT(t)\}_t$.
Each $\sDJIT(t)$ returns a correction $e_{\mathrm{c}}(t) \in\bZ_2^{C_{t}}$ from the two inputs:
\begin{itemize}
    \item the portion of the syndrome $s$ that is accessible at time $t$, $s_{\leq t}\in \bZ_2^{V_{\leq t}}$,
    \item the output of $\sDJIT(t-1)$, $e_{\mathrm{c}}(t-1)\in \bZ_2^{C_{t-1}}$.
\end{itemize}
The subroutine $\sDJIT(t)$ is composed of two steps and defined in Algorithm~\ref{alg:DJIT-main}.
Over all time steps $\sDJIT$ returns the correction $e_{\mathrm{c}}^{\mathrm{JIT}} = \sum_t e_{\mathrm{c}}(t) \in E$. By construction, its syndrome will match the observed syndrome which we prove explicitly in Appendix~\ref{app:generalJIT} for a general JIT decoder.
\end{defn}

\begin{algorithm}
\caption{Pair-matching $\sDJIT(t)$}
\label{alg:DJIT-main}
\DontPrintSemicolon
\KwIn{syndrome $s_{\leq t}\in \bZ_2^{V_{\leq t}}$, output of $\sDJIT(t-1)$ $e_{\mathrm{c}}(t-1)\in \bZ_2^{C_{t-1}}$}
\KwOut{correction $e_{\mathrm{c}}(t)\in \bZ_2^{C_t}$}
\hrulefill\;
Let $s^{\mathrm{tot}} = \sum_{v\in V_{\leq t}} s_{\leq t}|_v \in \bZ_2$ \; 
Let $s_w = (0\dots0,s_t^{\mathrm{tot}})\in \bZ_2^{V_{\leq t}}\oplus \bZ_2^{\{w\}} = \bZ_2^{\tilde{V}_{\leq t}}$\;\; 
\tcp{Step 1: global estimate}
\nl Run $\sD$ on $\tilde{G}_{\leq t}$ with syndrome $s_t + s_w$, get $e_1\in \bZ_2^{\tilde{E}_{\leq t}}$\;
Let $\tilde{s}_t = \eval{\partial (e_1)}_{V_t \sqcup V_{t+1}}\in \bZ_2^{V_t}\oplus \bZ_2^{V_{t+1}}$\;
Let $\tilde{s}_{t-1} = \eval{\partial (e_{\mathrm{c}}(t-1))}_{V_t \sqcup V_{t+1}}\in \bZ_2^{V_t}\oplus \bZ_2^{V_{t+1}}$\;\;
\tcp{Step 2: merge}
\nl Run $\sD$ on $\tilde{G}_{t}$ with syndrome $\tilde{s}_{t} + \tilde{s}_{t-1}$, get $e_{\mathrm{c}}(t)\in \bZ_2^{C_t}$\;
\Return{$e_{\mathrm{c}}(t)$}
\end{algorithm}

\myparagraph{Threshold simulation}
Here, we examine the threshold that the JIT decoder achieves for correcting Pauli-$X$ errors and $Z$-measurement errors in the TQD circuit, and compare it to the threshold of the ordinary toric code under the same error model.
Concretely, this means that we compare the $\sDJIT$ with $\sD$ on a three-dimensional cubic decoding graph.
In both cases the cubic lattice decoding graph emerges when considering the phenomenological error models with
\begin{itemize}
    \item a Pauli $X$ error with probability $p_{\mathrm{phys}}$ prior to every round of stabilizer measurement and
    \item a bit-flip error on every plaquette measurement with probability $p_{\mathrm{phys}}$.
\end{itemize}
Note that the chosen $X$ correction affects the resilience of the TQD circuit to Pauli-$Z$ errors, as explained in Sec.~\ref{sec:TQDcircuits}.
This leads to an additional difference in performance between the TQD circuit and the ordinary toric code.
We perform numerical simulations to estimate the threshold of different global decoding strategies against Pauli-$Z$ errors and $A$-measurement errors in Section~\ref{sec:charge-decoder}.

To simulate the $X$-error decoding, we perform Monte-Carlo simulations as explained in Sec.~\ref{sec:TQDcircuits}.
We consider an $L\times L$ spatial square lattice with periodic boundary conditions and $L$ time steps of noisy measurements.
To determine the logical class of the output state in our simulation, we continue applying the JIT decoder for $L/2$ additional error-free rounds of measurements, until all the syndromes in the $L^{\times 3}$-sized erroneous region are matched.
We believe that this method has smaller finite-size effects than applying a global matching decoder to the syndrome of the output state.
We declare failure if the cycle obtained from adding the error and the correction from the decoder is non-contractible.

Each error-sample is first decoded with a JIT-decoder $\sDJIT$ and then with a global decoder $\sD$ for comparison.
In Figure~\ref{fig:x_errors} we plot the two logical error-rate curves obtained from the two decoders for various system sizes up to $L=25$.
We can clearly identify a threshold behavior for both decoders.
The threshold and its statistical error are determined through finite-size scaling analysis by collapsing the data around the crossing points of the curves using the autoScale protocol~\cite{melchert_autoscale_2009} as implemented in the \verb|fssa| package~\cite{sorge_pyfssa_2015}.
We find
\begin{align}
    p_{\mathrm{th}}^{\mathrm{JIT}} = 2.51 \pm 0.31 \,\%\qq{and} p_{\mathrm{th}}^{\mathrm{glob}} = 2.90 \pm 0.12 \,\%.
\end{align}
The threshold of the global decoder is consistent with the literature~\cite{Wang2003statmech}.
Remarkably, the JIT-decoder only has a slightly lower threshold while operating with a fundamental limitation.

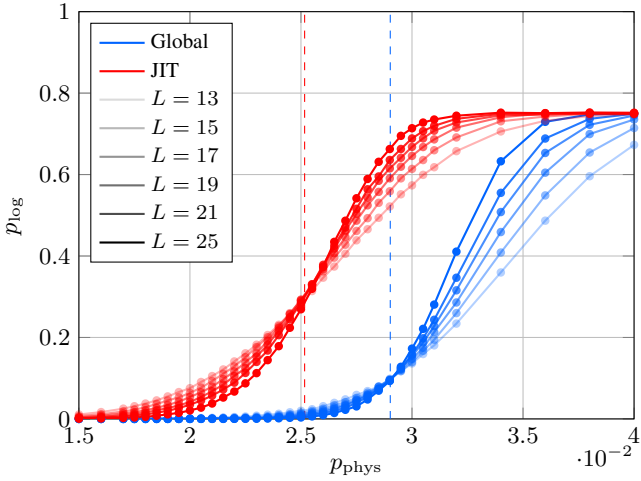
\begin{figure}
    \centering
  \begin{minipage}{0.5\textwidth}
    \centering
    \pgfplotsset{compat=1.18}

\begin{tikzpicture}
\begin{axis}[
    height=7cm,
    ymin=0, ymax=1.0,
    xmin=0.015, xmax=0.04,
    xlabel={$p_{\mathrm{phys}}$},
    ylabel={$p_{\mathrm{log}}$},
    grid=major,
    scaled x ticks=true,
    x tick scale label style={
        at={(1,0)}, 
        yshift=-10pt, 
        anchor=north east
    },
    legend style={draw=none, fill=none},
]
\addplot[global_color, opacity=0.3, thick,mark=*,mark size=1.2pt] coordinates {
  (0.015,5.31401e-05) (0.016,6.76329e-05) (0.017,0.0002657) (0.0175,0.000289855) (0.018,0.000555556) (0.0185,0.000594203) (0.019,0.000908213) (0.0195,0.00124155) (0.02,0.00137681) (0.0205,0.00198068) (0.021,0.00267633) (0.0215,0.003343) (0.022,0.0045314) (0.0225,0.00588406) (0.023,0.00789372) (0.0235,0.0103961) (0.024,0.0125894) (0.0245,0.0162621) (0.025,0.0199517) (0.0255,0.0237391) (0.026,0.0312029) (0.0265,0.0393913) (0.027,0.047913) (0.0275,0.0588502) (0.028,0.0682899) (0.0285,0.0802512) (0.029,0.0965894) (0.0295,0.116981) (0.03,0.136174) (0.0305,0.159353) (0.031,0.180647) (0.032,0.234082) (0.034,0.359597) (0.036,0.486738) (0.038,0.595971) (0.04,0.672942)
};
\addplot[global_color, opacity=0.44, thick,mark=*,mark size=1.2pt] coordinates {
  (0.015,1.95122e-05) (0.016,3.8835e-05) (0.017,8.25243e-05) (0.0175,0.00015534) (0.018,0.000194175) (0.0185,0.000315534) (0.019,0.000262136) (0.0195,0.000660194) (0.02,0.000902913) (0.0205,0.00113592) (0.021,0.00126214) (0.0215,0.00248544) (0.022,0.00294175) (0.0225,0.00359709) (0.023,0.00502913) (0.0235,0.00658738) (0.024,0.00917476) (0.0245,0.0116359) (0.025,0.015165) (0.0255,0.0195388) (0.026,0.0251854) (0.0265,0.0332864) (0.027,0.0412087) (0.0275,0.0514293) (0.028,0.0641951) (0.0285,0.0787561) (0.029,0.0982) (0.0295,0.117746) (0.03,0.139459) (0.0305,0.167107) (0.031,0.194537) (0.032,0.259805) (0.034,0.408595) (0.036,0.548785) (0.038,0.654141) (0.04,0.714268)
};
\addplot[global_color, opacity=0.58, thick,mark=*,mark size=1.2pt] coordinates {
  (0.015,1.95122e-05) (0.016,9.7561e-06) (0.017,4.87805e-05) (0.0175,2.43902e-05) (0.018,0.000126829) (0.0185,0.000107317) (0.019,0.000170732) (0.0195,0.000263415) (0.02,0.000395122) (0.0205,0.000434146) (0.021,0.000673171) (0.0215,0.00125366) (0.022,0.00154146) (0.0225,0.00236098) (0.023,0.00319512) (0.0235,0.00429756) (0.024,0.0062) (0.0245,0.00870732) (0.025,0.0117805) (0.0255,0.0157366) (0.026,0.0208683) (0.0265,0.0280245) (0.027,0.0356912) (0.0275,0.0470049) (0.028,0.0599755) (0.0285,0.07875) (0.029,0.0955441) (0.0295,0.120574) (0.03,0.147402) (0.0305,0.174549) (0.031,0.213172) (0.032,0.285873) (0.034,0.458812) (0.036,0.604629) (0.038,0.699453) (0.04,0.73524)
};
\addplot[global_color, opacity=0.72, thick,mark=*,mark size=1.2pt] coordinates {
  (0.015,0) (0.016,0) (0.017,0) (0.0175,1.96078e-05) (0.018,3.92157e-05) (0.0185,3.92157e-05) (0.019,6.86275e-05) (0.0195,0.000107843) (0.02,0.000176471) (0.0205,0.00027451) (0.021,0.000401961) (0.0215,0.000637255) (0.022,0.000921569) (0.0225,0.00151961) (0.023,0.00213861) (0.0235,0.0030198) (0.024,0.00405941) (0.0245,0.00662376) (0.025,0.009) (0.0255,0.0124356) (0.026,0.0173465) (0.0265,0.0241287) (0.027,0.0325644) (0.0275,0.0450891) (0.028,0.0575644) (0.0285,0.07451) (0.029,0.09495) (0.0295,0.12244) (0.03,0.15479) (0.0305,0.1876) (0.031,0.22802) (0.032,0.31574) (0.034,0.50793) (0.036,0.65305) (0.038,0.72214) (0.04,0.74498)
};
\addplot[global_color, opacity=0.86, thick,mark=*,mark size=1.2pt] coordinates {
  (0.015,0) (0.016,0) (0.017,0) (0.0175,0) (0.018,0) (0.0185,5e-05) (0.019,9e-05) (0.0195,4e-05) (0.02,3e-05) (0.0205,0.00023) (0.021,0.00023) (0.0215,0.00032) (0.022,0.00045) (0.0225,0.00089) (0.023,0.00127) (0.0235,0.00214) (0.024,0.00339) (0.0245,0.00451) (0.025,0.00664) (0.0255,0.00976) (0.026,0.01384) (0.0265,0.01954) (0.027,0.02745) (0.0275,0.03958) (0.028,0.05389) (0.0285,0.07357) (0.029,0.09466) (0.0295,0.12461) (0.03,0.15824) (0.0305,0.19792) (0.031,0.2434) (0.032,0.34666) (0.034,0.55508) (0.036,0.68848) (0.038,0.7368) (0.04,0.74938)
};
\addplot[global_color, opacity=1.0, thick,mark=*,mark size=1.2pt] coordinates {
  (0.015,0) (0.016,0) (0.017,0) (0.0175,0) (0.018,0) (0.0185,0) (0.019,1e-05) (0.0195,0) (0.02,1e-05) (0.0205,5e-05) (0.021,8e-05) (0.0215,7e-05) (0.022,0.00021) (0.0225,0.00024) (0.023,0.00048) (0.0235,0.00093) (0.024,0.00151) (0.0245,0.00267) (0.025,0.00352) (0.0255,0.00612) (0.026,0.00916) (0.0265,0.0146) (0.027,0.02238) (0.0275,0.0307) (0.028,0.04774) (0.0285,0.06933) (0.029,0.09311) (0.0295,0.12792) (0.03,0.17259) (0.0305,0.22113) (0.031,0.28071) (0.032,0.41072) (0.034,0.63257) (0.036,0.72946) (0.038,0.7476) (0.04,0.74994)
};
\draw[global_color, dashed] (axis cs:0.0290213,0) -- (axis cs:0.0290213,1);
\addplot[JIT_x_color, opacity=0.3, thick,mark=*,mark size=1.2pt] coordinates {
  (0.015,0.0108696) (0.016,0.016744) (0.017,0.0252464) (0.0175,0.0311546) (0.018,0.037715) (0.0185,0.0454348) (0.019,0.0552367) (0.0195,0.0658841) (0.02,0.0754686) (0.0205,0.0906425) (0.021,0.105193) (0.0215,0.121469) (0.022,0.141092) (0.0225,0.161097) (0.023,0.182155) (0.0235,0.20599) (0.024,0.231164) (0.0245,0.257199) (0.025,0.285043) (0.0255,0.313961) (0.026,0.346821) (0.0265,0.374145) (0.027,0.405425) (0.0275,0.438517) (0.028,0.465802) (0.0285,0.493841) (0.029,0.521638) (0.0295,0.551546) (0.03,0.573459) (0.0305,0.599251) (0.031,0.618043) (0.032,0.657493) (0.034,0.705913) (0.036,0.732243) (0.038,0.743456) (0.04,0.747806)
};
\addplot[JIT_x_color, opacity=0.44, thick,mark=*,mark size=1.2pt] coordinates {
  (0.015,0.00688293) (0.016,0.0113107) (0.017,0.0182573) (0.0175,0.0226893) (0.018,0.0284951) (0.0185,0.0365631) (0.019,0.0451505) (0.0195,0.0529417) (0.02,0.063801) (0.0205,0.076835) (0.021,0.0928301) (0.0215,0.110646) (0.022,0.128806) (0.0225,0.151626) (0.023,0.176413) (0.0235,0.200592) (0.024,0.228175) (0.0245,0.259641) (0.025,0.291743) (0.0255,0.32501) (0.026,0.362415) (0.0265,0.395976) (0.027,0.427709) (0.0275,0.461527) (0.028,0.492927) (0.0285,0.530629) (0.029,0.558839) (0.0295,0.590673) (0.03,0.613859) (0.0305,0.635629) (0.031,0.659766) (0.032,0.691156) (0.034,0.730795) (0.036,0.742498) (0.038,0.749166) (0.04,0.749683)
};
\addplot[JIT_x_color, opacity=0.58, thick,mark=*,mark size=1.2pt] coordinates {
  (0.015,0.00433659) (0.016,0.00754634) (0.017,0.0136146) (0.0175,0.0165707) (0.018,0.0210098) (0.0185,0.0271805) (0.019,0.0342049) (0.0195,0.042278) (0.02,0.0536) (0.0205,0.0654293) (0.021,0.0791854) (0.0215,0.0976488) (0.022,0.116502) (0.0225,0.13878) (0.023,0.161376) (0.0235,0.191727) (0.024,0.223751) (0.0245,0.256912) (0.025,0.292922) (0.0255,0.330537) (0.026,0.367546) (0.0265,0.407162) (0.027,0.446505) (0.0275,0.486789) (0.028,0.524441) (0.0285,0.557461) (0.029,0.590745) (0.0295,0.619294) (0.03,0.645799) (0.0305,0.667632) (0.031,0.686363) (0.032,0.71524) (0.034,0.740688) (0.036,0.750109) (0.038,0.751129) (0.04,0.75197)
};
\addplot[JIT_x_color, opacity=0.72, thick,mark=*,mark size=1.2pt] coordinates {
  (0.015,0.00256863) (0.016,0.0047767) (0.017,0.00906863) (0.0175,0.0121569) (0.018,0.0155098) (0.0185,0.0212451) (0.019,0.0265196) (0.0195,0.033098) (0.02,0.0416569) (0.0205,0.0535882) (0.021,0.0679706) (0.0215,0.0841667) (0.022,0.105147) (0.0225,0.124657) (0.023,0.152772) (0.0235,0.182366) (0.024,0.216455) (0.0245,0.249525) (0.025,0.289624) (0.0255,0.330743) (0.026,0.374129) (0.0265,0.418277) (0.027,0.462396) (0.0275,0.503812) (0.028,0.543505) (0.0285,0.58468) (0.029,0.61506) (0.0295,0.64122) (0.03,0.66797) (0.0305,0.68889) (0.031,0.70772) (0.032,0.72783) (0.034,0.74561) (0.036,0.74712) (0.038,0.75041) (0.04,0.7487)
};
\addplot[JIT_x_color, opacity=0.86, thick,mark=*,mark size=1.2pt] coordinates {
  (0.015,0.00164) (0.016,0.00329) (0.017,0.00585) (0.0175,0.00824) (0.018,0.01062) (0.0185,0.01524) (0.019,0.01984) (0.0195,0.02652) (0.02,0.03491) (0.0205,0.045) (0.021,0.05706) (0.0215,0.07228) (0.022,0.08904) (0.0225,0.11292) (0.023,0.13997) (0.0235,0.16789) (0.024,0.20562) (0.0245,0.24482) (0.025,0.28462) (0.0255,0.329) (0.026,0.37832) (0.0265,0.42529) (0.027,0.47008) (0.0275,0.51588) (0.028,0.56386) (0.0285,0.59536) (0.029,0.63583) (0.0295,0.66279) (0.03,0.6887) (0.0305,0.70651) (0.031,0.71965) (0.032,0.73686) (0.034,0.74919) (0.036,0.75025) (0.038,0.7496) (0.04,0.74997)
};
\addplot[JIT_x_color, opacity=1.0, thick,mark=*,mark size=1.2pt] coordinates {
  (0.015,0.00055) (0.016,0.00113) (0.017,0.00238) (0.0175,0.00363) (0.018,0.00529) (0.0185,0.00778) (0.019,0.01122) (0.0195,0.01486) (0.02,0.02077) (0.0205,0.02766) (0.021,0.03782) (0.0215,0.05246) (0.022,0.06719) (0.0225,0.08736) (0.023,0.11241) (0.0235,0.1443) (0.024,0.17858) (0.0245,0.22382) (0.025,0.26921) (0.0255,0.31953) (0.026,0.37304) (0.0265,0.43472) (0.027,0.48661) (0.0275,0.54178) (0.028,0.58923) (0.0285,0.6313) (0.029,0.66271) (0.0295,0.69505) (0.03,0.71374) (0.0305,0.72767) (0.031,0.73512) (0.032,0.74448) (0.034,0.75198) (0.036,0.7508) (0.038,0.75237) (0.04,0.75127)
};
\draw[JIT_x_color, dashed] (axis cs:0.025165,0) -- (axis cs:0.025165,1);
\pgfplotsextra{
  \node[anchor=north west, inner sep=4pt]
    at (rel axis cs:0,1) {\pgfplotslegendfromname{legendXraw}};
}
\end{axis}
\begin{axis}[
  hide axis, xmin=0, xmax=1, ymin=0, ymax=1,
  legend style={legend to name=legendXraw,
    font=\footnotesize, draw=black, fill=white,
    cells={anchor=west}},
  legend columns=1,
]
\addlegendimage{global_color, thick}
\addlegendentry{Global}
\addlegendimage{JIT_x_color,  thick}
\addlegendentry{JIT}
\addlegendimage{black!14, thick}
\addlegendentry{$L=13$}
\addlegendimage{black!28, thick}
\addlegendentry{$L=15$}
\addlegendimage{black!42, thick}
\addlegendentry{$L=17$}
\addlegendimage{black!56, thick}
\addlegendentry{$L=19$}
\addlegendimage{black!70, thick}
\addlegendentry{$L=21$}
\addlegendimage{black!100, thick}
\addlegendentry{$L=25$}
\end{axis}
\end{tikzpicture}
    \hfill 
\end{minipage}
    \caption{Estimated logical error rates $p_{\mathrm{log}}$ as a function of the physical error rate $p_{\mathrm{phys}}$ for various system sizes $L$.
    The JIT-decoder yields the logical error rates shown in red and indicate a threshold value around $2.5\,\%$.
    We compare that to a global decoder, whose logical error rates are shown in blue.
    We find a threshold for the global decoder of around $2.9\,\%$, in alignment with the literature~\cite{Wang2003statmech}.
    The dashed vertical lines indicate the thresholds $p_{\mathrm{th}}$ as extracted via the FSS ansatz.
    Note that the logical error-rate approaches 0.75 for $p\gg p_{\mathrm{th}}$ since there are two independent non-trivial cycles in the decoding graph.}
    \label{fig:x_errors}
\end{figure}

\myparagraph{Scaling below threshold}
To further investigate the qualitative behaviour of our JIT-matching decoder we extrapolate the logical error-rate scaling in different parameters below threshold.

\paragraph*{Scaling in $p$, effective distances}
To estimate the \emph{effective distance} of the JIT-decoder we analyze the scaling of logical error rate $p_{\mathrm{log}}$ with $p\ll p_{\mathrm{th}}$ for fixed $L$.
We fit the data to an ansatz
\begin{align}\label{eq:ansatz-deff}
    p_{\mathrm{log}}(p) = C\cdot p^{d_{\mathrm{eff}} }.
\end{align}
from which we extract $d_{\mathrm{eff}}$, the lowest weight error chain that the decoder fails to correct successfully.
We consider $L=7$ for three different physical error rates around $0.1p_{\mathrm{th}}$
for the JIT decoder as well as for the global decoder for comparison.

\begin{figure}
    \centering
  \begin{minipage}{0.5\textwidth}
    \centering
    \input{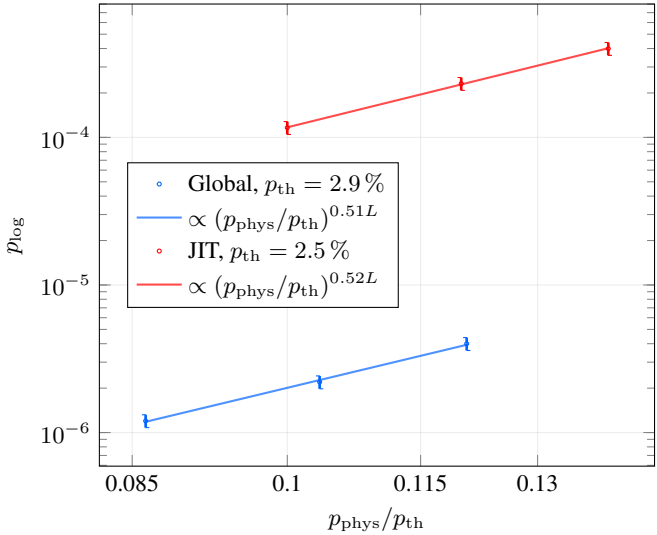}
    \hfill 
\end{minipage}
    \caption{
    We estimate the effective distance of the JIT decoder $\sDJIT$ and compare it to the global minimum-weight decoder $\sD$ in the regime where $p_{\mathrm{phys}}$ is an order of magnitude below $p_{\mathrm{th}}$ by fitting the ansatz in Equation~\eqref{eq:ansatz-deff}. The data is obtained for $L=7$.
    The fitted parameters suggest that both the JIT and the global decoder successfully correct all errors up to weight $L/2$.
    While this is expected behavior for the global decoder, the JIT-decoder could in principle fail on errors of lower weight.
    Our simulations suggest that this is not the case and one should hence expect the JIT-decoder to lead to the same asymptotic error suppression for low $p_{\mathrm{phys}}$ as the global decoder.
    }
    \label{fig:effective_len_L5}
\end{figure}

\begin{figure}
    \centering
  \begin{minipage}{0.5\textwidth}
    \centering
    \input{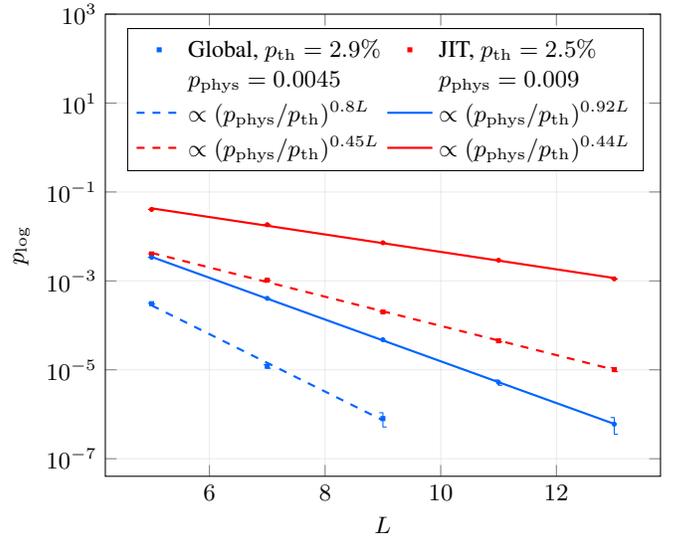}
    \hfill 
\end{minipage}
    \caption{
    Logical error rate $p_{\mathrm{log}}$ as a function of system size $L$ at $p_\mathrm{phys}=0.45\%, 0.9\%$. This is around 18\%, 35\,\% of the JIT decoder's and around 18\%,35\,\% of the JIT decoder's threshold, respectively.
    The scaling parameter in the exponent is extracted for each decoder by fitting to Equation~\eqref{eq:eff_len}.
    The data fits the ansatz well, verifying an exponential suppression of the logical error rate in $L$.
    The JIT decoder shows a factor of $\sim 2$  difference in the scaling in $L$ compared to the global decoder at $p_\mathrm{phys}\approx 0.35\,p_\mathrm{th}$, with a slightly better scaling of $\sim 1.8$ when $p_\mathrm{phys}\approx0.18\,p_\mathrm{th}$.
    This difference between JIT and global decoder would translate to factors of $\sim 8, \sim 5.83$ in the spacetime overhead in $L$ for a target logical error rate, at the simulated physical error rates.
    }
    \label{fig:effective_len_p009}
\end{figure}

The data and fits are shown in Figure~\ref{fig:effective_len_L5}.
Both the JIT and the global decoder show a similar scaling behavior an we extract an effective distance of $\approx L/2$ for both decoding strategies.
This indicates that the JIT-decoder does not fail on lower weight error configurations than the global decoder does and can lead to the same asymptotic error suppression for low $p_{\mathrm{phys}}$.

\paragraph*{Scaling in $L$, overheads at fixed $p_\mathrm{phys}$}
To estimate the spatial overhead needed to reach a target logical error rate for a fixed $p_{\mathrm{phys}}\leq p_{\mathrm{th}}$ we simulate the decoder at fixed $p_\mathrm{phys}=0.45\%,0.9\%$ for various system sizes between 5 and 13.
This corresponds to roughly 18\%, 35\,\% of the JIT threshold, respectively.
We fit the data to the ansatz
\begin{align}\label{eq:eff_len}
p_{\mathrm{log}}(L;p_\mathrm{phys}) = C\cdot p_\mathrm{phys}^{\alpha L}
\end{align}
both for the JIT and the global decoder.
The data and fit is shown in Figure~\ref{fig:effective_len_p009}.
We obtain 
\begin{subequations}
\begin{align}
\begin{split}
\text{at }p_\mathrm{phys} = 0.9\,\%\colon& \\
    \alpha^{\mathrm{JIT}} =& 0.442\pm 0.012 \qq{and}\\
    \alpha^{\mathrm{global}} =& 0.924\pm0.003,
\end{split}\\[6pt]
\begin{split}
\text{at }p_\mathrm{phys} = 0.45\,\%\colon& \\
    \alpha^{\mathrm{JIT}} =&0.446\pm 0.011\qq{and}\\
    \alpha^{\mathrm{global}} =& 0.800\pm0.003.
\end{split}
\end{align}
\end{subequations}
The data for both decoders agrees well with the expected exponential suppression.
We find that the global decoder shows a larger suppression of the logical error rate when increasing $L$ for the simulated error rates.
Specifically, we observe roughly a factor of $2$ decrease in $\alpha$ when using a JIT decoder instead of a global decoder at $p_\mathrm{phys}=0.9\%$.
This points to an increase in spacetime overhead by a factor of $\sim 2^3=8$ to reach a target logical error rate when doing JIT decoding instead of global decoding.
The difference in $\alpha$ becomes better in $p_\mathrm{phys}=0.45\%$, where the factor reduces to 1.8 and the spacetime overhead factor becomes $\sim 1.8^3=5.83$.
It would be interesting to explore if this trend continues for lower physical error rates and if the JIT decoder approaches the same value for $\alpha$ as the global decoder, $p_{\mathrm{phys}}\ll p_{\mathrm{th}}$ in Fig.~\ref{fig:effective_len_L5}.
This could come from the fact that the residual flux configuration after decoding and correction is systematically larger for the JIT decoder compared to the global one since a space-like error chain gets extended into the time direction by the JIT-decoding scheme.
We leave further exploration of the logical error rate scaling of JIT decoders in $L$ to future work.

\section{Mitigating twisted errors}\label{sec:charge-decoder}

In TQD circuits only the $X$-like errors must be JIT decoded and the $Z$-like errors can be decoded globally, after the execution of all the necessary syndrome extraction rounds.
Hence, the logical error rate of the entire protocol implemented by the TQD circuit depends on the success of this second, global, decoding step.

As discussed in Section~\ref{sec:TQDcircuits} any residual $X$-like error chain in the TQD circuit causes additional ``twisted'' $Z$ errors that might significantly decrease the performance of a protocol consisting of TQD circuits.
To mitigate this effect, we devise a simple method to incorporate the knowledge about previously committed $X$ corrections into the $Z$ decoding graph.

In this section, we first introduce a naive strategy that treats twisted errors as errors that are partially heralded and then estimate the threshold of the entire TQD circuit under phenomenological noise and compare it to a related protocol that uses three-dimensional connectivity where the $X$ corrections can be decoded globally as well.

\myparagraph{Partial heralding}
Using the analysis in Sec.~\ref{sec:TQDcircuits}, we can partially predict the twisted errors to mitigate their impact on the logical error rate.
The twisted errors depend on the flux configuration $b=b_e+b_m+b_c$.
However, the $Z$ decoder does not know the precise location of the error $b_e$, only the location of the corrections $b_c$ and measurement outcomes $b_m$.
To deal with this, we introduce a heuristic approach termed \emph{completing-the-loop (CL)}, which uses a global decoder to guess an error configuration $b_e$.

Given the estimated $b$, the twisted error $\widetilde M^b(\lambda,v)$ for each color-vertex pair $(\lambda,v)$ occurs with probability $\frac12$.
This corresponds to adding a weight-0 hyperedge to the decoding graph that is adjacent to the vertices in the syndrome of the twisted error $M^b(\lambda,v)=d^T\widetilde M^b(\lambda,v)$.
As we want to preserve the graph structure of the decoding problem to be able to use a matching decoder we resort to a simple heuristic:
For each edge that occurs in a twisted error, we set the weight of this individual edge to $0$.
In other words, we ignore the correlations between different edges that the twisted errors induce.
We call this part of the heuristic \emph{graph-reduction} (GR).
We summarize this simple heuristic in Algorithm~\ref{alg:completingtheloop}.

\begin{algorithm}
\caption{Completing-the-loop + graph-reduction (CL-GR) heuristic}
\label{alg:completingtheloop}
\DontPrintSemicolon
\KwIn{Observed $X$ syndrome $s_x$, JIT correction $b_{\mathrm{JIT}}$, $Z$ decoding graph}
\KwOut{Reweighted $Z$ decoding graph}
\hrulefill\;
\nl Feed $s_x$ to a global decoder, obtain a correction $b_{\mathrm{glob}}$\;
\nl Define estimated flux configuration $b=b_{\mathrm{JIT}} + b_{\mathrm{glob}} \mod 2$\;
\For{each color-vertex pair $(\lambda,v)$}{
    \nl Consider the corresponding twisted error $\widetilde M^b(\lambda,v)$\;
    \nl Set the weight for each individual color-edge pair of the twisted error to zero\;
    }
\Return{Reweighted $Z$ decoding graph}
\end{algorithm}

To estimate the effectiveness of the heuristic we consider different error rates for $X$-like errors ($p_X$) and $Z$-like errors ($p_Z$).
If the twisted errors could be perfectly mitigated, the failure rate of the $Z$-decoder would be completely independent of $p_X$.
We numerically test the dependence of the $Z$ decoding threshold to $p_X$ with and without the CL-GR heuristic and show the results in Figure~\ref{fig:critical_pz}.
Our results demonstrate that mitigating the twisted errors works in principle as our heuristic strictly improves the threshold for all simulated error rates.
For both the unmodified decoding and the CL-GR heuristic, the $p_Z$ threshold is nearly unchanged for small values of $p_X$ but starts dropping as $p_X$ increases.
This drop is significantly mitigated by the CL-GR strategy and is shifted to higher $p_X$.

\begin{remark}
    Another strategy to turn the hypergraph into a graph in the second step of our heuristic could be pairing up the endpoints of the syndrome vertices of each twisted error by adding new weight-0 edges between the pairs.
    This would also yield a matchable graph and may perform slightly better.
    More generally, many different approaches to the hypergraph decoding problem encountered due to the twisted errors are possible and we leave a careful study to future work.
\end{remark}

\begin{figure}
\pgfplotsset{compat=1.18}

\begin{tikzpicture}
\begin{axis}[
  xlabel={$p_X$},
  ylabel={$p_{\mathrm{th}}^{(Z)}(p_X)$},
  xmin=0, xmax=0.022,
  grid=major,
  legend style={draw=black, fill=white, font=\footnotesize,
                cells={anchor=west}, anchor=north west,
                at={(rel axis cs:0.05,0.3)}, inner sep=4pt},
  legend columns=1,
  width=\linewidth,
]
\addplot[JIT_xz_color, thick, mark=*, mark size=1.5pt] coordinates {(0,0.0290213) (0.005,0.0283287) (0.01,0.0265099) (0.0105,0.0262931) (0.011,0.0259573) (0.012,0.0252735) (0.013,0.0246133) (0.014,0.023906) (0.015,0.0229947) (0.016,0.0218976) (0.018,0.0188532)};
\addlegendentry{naive matching}
\addplot[JIT_xz_heu_color, thick, mark=diamond*, mark size=2pt] coordinates {(0,0.0290213) (0.005,0.0285065) (0.01,0.0278942) (0.0105,0.0278627) (0.011,0.0278458) (0.012,0.0274684) (0.013,0.0272734) (0.014,0.027172) (0.015,0.0267268) (0.016,0.0263813) (0.018,0.0257785) (0.019,0.0244255) (0.02,0.0236732) (0.021,0.0224592) (0.022,0.0204952)};
\addlegendentry{CL+GR heuristic}
\addplot[gray, dashed, thick] coordinates {(0,0.0290213) (0.022,0.0290213)};
\addlegendentry{$p_{\mathrm{th}}^{\mathrm{global}}=0.0290$}
\end{axis}
\end{tikzpicture}
  \caption{We benchmark the effectiveness of the heuristic based on partial heralding of the twisted errors by estimating the threshold of the $Z$ decoder for various error rates for $X$-like errors $p_X$.
  The $X$-like errors are JIT-decoded and the corrections together with the actual errors cause twisted $Z$ errors in addition to the physical i.i.d.\ $Z$-like errors.
  For each data point, $p_X$ is fixed and implicitly determines the statistics of twisted errors from corrections of a JIT decoder.
  Then, the $Z$-thresholds $p_{\mathrm{th}}^{(Z)}(p_X)$ are determined using samples form various $p_Z$ and $L$.
  The purple curve shows the $Z$-threshold for various $p_X$ without any heralding.
  We see a clear dependence of the $Z$-threshold on $p_X$.
  The red curve shows the $Z$-threshold for the same values of $p_X$ using the completing-the-loop + graph-reduction (CL-GR) heuristic to take advantage of the partial heralding of the twisted errors.
  While it still depends on $p_X$ we see a clear improvement in the threshold for all simulated error rates.}
  \label{fig:critical_pz}
\end{figure}
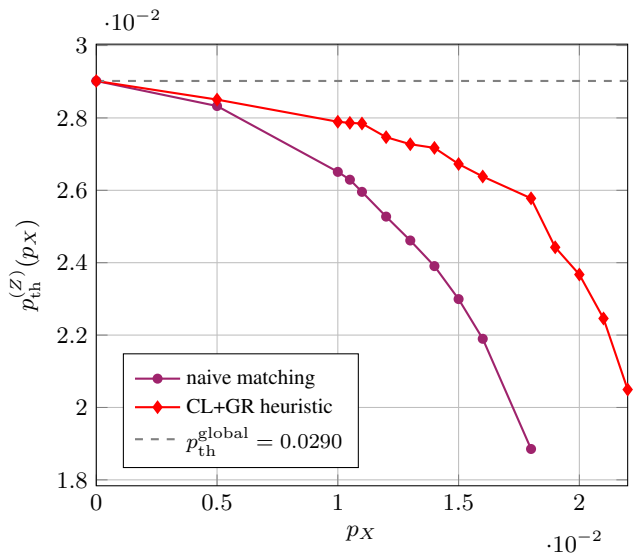

\myparagraph{Threshold results for TQD circuit}
Having benchmarked the JIT decoder on its own and devised a strategy to mitigate the twisted errors we are in position to estimate the threshold of the entire TQD circuit.
We consider the phenomenological noise model discussed in Section~\ref{sec:TQDcircuits} with $p_X=p_Z=p_{\mathrm{phys}}$ and use a JIT-matching decoder from Section~\ref{sec:JIT-decoder} to obtain the $X$ corrections.
The JIT corrections are then passed to a global $Z$-decoder that can use that information to mitigate the twisted errors.
We simulate the entire protocol for system sizes up to $L=13$ and declare logical failure if one of the two decoding steps fails, i.e. the flux configuration contains non-contractible loops or the charge configuration is in a non-trivial homology class.
The simulation is summarized in Algorithm~\ref{alg:benchmarking}.

2+1D TQD circuits can implement the same logical gate on a 2D code as a related 3+0D dimensional jump protocol~\cite{bombin2016dimensionaljump}. Such a protocol relies on three-dimensional connectivity and only a constant number of syndrome extraction rounds.
The main difference with regards to decoding is that in this 3+0D protocol the $X$ corrections can be chosen by a global decoder.
While the statistics of twisted errors is the same for the 3+0D protocol, the global decoding leads to a smaller flux configuration $b$, and thus to a smaller reduction of the $Z$-type threshold.
To complement our results we also perform threshold estimates for a 3+0D protocol by sampling twisted errors from a correction obtained from that global decoder instead of a JIT decoder.
This shows by how much the threshold is reduced due to the spacetime tradeoff that requires a JIT decoder.

In Figure~\ref{fig:threshold_xz} we show our simulation results and estimated threshold for the TQD circuit decoded with the JIT decoder and the CL-GR heuristic.
Using a finite-size-scaling ansatz as in Section~\ref{sec:JIT-decoder} we estimate a threshold of
\begin{align}
    p_{\mathrm{th}} = 2.17 \pm 0.17 \,\%,
\end{align}
which is only slightly lower than the memory threshold of the usual toric code under the same error model.
For comparison we also analyze the threshold for a naive $Z$ decoder that does not use any heralding strategy.
We also estimate the threshold for when the $X$ corrections are chosen with a global decoder as would be the case for a 3+0D protocol.
We do not use heralding to mitigate the $Z$ errors in the 3+0D case.
We believe such mitigation would also lead to a slight improvement of the 3+0D $Z$ threshold, though we note that the ``completing-the-loop'' step no longer applies in this case.
For naive, unheralded matching and 3+0D decoding we find thresholds of 
\begin{subequations}
\begin{align}
    p_{\mathrm{th}}^{\mathrm{naive}} =& 1.81 \pm 0.11 \,\% \qq{and} \\ 
    p_{\mathrm{th}}^{\mathrm{3+0D}} =& 2.30 \pm 0.30 \,\%.
\end{align}    
\end{subequations}
The threshold is highest for when the $X$ corrections are chosen globally and lowest when no heralding is used, according to our expectation.
This indicates that the twisted errors are the main reason why the threshold in these protocols is reduced compared to a toric-code memory experiment.
We also expect that the threshold could potentially be improved further with a more sophisticated heralding strategy.

\begin{figure}
    \centering
  \begin{minipage}{0.5\textwidth}
    \centering
\pgfplotsset{compat=1.18}

\begin{tikzpicture}
\begin{axis}[
    height=7cm,
    ymin=0, ymax=1.0,
    xmin=0.01, xmax=0.029,
    xlabel={$p_{\mathrm{phys}}$},
    ylabel={$p_{\mathrm{log}}$},
    grid=major,
    scaled x ticks=true,
    x tick scale label style={
        at={(1,0)}, 
        yshift=-10pt, 
        anchor=north east
    },
    legend style={draw=none, fill=none},
]

\addplot[JIT_xz_color, opacity=0.65, thick,/tikz/mark=*,/tikz/mark size=1.2pt] coordinates {
  (0.01,0.0075) (0.011,0.011) (0.012,0.0216) (0.013,0.0407) (0.014,0.0643) (0.015,0.1125) (0.016,0.1736) (0.017,0.2643) (0.018,0.3856) (0.019,0.5141) (0.02,0.6491) (0.021,0.7611) (0.022,0.8524) (0.023,0.9154) (0.024,0.9537) (0.025,0.976) (0.026,0.9885) (0.027,0.9931) (0.028,0.9961)
};
\addplot[JIT_xz_color, opacity=1.0, thick,/tikz/mark=*,/tikz/mark size=1.2pt] coordinates {
  (0.01,0.0025) (0.011,0.0058) (0.012,0.0111) (0.013,0.0225) (0.014,0.0441) (0.015,0.0815) (0.016,0.1454) (0.017,0.2474) (0.018,0.376) (0.019,0.5305) (0.02,0.6848) (0.021,0.8143) (0.022,0.9022) (0.023,0.9564) (0.024,0.9727) (0.025,0.99) (0.026,0.9945) (0.027,0.9964) (0.028,0.9973)
};
\draw[JIT_xz_color, dashed] (axis cs:0.0181406,0) -- (axis cs:0.0181406,1);
\addplot[JIT_xz_heu_color, opacity=0.3, thick,/tikz/mark=*,/tikz/mark size=1.2pt] coordinates {
  (0.01,0.0096) (0.011,0.0166) (0.012,0.0245) (0.013,0.039) (0.014,0.0562) (0.015,0.0846) (0.016,0.1277) (0.017,0.1728) (0.018,0.2318) (0.019,0.3138) (0.02,0.4112) (0.021,0.5066) (0.022,0.6056) (0.023,0.701) (0.024,0.7824) (0.025,0.8596) (0.026,0.9083) (0.027,0.9458) (0.028,0.9696)
};
\addplot[JIT_xz_heu_color, opacity=0.65, thick,/tikz/mark=*,/tikz/mark size=1.2pt] coordinates {
  (0.01,0.006) (0.011,0.0077) (0.012,0.0137) (0.013,0.0235) (0.014,0.0368) (0.015,0.0603) (0.016,0.089) (0.017,0.1278) (0.018,0.1927) (0.019,0.2796) (0.02,0.3719) (0.021,0.4937) (0.022,0.6093) (0.023,0.7291) (0.024,0.8195) (0.025,0.8997) (0.026,0.9457) (0.027,0.9761) (0.028,0.987)
};
\addplot[JIT_xz_heu_color, opacity=1.0, thick,/tikz/mark=*,/tikz/mark size=1.2pt] coordinates {
  (0.01,0.0018) (0.011,0.004) (0.012,0.0068) (0.013,0.0123) (0.014,0.0213) (0.015,0.0365) (0.016,0.0591) (0.017,0.0993) (0.018,0.1491) (0.019,0.2283) (0.02,0.3374) (0.021,0.4693) (0.022,0.6131) (0.023,0.7455) (0.024,0.8436) (0.025,0.9182) (0.026,0.9664) (0.027,0.9918) (0.028,0.9927)
};
\draw[JIT_xz_heu_color, dashed] (axis cs:0.021663,0) -- (axis cs:0.021663,1);
\addplot[three_zero_D_color, opacity=0.3, thick,/tikz/mark=*,/tikz/mark size=1.2pt] coordinates {
  (0.013,0.0026) (0.014,0.0049) (0.015,0.0091) (0.016,0.015) (0.017,0.0294) (0.018,0.046) (0.019,0.0689) (0.02,0.1024) (0.021,0.1415) (0.022,0.2023) (0.023,0.2766) (0.024,0.371) (0.025,0.4562) (0.026,0.5514) (0.027,0.6441) (0.028,0.7301) (0.029,0.8043)
};
\addplot[three_zero_D_color, opacity=0.65, thick,/tikz/mark=*,/tikz/mark size=1.2pt] coordinates {
  (0.013,0.0006) (0.014,0.0021) (0.015,0.0042) (0.016,0.0076) (0.017,0.0139) (0.018,0.0262) (0.019,0.046) (0.02,0.0759) (0.021,0.1238) (0.022,0.1857) (0.023,0.2768) (0.024,0.3749) (0.025,0.4922) (0.026,0.6123) (0.027,0.7302) (0.028,0.819) (0.029,0.8941)
};
\addplot[three_zero_D_color, opacity=1.0, thick,/tikz/mark=*,/tikz/mark size=1.2pt] coordinates {
  (0.013,0.0005) (0.014,0.0006) (0.015,0.0018) (0.016,0.0036) (0.017,0.0072) (0.018,0.0157) (0.019,0.032) (0.02,0.0617) (0.021,0.1066) (0.022,0.1715) (0.023,0.2712) (0.024,0.4001) (0.025,0.5366) (0.026,0.6833) (0.027,0.7962) (0.028,0.8772) (0.029,0.9394)
};
\draw[three_zero_D_color, dashed] (axis cs:0.0230473,0) -- (axis cs:0.0230473,1);
\pgfplotsextra{
  \node[anchor=north west, inner sep=4pt]
    at (rel axis cs:0,1) {\pgfplotslegendfromname{legendXZraw}};
}
\end{axis}
\begin{axis}[
  hide axis, xmin=0, xmax=2, ymin=0, ymax=1,
  legend style={legend to name=legendXZraw,
    font=\footnotesize, draw=black, fill=white,
    cells={anchor=west}},
  legend columns=1,
]
\addlegendimage{JIT_xz_color,     thick, mark=*}
\addlegendentry{2+1D no heralding}
\addlegendimage{JIT_xz_heu_color, thick, mark=*}
\addlegendentry{2+1D w/ CL+GR}
\addlegendimage{three_zero_D_color, thick, mark=*}
\addlegendentry{3+0D}
\addlegendimage{black!35, thick}
\addlegendentry{$L=9$}
\addlegendimage{black!70, thick}
\addlegendentry{$L=11$}
\addlegendimage{black!100, thick}
\addlegendentry{$L=13$}
\end{axis}
\end{tikzpicture}
    \hfill 
\end{minipage}
    \caption{
    Simulation results to estimate the threshold of the entire TQD circuit under a phenomenological noise model.
    We find a threshold of around 2.2\,\% when using a JIT-decoder for $X$ errors and the CL-GR strategy to mitigate the twisted errors (red).
    For comparison, we also show the data when no heralding strategy is used (purple) and when the $X$ corrections are determined with a global decoder (blue).
    We extract the threshold values via a finite-size-scaling analysis as in Section~\ref{sec:JIT-decoder} and indicate them with dashed vertical lines.
    Our results suggest that the twisted errors are the main reason for the reduction in threshold compared to a toric-code memory experiment.
    }
    \label{fig:threshold_xz}
\end{figure}
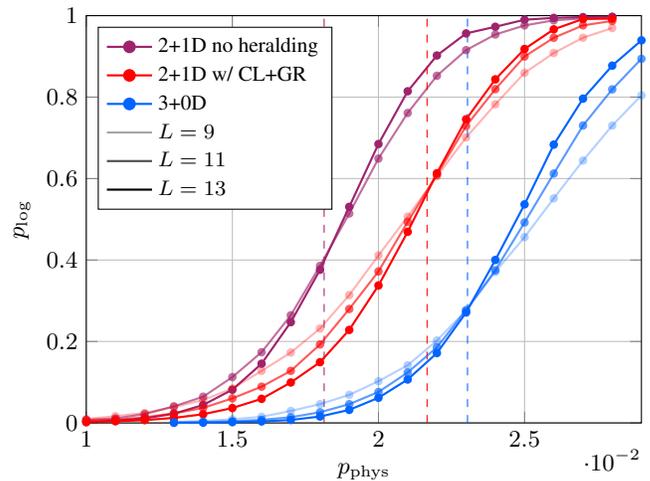

\section{Discussion and outlook}

\myparagraph{Discussion}
In this work, we have devised a just-in-time decoding strategy for a \emph{TQD code}, a 2D non-Pauli stabilizer code that can be used to complete a universal gate set on topological toric or color codes.
For this decoding strategy, we have demonstrated that it achieves high thresholds of above $2\,\%$ under a phenomenological noise model.
Indeed, these are comparable to the memory thresholds of 2D toric codes under the same noise model which might be surprising given previous estimates for a related protocol that uses a different JIT-decoder~\cite{Scruby2022JITnumerics, scruby2025nodistillation}.
This observation motivates the use of such codes for universal fault tolerant quantum computing.
We also extrapolate the logical error rate scaling below threshold and find that a JIT decoder achieves the same effective distance as a global decoder but has a slightly reduced error-suppression rate in the code distance for a fixed physical error rate.

To provide further perspective to this kind of decoding, the nature of the JIT decoder makes it particularly simple to interface with existing schemes that are decoded using a sliding-window decoder~\cite{psidecoding, riverlanedecoding}.
Both types of decoders can use the same global decoder as subroutines.
At every boundary to a commit region one can switch between a sliding-window decoding scheme that relies on propagating the correction through the circuit, and a JIT decoder that commits a correction with less information.

In order to analyze our protocols, we have developed a new framework for efficiently benchmarking error-correcting circuits composed of CSS-like Pauli measurements, controlled-NOT gates, and diagonal gates from the third level of the Clifford hierarchy.
The framework allows for a rigorous characterization of the measurement-outcome statistics for each error pattern, enabling efficient sampling from the resulting distribution and reliable assessment of decoding success.
Furthermore, by formalizing the additional random errors that arise due to residual errors in the circuit, we have devised a simple heuristic to mitigate these errors and demonstrate that it significantly improves the threshold of the TQD circuit.
Taken together, our results indicate that universal computation in 2D could have comparable performance both in terms of threshold and spatial overheads to Clifford logic in 2D.

\myparagraph{Outlook}
We show that protocols that require a JIT decoder can yield much lower logical error rates than previously expected for JIT-decoded schemes.
This directly translates into a significantly lower spacetime overhead when using 2D non-Pauli codes for non-Clifford logic, e.g., for magic-state preparation, and could achieve lower overheads than cultivation~\cite{gidney2024cultivation} combined with distillation~\cite{Litinski2019notascostly}.
The intrinsic scalability to arbitrary distances and topological nature of 2+1D non-Clifford protocols allow to smoothly interpolate between full error correction in a high distance code and (mild) post-selection in a lower distance code.
One end of this spectrum, where no JIT decoder is used, was explored recently in Ref.~\cite{hetenyi2026cultivationgauging} and found that the spacetime volume can indeed be reduced compared to more standard cultivation by measuring the stabilizers of a 2D non-Pauli code, even without any active decoding.
We believe that the flexibility to also incorporate active decoding into such protocols, together with the high thresholds reported in this work, will lead to a rich set of new schemes that can outperform existing protocols to achieve fault-tolerant universal quantum computation, especially for low target logical error rates.

A more precise resource estimate for a specific logic gate, e.g., a magic-state preparation block, will require a more detailed study and comparison among various families of protocols, given the new constructions of space-efficient 2+1D non-Clifford protocols and novelty of the post-selection techniques used in cultivation~\cite{Psi2024postselection, PsiMacromux}.
It would be interesting to benchmark other circuit realizations of the same topological protocol that was studied in this work.
At this point, there is a variety of microscopic TQD-like codes and circuits in the literature~\cite{davydova2025universal, Bauer2025allAbelian, kobayashi2025cliffordcodes, manjunath2026GSC, huang2026hybridsurgery}.
In particular, the measurement of stabilizers with $CZ$ terms requires $CCZ$ gates in the circuit, which may result in additional overhead when compiled into 2-qubit gates.
Therefore, 3D-color-code inspired protocols such as the brickwork code in Ref.~\cite{davydova2025universal} or the twisted color circuits in Ref.~\cite{bauer2025planar} may have better performance as they only require single qubit $T$ gates.

To get a more realistic estimate of the logical error in non-Clifford QEC circuits, it would be important to benchmark the syndrome-extraction circuit of a TQD code under a circuit-level noise model.
Our work makes this possible as we have formulated JIT decoding and the analysis of twisted errors in a way that works for any decoding graph.
Additionally, it is possible to include soft information into the decoding algorithm.

One potential challenge for JIT-decoded schemes is that they require both the decoding algorithm and the classical control to operate within a time that is similar to one round of syndrome extraction.
A slower reaction time would either mean introducing additional idling time, or computing the JIT correction based on less syndrome information.
For the latter, it would be interesting to systematically investigate how the JIT decoding performance is affected by having syndrome information available with a small delay or for a few steps in advance.
Our expectation is that the asymptotic performance and with that the threshold is largely unchanged when scaling both the delay window and the code distance. At the same time we expect the spacetime overhead to be very sensitive to the length of delay.

To reduce delay times, it would be advantageous to use faster decoding algorithms than MWPM, such as union-find decoders~\cite{Delfosse2021unionfind}, as subroutine of our JIT decoder, and analyze their performance.
It would also be highly desirable if one would not have to perform a completely new global decoding in each JIT time step, but reuse some of the past time step to speed up decoding.
Another promising avenue would be to adapt recently improved local decoders for topological Pauli codes~\cite{lake2025localQEC, paletta2026localdecodertoriccode}.

It would also be helpful to develop a more sophisticated mitigation strategy for the twisted errors to improve the global $Z$ decoding after JIT $X$ decoding.
Our CL-GR strategy of approximating the twisted hypergraph matching problem by a reweighted matchable graph can probably be improved further.
Further improvements may be achieved by using hypergraph suitable decoders such as belief-matching decoders~\cite{Higgott2023BPmatching}.

While we have used information from the JIT $X$ decoding to improve global $Z$ decoding, it is also possible to go the other way round and use the $Z$ syndrome to improve $X$ decoding:
Assuming there are no physical $Z$ errors, we know that the $Z$ syndrome is only caused by twisted errors from the residual flux configuration $b$.
This allows us to more accurately predict 
 the $X$-error configuration $b_e$.
When there are physical $Z$ errors, the prediction becomes less reliable, but may still improve performance.
A similar type of heralding was studied in Ref.~\cite{Jing2026heraldingD4} for a 2D decoding problem that arises when all measurements are noiseless, and the authors indeed found 
an improvement.
Note that this heralding is not possible in an associated 3+0D dimension-jump protocols as the $Z$ syndrome only becomes available after the $X$ corrections have been performed.
Thus, this approach can help closing the threshold-gap between the 3+0D and 2+1D protocols.

Recent work has generalized TQD codes to a more general \emph{quantum low density parity check}
(qLDPC) setting~\cite{zhu2026twisted-ldpc, christos2026twisted-ldpc}.
Benchmarking a JIT decoder for these codes would be an interesting direction to generalize our work beyond 2D.
Our framework to efficiently simulate associated syndrome-extraction circuits based on a path-integral formulation straightforwardly generalizes to these settings.

As a final note, while our work is mostly of conceptual and qualitative nature, it is expected to provide actionable advice to practitioners aiming to reduce overheads for non-Clifford logic in various physical architectures.
We provide clear steps and methods to obtain more realistic estimates in specific circuit-implementations and error models.
We hope that the present work inspires further research on assessing realistic resource estimates for 2D-local fault-tolerant quantum computation and beyond~\cite{GidneyRSA, IcebergRSA, CaltechRSA}.

\section*{Code availability}
The code that was used to produce the results presented in this paper is available at \url{https://github.com/NoaFeldman/JustInTimeDecoding.git}.

\section*{Author contributions}
NF performed all numerical simulations and threshold analyses.
JM and AB worked out the theoretical framework and designed the numerical experiments.
JM wrote most of this manuscript, with significant contributions from AB.
All authors contributed to the shaping of this project and helped in finalizing the draft.

\begin{acknowledgments}
We thank T. Scruby and B. Brown for helpful discussions and 
B.\ Brown and T.\ Ellison for valuable feedback on an earlier version of this manuscript.
Part of the research for this work were done while JM and AB were visiting the Kavli Institute for Theoretical Physical within the program ``Noise-robust Phases of Quantum Matter''.
JM is supported by the DFG (CRC 183).
AB is supported by the U.S. Army Research Laboratory and the U.S. Army Research Office under contract/grant number W911NF2310255, and by the U.S. Department of Energy, Office of Science, National Quantum Information Science Research Centers, and the Co-design Center for Quantum Advantage (C2QA) under contract number DE-SC0012704.
This research was supported in part by grant NSF PHY-2309135 to the Kavli Institute for Theoretical Physics (KITP).
NF is supported by the Israeli CHE fellowship for quantum science and technology.
JE is supported by the BMFTR (QSolid, PasQuops, Hybrid++, QuSol), the Quantum Flagship (Millenion, PasQuans2), Berlin Quantum, the Munich Quantum Valley, the DFG (SPP 2514, CRC 183), and the European Research Council (DebuQC). 

\end{acknowledgments}

\begin{appendix}

\section{Derivation of the TQD constraints and equivalences}
\label{sec:tqd_derivation}
In this appendix, we derive the constraints and equivalences in Eqs.~\eqref{eq:tqdb_conseq} and \eqref{eq:tqdc_conseq} by relating the TQD circuit to a TQD path integral.
For general literature on relating error-correcting circuits to path integrals, we refer 
the reader to Refs.~\cite{path_integral_qec, xy_floquet, Bauer2025allAbelian, davydova2025universal, bauer2025planar}.

\myparagraph{The cup product}
First, we need to introduce one additional operation on chains in lattice cohomology, the \emph{cup product}~\cite{Steenrod1947}, which allows us to concisely express the TQD path integral.
The cup product of a collection of chains $\{a_k\}_{0\leq k<n}$ of degrees $i_k$ is a chain of degree $i_0+\ldots+i_{n-1}$ that we denote by $a_0\cup \ldots\cup a_{n-1}$.
The value of $a_0\cup\ldots\cup a_{n-1}$ on an $i_0+\ldots+i_{n-1}$-cell $x$ depends only on the values of $a_0$, $\ldots$, $a_{n-1}$ on the subcells of $x$.
The cup product can be defined on any cellulation~\cite{Bauer2025allAbelian}.
Here, we only need its definition on a square or cubic lattice~\cite{Chen2021}:
For each edge, square, cube or hypercube $y$ let $y^{(0)}$ denote the vertex of $y$ with the coordinate $(0,0,\ldots)$, and let $y^{(1)}$ denote the vertex with coordinate $(1,1,\ldots)$.
If $y$ is a vertex, we set $y^{(0)}=y^{(1)}=y$.
Consider collections $z=(z_0, \ldots, z_{n-1})$, where $z_k$ is an $i_k$-subcell of an $i_0+\ldots+i_{n-1}$-dimensional hypercube $x$.
We call $z$ \emph{sequential} if $z_0^{(0)}=(0,0,\ldots)$, $z_k^{(1)}=z_{k+1}^{(0)}$ for all $0\leq k<n-1$, and $z_{n-1}^{(1)}=(1,1,\ldots)$, and we denote the set of all sequential collections for the hypercube $x$ by $\mathbf Z(x)$.
Then the value of the cup product on an  $i_0+\ldots+i_{n-1}$-cell $x$ is given by
\begin{equation}
(a_0\cup \ldots\cup a_{n-1})(x)
= \sum_{z\in \mathbf Z(x)} a_0(z_0) \cdots a_{n-1}(z_{n-1})\;.
\end{equation}
In other words, we take the products of the values of the $a_k$ on sequential collections of subcells, and sum over all of these collections.
In Figure~\ref{fig:cup_product} we illustrate explicit expressions for the cup product on the cubic lattice for a few relevant examples.

The cup product has a few important properties.
First of all, the cup product is associative, $(a\cup b)\cup c = a\cup (b\cup c)$ so there is no need to write brackets explicitly.
This follows straight-forwardly from the definition and the fact that sequential collections of sequential collections are again sequential collections.
Second, the cup product obeys the Leibnitz rule,
\begin{equation}
\label{eq:leibniz_rule}
d(a\cup b)=da\cup b+a\cup db\;,
\end{equation}
and it follows that the cup product between cocycles is again a cocycle.

\begin{figure}
\def\xcol{\textcolor{myred}{x}}
\def\ycol{\textcolor{mygreen}{y}}
\def\zcol{\textcolor{myblue}{z}}
\begin{tikzpicture}
\node (labs) at (2,3){
\begin{tikzpicture}
\node[inner sep=0.05cm] (100) at (2,0){$\scriptstyle{100}$};
\node[inner sep=0.05cm] (010) at (0,2){$\scriptstyle{010}$};
\node[inner sep=0.05cm] (001) at (1.4,0.8){$\scriptstyle{001}$};
\node[inner sep=0.05cm] (110) at ($(100)+(010)$){$\scriptstyle{110}$};
\node[inner sep=0.05cm] (011) at ($(001)+(010)$){$\scriptstyle{011}$};
\node[inner sep=0.05cm] (101) at ($(100)+(001)$){$\scriptstyle{101}$};
\node[inner sep=0.05cm] (111) at ($(100)+(010)+(001)$){$\scriptstyle{111}$};
\node[inner sep=0.05cm] (000) at (0,0){$\scriptstyle{000}$};
\tikzset{mlab/.style = {netmark={lab=$\scriptstyle{#1}$,p=0.5,asty={fill=white,inner sep=0}}}}
\draw (000)edge[mlab=\bullet00](100) (100)edge[mlab=1\bullet0](110) (110)edge[mlab=\bullet10](010) (010)edge[mlab=0\bullet0](000) (100)edge[mlab=10\bullet](101) (101)edge[mlab=1\bullet1](111) (111)edge[mlab=\bullet11](011) (011)edge[mlab=01\bullet](010) (110)edge[mlab=11\bullet](111);
\draw[dashed] (000)edge[mlab=00\bullet](001) (001)edge[mlab=\bullet01](101) (001)edge[mlab=0\bullet1](011);
\node[inner sep=0.05cm] (0xx) at ($0.5*(010)+0.5*(001)$){$\scriptstyle{0\bullet\bullet}$};
\node[inner sep=0.05cm] (1xx) at ($(100)+0.5*(010)+0.5*(001)$){$\scriptstyle{1\bullet\bullet}$};
\node[inner sep=0.05cm] (x0x) at ($0.5*(100)+0.5*(001)$){$\scriptstyle{\bullet0\bullet}$};
\node[inner sep=0.05cm] (x1x) at ($(010)+0.5*(100)+0.5*(001)$){$\scriptstyle{\bullet1\bullet}$};
\node[inner sep=0.05cm] (xx0) at ($0.5*(100)+0.5*(010)$){$\scriptstyle{\bullet\bullet0}$};
\node[inner sep=0.05cm] (xx1) at ($(001)+0.5*(100)+0.5*(010)$){$\scriptstyle{\bullet\bullet1}$};
\end{tikzpicture}
};
\node [anchor=east] at (labs.west){(a)};
\node at (0,0){
\begin{tikzpicture}
\draw (0,0)rectangle++(0.8,0.8);
\draw[line width=2,myred] (0,0)--++(0:0.8);
\draw[line width=2,myblue] (0:0.8)--++(90:0.8);
\fill[mygreen] (0:0.8)circle(0.1);
\node at (0.4,1.1){$\xcol_{\bullet0} \ycol_{10} \zcol_{1\bullet}$};
\end{tikzpicture}
};
\node at (1,0){+};
\node at (2,0){
\begin{tikzpicture}
\draw (0,0)rectangle++(0.8,0.8);
\draw[line width=2,myred] (0,0)--++(90:0.8);
\draw[line width=2,myblue] (90:0.8)--++(0:0.8);
\fill[mygreen] (90:0.8)circle(0.1);
\node at (0.4,1.1){$\xcol_{0\bullet} \ycol_{01} \zcol_{\bullet1}$};
\end{tikzpicture}
};
\node (c0) at (0,-2){
\begin{tikzpicture}
\mycube
\mycubefront
\draw[line width=2,mygreen] ($(x)+(y)$)--++(z);
\fill[myred] (0,0)--++(x)--++(y)--++($-1*(x)$)--cycle;
\fill[myblue] ($(x)+(y)+(z)$)circle(0.1);
\node at (0.7,1.4){$\xcol_{\bullet\bullet0} \ycol_{11\bullet} \zcol_{111}$};
\end{tikzpicture}
};
\node at (1,-2){+};
\node (c0) at (2,-2){
\begin{tikzpicture}
\mycube
\fill[myred] (0,0)--++(x)--++(z)--++($-1*(x)$)--cycle;
\mycubefront
\draw[line width=2,mygreen] ($(x)+(z)$)--++(y);
\fill[myblue] ($(x)+(y)+(z)$)circle(0.1);
\node at (0.7,1.4){$\xcol_{\bullet0\bullet} \ycol_{1\bullet1} \zcol_{111}$};
\end{tikzpicture}
};
\node at (3,-2){+};
\node (c0) at (4,-2){
\begin{tikzpicture}
\mycube
\fill[myred] (0,0)--++(y)--++(z)--++($-1*(y)$)--cycle;
\mycubefront
\draw[line width=2,mygreen] ($(y)+(z)$)--++(x);
\fill[myblue] ($(x)+(y)+(z)$)circle(0.1);
\node at (0.7,1.4){$\xcol_{0\bullet\bullet} \ycol_{\bullet11} \zcol_{111}$};
\end{tikzpicture}
};
\node (c0) at (0,-4){
\begin{tikzpicture}
\mycube
\mycubefront
\draw[line width=2,myred] (0,0)--++(x);
\fill[myblue] (x)--++(y)--++(z)--++($-1*(y)$)--cycle;
\fill[mygreen] (x)circle(0.1);
\node at (0.7,1.4){$\xcol_{\bullet00} \ycol_{100} \zcol_{1\bullet\bullet}$};
\end{tikzpicture}
};
\node at (1,-4){+};
\node (c0) at (2,-4){
\begin{tikzpicture}
\mycube
\mycubefront
\draw[line width=2,myred] (0,0)--++(y);
\fill[myblue] (y)--++(x)--++(z)--++($-1*(x)$)--cycle;
\fill[mygreen] (y)circle(0.1);
\node at (0.7,1.4){$\xcol_{0\bullet0} \ycol_{010} \zcol_{\bullet1\bullet}$};
\end{tikzpicture}
};
\node at (3,-4){+};
\node (c0) at (4,-4){
\begin{tikzpicture}
\mycube
\draw[line width=2,myred] (0,0)--++(z);
\fill[myblue] (z)--++(y)--++(x)--++($-1*(y)$)--cycle;
\fill[mygreen] (z)circle(0.1);
\mycubefront
\node at (0.7,1.4){$\xcol_{00\bullet} \ycol_{001} \zcol_{\bullet\bullet1}$};
\end{tikzpicture}
};
\node (c0) at (0,-6){
\begin{tikzpicture}
\mycube
\mycubefront
\draw[line width=2,myred] (0,0)--++(x);
\draw[line width=2,mygreen] (x)--++(y);
\draw[line width=2,myblue] ($(x)+(y)$)--++(z);
\node at (0.7,1.4){$\xcol_{\bullet00} \ycol_{1\bullet0} \zcol_{11\bullet}$};
\end{tikzpicture}
};
\node at (1,-6){+};
\node (c0) at (2,-6){
\begin{tikzpicture}
\mycube
\mycubefront
\draw[line width=2,myred] (0,0)--++(x);
\draw[line width=2,mygreen] (x)--++(z);
\draw[line width=2,myblue] ($(x)+(z)$)--++(y);
\node at (0.7,1.4){$\xcol_{\bullet00} \ycol_{10\bullet} \zcol_{1\bullet1}$};
\end{tikzpicture}
};
\node at (3,-6){+};
\node (c0) at (4,-6){
\begin{tikzpicture}
\mycube
\mycubefront
\draw[line width=2,myred] (0,0)--++(y);
\draw[line width=2,mygreen] (y)--++(x);
\draw[line width=2,myblue] ($(x)+(y)$)--++(z);
\node at (0.7,1.4){$\xcol_{0\bullet0} \ycol_{\bullet10} \zcol_{11\bullet}$};
\end{tikzpicture}
};
\node at (-1,-8){+};
\node (c0) at (0,-8){
\begin{tikzpicture}
\mycube
\mycubefront
\draw[line width=2,myred] (0,0)--++(y);
\draw[line width=2,mygreen] (y)--++(z);
\draw[line width=2,myblue] ($(z)+(y)$)--++(x);
\node at (0.7,1.4){$\xcol_{0\bullet0} \ycol_{01\bullet} \zcol_{\bullet11}$};
\end{tikzpicture}
};
\node at (1,-8){+};
\node (c0) at (2,-8){
\begin{tikzpicture}
\mycube
\draw[line width=2,myred] (0,0)--++(z);
\draw[line width=2,mygreen] (z)--++(x);
\mycubefront
\draw[line width=2,myblue] ($(x)+(z)$)--++(y);
\node at (0.7,1.4){$\xcol_{00\bullet} \ycol_{\bullet01} \zcol_{1\bullet1}$};
\end{tikzpicture}
};
\node at (3,-8){+};
\node (c0) at (4,-8){
\begin{tikzpicture}
\mycube
\draw[line width=2,myred] (0,0)--++(z);
\draw[line width=2,mygreen] (z)--++(y);
\mycubefront
\draw[line width=2,myblue] ($(z)+(y)$)--++(x);
\node at (0.7,1.4){$\xcol_{00\bullet} \ycol_{0\bullet1} \zcol_{\bullet11}$};
\end{tikzpicture}
};
\node at (-1.5,0){(b)};
\node at (-1.5,-2){(c)};
\node at (-1.5,-4){(d)};
\node at (-1.5,-7){(e)};
\draw (-2,-1)--++(7,0);
\draw (-2,-3)--++(7,0);
\draw (-2,-5)--++(7,0);
\end{tikzpicture}
\caption{
(a) Naming scheme for the vertices, edges, and faces of a standard cube.
(b)-(e) Formula for the cup product $\xcol\cup \ycol\cup \zcol$ of three chains $\xcol$, $\ycol$, and $\zcol$ of various degrees.
The value of the cup product on a square or cube is the sum over products of $\xcol$, $\ycol$, and $\zcol$ on different triples of subcells of the square/cube.
Concretely, the triples of cells are ``sequential collections'' that connect the $00\ldots$ vertex to the $11\ldots$ vertex in the correct order.
For (b) $\xcol\in C^1, \ycol\in C^0, \zcol\in C^1$, the cup product determines the $CZ$ terms in the vertex stabilizer $A^\ttg$.
For (c) $\xcol\in C^2, \ycol\in C^1, \zcol\in C^0$ and (d) $\xcol\in C^1, \ycol\in C^0, \zcol\in C^2$, the cup products represent two of the six terms of the twisted errors $\widetilde M^b$ in Eq.~\eqref{eq:twisted_error_formula}.
For (e) $\xcol\in C^1, \ycol\in C^1, \zcol\in C^1$, the cup product corresponds to the action of the TQD path integral in Eq.~\eqref{eq:tqd_action}.
}
\label{fig:cup_product}
\end{figure}

\myparagraph{The TQD path integral}
After introducing the cup product, we are now ready to define the path integral corresponding to our error-correcting circuit.
This is the Dijkgraaf-Witten path integral for the type-III 3-cocycle of $\zz_2^3$~\cite{Dijkgraaf1990, Propitius1995}, which we simply refer to as \emph{TQD path integral}.
The path integral can be defined on an arbitrary 3D cellulation, though for our purposes this lattice will be a regular cubic spacetime lattice.
The path integral is a discrete partition sum of the form
\begin{align}
\begin{split}
    Z =& \sum_{a_\ttr, a_\ttg, a_\ttb: \text{ $1$-cocycles}} (-1)^{\langle\epsilon, a_\ttr\cup a_\ttg\cup a_\ttb\rangle} \\
=& \sum_{a\in \zz_2^{\col\times E}: da=0} e^{2\pi i S[a]},
\end{split}
\end{align}
where
\begin{equation}
\label{eq:tqd_action}
S[a]=\frac12 \langle\epsilon, a_\ttr\cup a_\ttg\cup a_\ttb\rangle
\end{equation}
is called the \emph{action} of the path integral.
To deal with errors and non-deterministic measurement outcomes, we need to extend the path integral with \emph{flux} and \emph{charge defects}.
The configuration of 
flux defects 
is given by 
a triple of 2-chains $b\coloneqq (b_\ttr, b_\ttg, b_\ttb)\in \zz_2^{\col\times F}$, 
and the configuration of charge defects by 
a triple of 1-chains $c\coloneqq (c_\ttr,c_\ttg, c_\ttb)\in \zz_2^{\col\times E}$.
The extended 
path integral 
is
\begin{equation}
\label{eq:path_integral_defects}
Z[b,c] = \sum_{a\in \zz_2^{\col\times E}\colon da=b} e^{2\pi i\big(S[a]+ \frac12\langle c, a\rangle\big)}\;.
\end{equation}
Evaluating the path integral on a closed 3D manifold with periodic boundary condition yields a complex number $Z$.
We can also evaluate the path integral on a 3D lattice with \emph{state boundaries}:
When doing so, we keep the configuration $a$ on the boundary fixed, and only sum over different configurations of $a$ on the interior edges.
If $a_\partial$ denotes the configuration of $a$ on the boundary edges, then we denote this evaluation by $Z[b,c]_{a_\partial}$. $Z[b,c]_{a_\partial}$ can be interpreted as the amplitude of a (unnormalized) state vector, with three qubits at every boundary edge.
For a bi-partite boundary this state can be interpreted as the vectorized Krauss operator of an associated quantum channel, the TQD circuit.

\myparagraph{Error-correcting circuit revisited}
After introducing the TQD path integral, let us consider its relation with the TQD code and TQD circuit from Section~\ref{sec:TQDcircuits}.
Using cup products, we can concretely spell out the action of the stabilizers $A$ on a qubit configuration $a$:
\begin{equation}
\begin{gathered}
A_{\ttr v}\ket a = (-1)^{\langle\epsilon, v\cup a_\ttg \cup a_\ttb\rangle}\ket{a+d(\ttr,v)}\;,\\
A_{\ttg v}\ket a = (-1)^{\langle\epsilon,a_\ttr\cup v\cup a_\ttb\rangle}\ket{a+d(\ttg,v)}\;,\\
A_{\ttb v}\ket a = (-1)^{\langle\epsilon,a_\ttr\cup a_\ttg\cup v\rangle}\ket{a+d(\ttb,v)}\;.
\end{gathered}
\end{equation}
For example, the cup product occurring in $A_{\ttg v}$ is depicted in Figure~\ref{fig:cup_product} (b).
Here, we are slightly abusing notation and also use $(\lambda,v)$ to denote the element of $\zz_2^{\col\times V}$ with a single $1$ entry at $(\lambda,v)$.
Projectors corresponding to the two measurement outcomes $s=\pm 1$ of the two stabilizers are given by
\begin{align}
    \Gamma_{\lambda f}^s\coloneqq\frac12(1+ (-1)^s B_{\lambda f})
\end{align}
and
\begin{align}
    \Pi_{\lambda v}^s\coloneqq\frac12(1+ (-1)^s A_{\lambda v}).
\end{align}
We consider the TQD circuit consisting of $T$ rounds stabilizer measurements on an $L\times L$ square lattice with periodic boundary conditions.
To slightly simplify the derivation, we assume that Pauli-$X_\lambda$ errors happen in the circuit right after the $B_\lambda$ measurement in the periodic schedule $B_\ttb\rightarrow A_\ttb \rightarrow B_\ttg\rightarrow A_\ttg\rightarrow B_\ttr\rightarrow A_\ttr$, as opposed to right before like in our simulations.
Both are equivalent as the $B$ stabilizers are purely Pauli, and the Pauli-$X$ errors can be (anti-)commuted through, flipping the affected $B$ outcomes in post-processing.
As described in Section~\ref{sec:TQDcircuits}, the configurations of measurement outcomes and errors $c_m$ and $c_e$ can be identified with (triples of) 1-chains on an  $L\times L\times T$ cubic lattice whose first two dimensions correspond to the spatial $L\times L$ lattice and whose last dimension corresponds to time.
Similarly, $b_m$, $b_e$, and $b_c$ can be identified with 2-chains on that lattice.
To be precise, the $A$ measurement outcome at vertex $v\in V$ in time step $0\leq t<T$ corresponds to the cubic-lattice edge with spatial coordinate $v$ and time coordinate $[t,t+1]$, 
which we denote by ``$t^+v$''.
The $B$ measurement at face $f\in F$ in time step $0\leq t<T$ corresponds to the cubic-lattice face with spatial coordinate $f$ and time coordinate $t$ which we denote by ``$tf$''.
A Pauli-$X$ error at edge $e\in E$ in time step $0\leq t<T$ corresponds to the cubic-lattice face with spatial coordinate $e$ and time coordinate $[t,t+1]$ which we denote by ``$t^+e$''.
A Pauli-$Z$ error at edge $e\in E$ in time step $0\leq t<T$ corresponds to the cubic-lattice edge with spatial coordinate $e$ and time coordinate $t$ which we denote by ``$te$''.

With these correspondences, the linear operator implemented by the TQD circuit with  a fixed configuration of $\{b_m,b_e, b_c,c_m,c_e\}$ is given by
\begin{align}
\label{eq:circuit_explicit}
\begin{split}
C_{b,c} =& C_{b_m,b_e,b_c,c_m,c_e}\\
=& \prod_{0\leq t<T} \widetilde{\prod_{\lambda\in\col}}\Big(
\prod_{v\in V} \Pi_{\lambda v}^{c_{m\lambda t^+v}+c_{e \lambda t^+v}}\\
&\prod_{e\in E} X_{\lambda e}^{b_{e\lambda t^+e}+b_{c\lambda t^+e}}Z_{\lambda e}^{c_{e\lambda te}}
\prod_{ f\in F}  \Gamma_{\lambda f}^{b_{m\lambda tf}+b_{e\lambda tf}}
\Big)\\
=& \prod_{0\leq t<T} \widetilde{\prod_{\lambda\in \col}}\Big(
\prod_{v\in V} \Pi_{\lambda v}^{c_{\lambda t^+v}}\prod_{e\in E} X_{\lambda e}^{b_{\lambda t^+e}}Z_{\lambda e}^{c_{\lambda te}}
\prod_{f\in F} \Gamma_{\lambda f}^{b_{\lambda tf}}
\Big)\;.
\end{split}
\end{align}
Here, $\widetilde\prod$ indicates that the three factors of the product do not commute and we consider the fixed order $(\ldots)_{\lambda=\ttr}(\ldots)_{\lambda=\ttg}(\ldots)_{\lambda=\ttb}$.
We find that $C_{b_m,b_e,b_c,c_m,c_e}$ only depends on the sums of chains (or triples of chains), $b\coloneqq b_e+b_m+b_c$ and $c\coloneqq c_e+c_m$.

\myparagraph{Circuit = path integral} 
Consider the path integral on the $L\times L\times T$ cubic lattice with periodic boundary conditions in both length-$L$ dimensions and state boundaries in the length-$T$ direction.
The boundary configuration is given by $a_\partial=(a_0, a_T)$, where $a_0$ ($a_T$) is the restriction of $a$ to the $L\times L$ square lattice at time $0$ (at time $T$).
Then the circuit and the path integral are equal
\begin{equation}
\label{eq:circuit_pathintegral}
\bra{a_T} C_{b,c}\ket{a_0}
=Z[b,c]_{a_0,a_T}
\;.
\end{equation}
To derive this, let us first see how the operators in Eq.~\eqref{eq:circuit_explicit} act on qubit configurations $g$,
\begin{align}
\label{eq:circuit_pathintegral_operators}
\begin{split}
\Gamma_{\lambda f}^s \ket{g}
=& \delta_{(dg_\lambda)_f=s} \ket g\;,\\
\Pi_{\ttr v}^s \ket{g}
=& \frac12 \sum_x (-1)^{xs+\langle\epsilon, xv\cup g_\ttg\cup g_\ttb\rangle}  \ket{g+d(\ttr,xv)}\;,\\
\Pi_{\ttg v}^s \ket{g}
=& \frac12 \sum_x (-1)^{xs+\langle\epsilon,g_\ttr\cup xv\cup g_\ttb\rangle}  \ket{g+d(\ttg,xv)}\;,\\
\Pi_{\ttb v}^s \ket{g}
=& \frac12 \sum_x (-1)^{xs+\langle\epsilon,g_\ttr\cup g_\ttg\cup xv\rangle}  \ket{g+d(\ttb,xv)}\;,\\
Z_{e\lambda}^s\ket g =& (-1)^{sg_{\lambda e}}\ket g\,,\\
X_{e\lambda}^s\ket g =& \ket{g+s(\lambda,e)}\,,\\
\end{split}
\end{align}
Here, we use $v\in V$ to also denote the 0-chain with a single $1$ entry at $v$, and we use $xv$ to denote the product of the 0-chain $v$ with the scalar $x\in\zz_2$.
In particular, the product of all operators within one time slice acts as
\begin{align}
\begin{split}
\prod_{f\in F}\Gamma_{\lambda f}^{b_{\lambda tf}}\ket{g}
=& \delta_{dg_\lambda=b_{\lambda t}} \ket g\;,\\
\prod_{v\in V}\Pi_{\ttr v}^{c_{\lambda t^+v}} \ket g
=& \sum_{x_\ttr\in C^0} (-1)^{\langle c_{\ttr t^+},x_\ttr\rangle + \langle\epsilon, x_\ttr\cup g_\ttg\cup g_\ttb\rangle}\ket{g+(\ttr,dx_\ttr)}\;,\\
\prod_{e\in E} Z_{\lambda e}^{c_{\lambda te}}\ket g
=& (-1)^{\langle c_{\lambda t},g_\lambda\rangle}\ket g\;,\\
\prod_{e\in E} X_{\lambda e}^{b_{\lambda t^+e}}
=& \ket{g+b_{\lambda t^+}}\;,
\end{split}
\end{align}
and similar for $\Pi_\ttg$ and $\Pi_\ttb$.
With this, we can rewrite the circuit as
\begin{widetext}
\begin{align}
\label{eq:circuit_path_integral_derivation}
\begin{split}
\bra{a_T} C_{c, b}\ket{a_0}
\overset{\eqref{eq:circuit_explicit}, \eqref{eq:id_resolution}}{=} \bra{a_T}\prod_{0\leq t<T}\Big( \prod_{g\in \zz_2^{\col\times E_\twod}} \widetilde{\prod_{\lambda\in\col}} \Big( \prod_{v\in V}\Pi_{\lambda v}^{c_{\lambda t^+v}}\prod_{e\in E} X_{\lambda e}^{b_{\lambda t^+e}}Z_{\lambda e}^{c_{\lambda te}} \prod_{f\in F}\Gamma_{\lambda f}^{b_{\lambda tf}}\Big)\ket{g}\bra{g}\Big)\ket{a_0}\\
= \prod_{g\in \zz_2^{\col\times[0,T]\times E_\twod}}\braket{a_T}{g_T}\prod_{0\leq t<T}\bra{g_{t+1}} \widetilde{\prod_{\lambda\in\col}} \Big( \prod_{v\in V}\Pi_{\lambda v}^{c_{\lambda t^+v}}\prod_{e\in E} X_{\lambda e}^{b_{\lambda t^+e}}Z_{\lambda e}^{c_{\lambda te}} \prod_{f\in F}\Gamma_{\lambda f}^{b_{\lambda tf}}\Big)\ket{g_t}\braket{g_0}{a_0}\\
\overset{\eqref{eq:circuit_pathintegral_operators}}{=}
 \prod_{\substack{g\in \zz_2^{\col\times[0,T]\times E_\twod},\\x\in \zz_2^{\col\times[0,T]\times V_\twod}}} \delta_{g_0=a_0}\delta_{g_T=a_T}\prod_{0\leq t<T}  (-1)^{\langle c_t,g_t\rangle + \langle c_{t^+}, x_t\rangle} \delta_{dg_t=b_t} \delta_{g_{t+1}=g_t+b_{t^+}+dx_t}\\
 \times (-1)^{\langle\epsilon,x_{\ttr t}\cup g_{\ttg (t+1)}\cup g_{\ttb (t+1)} + g_{\ttr t}\cup x_{\ttg t}\cup g_{\ttb (t+1)} + g_{\ttr t}\cup g_{\ttg t}\cup x_{\ttb t}\rangle} \\
\overset{*, \eqref{eq:cup_product_circuit_mapping}}{=} \delta_{a_0=a_0}\delta_{a_T=a_T} \prod_{a_\lambda\in \zz_2^{E_\threed}}  (-1)^{\langle c,a\rangle + \langle\epsilon,a_\ttr\cup a_\ttg\cup a_\ttb\rangle} \delta_{da=b} = Z[b,c]_{a_0,a_T}\;.
\end{split}
\end{align}
\end{widetext}
For the first equality, we have inserted a resolution of the identity
\begin{equation}
\label{eq:id_resolution}
\one = \sum_{g} \ket{g}\bra{g}\;.
\end{equation}
For the equation labeled $*$, we have mapped pairs of $(g,x)$ to 1-chains $a$ on the 3D cubic lattice using the identification $a_{te}=g_{te}$ and $d_{t^+v}=x_{tv}$.
In other words, $a$ is given by $g$ on the time-perpendicular edges and by $x$ on the time-parallel edges.
It can be seen from Figure~\ref{fig:cup_product} that under this mapping, we have
\begin{equation}
\label{eq:cup_product_circuit_mapping}
\begin{gathered}
\begin{multlined}
\big(x_{\ttr t}\cup g_{\ttg (t+1)}\cup g_{\ttb (t+1)} + g_{\ttr t}\cup x_{\ttg t}\cup g_{\ttb (t+1)}\\ + g_{\ttr t}\cup g_{\ttg t}\cup x_{\ttb t}\big)_f
\end{multlined}\\
=\big(a_\ttr\cup a_\ttg\cup a_\ttb)_{t^+f}\;,
\end{gathered}
\end{equation}
where ``$t^+f$'' denotes the cubic-lattice volume whose spatial coordinate is the face $f$ and whose time coordinate $[t,t+1]$.
For example, the middle summand of the left-hand side corresponds to Figure~\ref{fig:cup_product} (b), and the right-hand side corresponds to Figure~\ref{fig:cup_product} (e).
Under the mapping $(g,x)\rightarrow a$ the two summands of Figure~\ref{fig:cup_product} (b) become equal to the last and first summand of Figure~\ref{fig:cup_product} (e).
Note that we can use the path-integral mapping in reverse to construct syndrome-extraction circuits for the TQD phase.
This was used, for example, in Refs.~\cite{Bauer2025allAbelian, davydova2025universal} to construct an interleaved circuit where one round of syndrome extraction consists of only 6 rounds of non-overlapping $CX$ and $CCZ$ gates.

\myparagraph{Deriving constraints and equivalences}
After equating the TQD syndrome-extraction circuit with the TQD path integral, we are ready to derive the constraints and equivalences in Eqs.~\eqref{eq:tqdb_conseq} and \eqref{eq:tqdc_conseq}.
Eq.~\eqref{eq:tqdb_constraint} is easy to see:
If $db\neq 0$ then the sum over $a$ in Eq.~\eqref{eq:path_integral_defects} is the empty sum, as $db=d(da)=0$ must hold for any $a$.
Eq.~\eqref{eq:tqdb_equivalence} follows directly from the topological order property of the path integral:
Namely, the boundary state on a region of the 3D cubic spacetime lattice of topology $S_2\times B_1$ (that is, a 3-ball with a smaller 3-ball removed from its interior) is a product state with respect to the bipartition into the inside boundary sphere and the outside boundary sphere.
For more details on the topological order property in path integrals, we refer the reader to Refs.~\cite{thesis, Bauer2025allAbelian, bauer2025planar}.

Next, we derive the constraints and equivalences in Eq.~\ref{eq:tqdc_conseq}.
We start by computing the \emph{gauge variance} of the action $S$:
\begin{equation}
\label{eq:gauge_variance}
\begin{gathered}
\Delta S[a, v] \coloneqq S[a+dv]-S[a]\\
\begin{multlined}
= \frac12 \Big(\langle\epsilon,(a^\ttr+dv^\ttr)\cup (a^\ttg+dv^\ttg) \cup (a^\ttb+dv^\ttb)\rangle\\
- \langle\epsilon,a^\ttr\cup a^\ttg\cup a^\ttb\rangle\Big)
\end{multlined}\\
=\frac12\Big\langle\epsilon,a^\ttr\cup b^\ttg\cup v^\ttb + b^\ttr\cup a^\ttg\cup v^\ttb + a^\ttr\cup v^\ttg\cup b^\ttb\\
+ b^\ttr\cup v^\ttg\cup a^\ttb + v^\ttr\cup a^\ttg\cup b^\ttb + v^\ttr\cup b^\ttg\cup a^\ttb\\
+ b^\ttr\cup v^\ttg\cup dv^\ttb + dv^\ttr\cup b^\ttg\cup v^\ttb + dv^\ttr\cup v^\ttg\cup b^\ttb\Big\rangle.\\
\end{gathered}
\end{equation}
Here, we have used the Leibnitz rule in Eq.~\eqref{eq:leibniz_rule}, the fact that $da=b$, as well as Eq.~\eqref{eq:fundamental_class_property}.
The gauge variance is zero if $da=b=0$, in other words, the action is \emph{gauge invariant} in the absence of flux ($b$) defects.
Next, we consider the variance of the gauge variance $\Delta S$ under $a\rightarrow a+dw$, or in other words, the \emph{second-order gauge variance}
\begin{equation}
\label{eq:gauge_double_variance}
\begin{gathered}
\Delta S[a+dw, v]-\Delta S[a,v]\\
=\frac12\Big\langle\epsilon,dw^\ttr\cup b^\ttg\cup v^\ttb + b^\ttr\cup dw^\ttg\cup v^\ttb + dw^\ttr\cup v^\ttg\cup b^\ttb \\
+ b^\ttr\cup v^\ttg\cup dw^\ttb + v^\ttr\cup dw^\ttg\cup b^\ttb + v^\ttr\cup b^\ttg\cup dw^\ttb\Big\rangle\\
\eqqcolon \frac12 \langle w,M^b v\rangle\;.
\end{gathered}
\end{equation}
Here, we have used that $\Delta S[a,v]$ in Eq.~\eqref{eq:gauge_variance} is affine linear in $a$, so its variance is obtained simply by replacing $a$ by $dw$.
We see that the result is a symmetric bilinear form $M^b$ in $v$ and $w$.

Now, the matrix $K^b$ and vector $\kappa^b$ in Eq.~\eqref{eq:tqdc_constraint} are obtained from $M^b$ as follows:
$(K^b)^T$ is any injective matrix whose image is equal to the kernel of $M^b$, in particular $M^b(K^b)^T=0$.
$\kappa^b$ is given by
\begin{equation}
\label{eq:kappa_definition}
\frac12 \kappa^b_x = \Delta S[a,(K^b)^Tx]\;,
\end{equation}
for some arbitrary choice of 1-chain $a$ such that $da=b$.
Note that the choice of $a$ does not matter if $b$ is 
contains no non-contractible loops since
\begin{equation}
\begin{gathered}
\Delta S[a+dx,(K^b)^Tx]\\
=\Delta S[a,(K^b)^Tx]+\frac12 \langle v,M^b (K^b)^T x\rangle=\Delta S[a,(K^b)^Tx]\;.
\end{gathered}
\end{equation}
Second, if $b$ contains no non-contractible loop then there are representatives of all 2-cohomology classes for $a$ that are are supported away from $b$, whereas $(K^b)^Tx$ is always supported in the vicinity of $b$.
Adding these representatives to $a$ does not change $\Delta S[a,(K^b)^Tx]$.
Note that if $b$ does contain non-contractible loops, then $\kappa^b$ is indeed ill-defined and $Z$ decoding is impossible.
To avoid this we need to decode the $X$ errors just-in-time.

Having defined $K^b$ and $\kappa^b$, let us now show that Eq.~\eqref{eq:tqdc_constraint} indeed holds,
\begin{align}
\begin{split}
Z[b,c]
=& \sum_{a: da=b} e^{2\pi i \big(S[a]+\frac12 \langle c, a\rangle\big)}\\
=& \sum_{a: da=b} \frac12 \big( e^{2\pi i \big(S[a]+\frac12 \langle c, a\rangle\big)}\\
&+ e^{2\pi i \big(S[a+d(K^b)^Tx]+\frac12 \langle c, a+d(K^b)^Tx\rangle\big)}\big)\\
=& \sum_a \frac12 e^{2\pi i \big(S[a]+\frac12 \langle c, a\rangle\big)} \big(1\\
&+ e^{2\pi i \big(\Delta S[a,(K^b)^Tx]+\frac12 \langle c, d(K^b)^T x\rangle\big)}\big)\\
\overset{\eqref{eq:kappa_definition}}{=}&\sum_a \frac12 e^{2\pi i \big(S[a]+\frac12 \langle c, a\rangle\big)} \big(1 + (-1)^{\kappa^b_x+ \langle c, d(K^b)^T x\rangle}\big)\\
=& Z[b,c]\cdot \delta_{\langle K^bd^Tc,x\rangle=\kappa^b_x}\;.
\end{split}
\end{align}
Using the equivalence between the circuit 
and path integral in Eq.~\eqref{eq:circuit_pathintegral}, Eq.~\eqref{eq:tqdc_constraint} follows.
The matrix $\widetilde M^b$ in Eq.~\eqref{eq:tqdc_equivalence} is any matrix 
that satisfies
\begin{equation}
d^T \widetilde M^b=M^b\;.
\end{equation}
More precisely, for any color-vertex pair $(\lambda,v)$, $M^b(\lambda,v)$ is supported only in the vicinity of $v$, and we can choose $\widetilde M^b(\lambda,v)$ to also be supported in this vicinity.
Concretely, it is very easy to read a choice for $\widetilde M^b$ off Eq.~\eqref{eq:gauge_double_variance} by simply replacing $dw$ by $x$,
\begin{align}
\label{eq:twisted_error_formula}
\begin{split}
\langle x, &\widetilde M^b v\rangle\\
=& x^\ttr\cup b^\ttg\cup v^\ttb + b^\ttr\cup x^\ttg\cup v^\ttb + x^\ttr\cup v^\ttg\cup b^\ttb \\
&+ b^\ttr\cup v^\ttg\cup x^\ttb + v^\ttr\cup x^\ttg\cup b^\ttb + v^\ttr\cup b^\ttg\cup x^\ttb\;.
\end{split}
\end{align}
We are now ready to show that Eq.~\eqref{eq:tqdc_equivalence} indeed holds,
following the steps
\begin{widetext}
\begin{align}
\label{eq:charge_equivalence_derivation}
\begin{split}
Z[b,c]
=& \sum_{a\in C^1: da=b} e^{2\pi i\big(S[a]+\frac12 \langle c,a\rangle\big)}
\overset{*}{=} \sum_{v\in C^0} e^{2\pi i\big(S[a_0+dv]+\frac12 \langle c,a_0+dv\rangle\big)}\\
=& e^{2\pi i\big(S[a_0]+\frac12 \langle c,a_0\rangle\big)} \sum_{v\in C^0} e^{2\pi i\big(\Delta S[a_0, v]+\frac12 \langle c,dv\rangle\big)}\\
\overset{**}{=}& e^{2\pi i\big(S[a_0+dx]+\frac12 \langle c,a_0+dx\rangle\big)} \sum_{v\in C^0} e^{2\pi i\big(\Delta S[a_0+dx, v]+\frac12 \langle c,dv\rangle\big)}\\
\overset{\eqref{eq:gauge_double_variance}}{=}& e^{2\pi i\big(\Delta S[a_0,x]+\frac12 \langle c,dx\rangle\big)}e^{2\pi i\big(S[a_0]+\frac12 \langle c,a_0\rangle\big)} \sum_{v\in C^0} e^{2\pi i\big(\Delta S[a_0, v]+ \frac12\langle M^bx, v\rangle +\frac12 \langle c,dv\rangle\big)}\\
=& e^{2\pi i\big(\Delta S[a_0,x]+\frac12 \langle c,dx\rangle +\frac12 \langle \widetilde M^bx,a_0\rangle\big)}e^{2\pi i\big(S[a_0]+\frac12 \langle c+\widetilde M^bx,a_0\rangle\big)} \sum_{v\in C^0} e^{2\pi i\big(\Delta S[a_0, v] +\frac12 \langle c+\widetilde M^bx,dv\rangle\big)}\\
=& e^{2\pi i\big(\Delta S[a_0,x]+\frac12 \langle c,dx\rangle +\frac12 \langle \widetilde M^bx,a_0\rangle\big)}
Z[b,c+\widetilde M^bx]
\propto Z[b,c+\widetilde M^bx]\;.
\end{split}
\end{align}
\end{widetext}
Note that the equation is required to hold on a patch of lattice with state boundary that is topologically a 3-ball $B_3$.
Since $B_3$ has trivial 1-cohomology, every $a$ with $da=b$ can be written as $a_0+dv$ for some fixed $a_0$ with $da_0=b$, which was used in the equation labelled $*$.
In the equation labelled $**$, we have performed a change of variables $a_0\rightarrow a_0+dx$

Finally, we should briefly point out how the formula for $\widetilde M^b$ shown in Eq.~\eqref{eq:twisted_error_formula} indeed coincides with what is shown in Figure~\ref{fig:zeff_illustration} (a).
The first, second, and third part of Figure~\ref{fig:zeff_illustration} (a) correspond to the last two, first two, and middle two terms of Eq.~\eqref{eq:twisted_error_formula}, respectively.
The cup product formula for the second term is shown in Figure~\ref{fig:cup_product} (c), and the cup product formula for the third term is shown in Figure~\ref{fig:cup_product} (d).
Figure~\ref{fig:zeff_illustration} (a) is obtained by taking the three terms in Figure~\ref{fig:cup_product} (c) or (d) and 
shifting them to different cubes such that the $v$ vertices coincide.

\section{Decoding with partial syndrome information}\label{app:generalJIT}
In this section we define a general JIT decoder for a spacetime decoding problem defined by a weighted hypergraph $G$, provided we have access to a (global) min-weight decoder $\sD$ for $G$.
We consider a decoding problem on a (hyper-)graph $G=(V,E)$ that is expressed as a linear problem over an Abelian group $A$.
Syndromes take values in $A^V$ and errors are associated with $A^E$.
We use the additive convention for the group operation in $A$ and denote the trivial group element with $0$.
For simplicity we omit the prefix ``hyper'' for the hyper-graph $G$ and its hyper-edges $E$.
For Pauli errors and measurement errors on binary measurements $A=\bZ_2$. Other groups appear naturally in the decoding of qudit and continuous-variable codes~\cite{Anwar2014quditcodes, Vuillot2019toricGKP}.

Let $\partial\colon E\to V$ be the boundary map that defines the adjacency between $V$ and $E$.
This induces a linear map $\partial_A\colon A^{E}\to A^{V}$, where $A^X = \spn_A(X)$ is the $A$-ring of dimension $\abs{X}$.
We also allow for additional weights $w\in \bR^E$.
\begin{defn}
    The min-weight decoding problem over $A$ defined by $G$ can be expressed as the problem of, given a \emph{syndrome $s\in A^V$} 
    of an unknown error $e\in A^E$, find
    \begin{align}\label{app:eq:minweight-correction}
        e_{\mathrm{cor}} = \argmin_{e\in A^E}(\norm{e}_w\;|\; \partial_A e = s),
    \end{align}
    where $\norm{e}_w=\sum_{i\in E} w_i \norm{e_i}_A$ with a norm $\norm{\bullet}_A$ on $A$.
\end{defn}

Let $\sD$ be a global decoder for $G$. That is, given a syndrome $s$ and weights $w$ it returns a correction $\sD(s,w)\in A^E$ that fits the syndrome. $\sD$ should be thought of as a (approximate) minimum-weight correction, as defined in Eq.~\eqref{app:eq:minweight-correction}.

A \emph{just-in-time} (JIT) decoder 
is necessary in QEC circuits with causality constraints that demand that at certain times a correction must be applied based on (partial) syndrome information available up to this timestep.
This happens, for example, when physical non-Clifford gates are applied, either for syndrome extraction of a non-Pauli code or as unitary logic gate.
In both cases a JIT decoder allows to ``decode through'' the non-Clifford gates.
We also demand that $G$ is of the form that any syndrome can be mapped to a future boundary, as necessary in online decoding~\cite{psidecoding, riverlanedecoding}.
For matchable decoding problems one solution to formalize this possibility is by introducing an additional future-lying vertex, as done in Section~\ref{sec:JIT-decoder} for a $\bZ_2$ matchable graph. Alternatively, half-edges can be added to the vertices on the future-lying boundary.

\begin{figure*}
    \centering
    \begin{tikzpicture}[
    vertex/.style={circle,fill=black,inner sep=0pt,minimum size=3pt},
    edge/.style={thin},
    circ/.style={circle,fill=ForestGreen,inner sep=0pt,minimum size=7pt},
    past/.style={opacity=0.5, Cyan, rounded corners},
    commit/.style={opacity=0.5, BrickRed, rounded corners}
]
\newcommand{\drawgrid}[2]{%
\def\rows{#1} 
\def\cols{#2} 
\def\spacing{1} 
\foreach \i in {0,...,\numexpr\rows-1\relax}{
    \foreach \j in {0,...,\numexpr\cols-1\relax}{
        \coordinate (v\i\j) at (\j*\spacing, \i*\spacing);
    }
}
\foreach \i in {0,...,\numexpr\rows-1\relax}{
    \foreach \j in {0,...,\numexpr\cols-2\relax}{ 
        \pgfmathtruncatemacro{\next}{\j+1}
        \draw[edge] (v\i\j) -- (v\i\next);
    }
}
\foreach \i in {0,...,\numexpr\rows-2\relax}{ 
    \foreach \j in {0,...,\numexpr\cols-1\relax}{
        \pgfmathtruncatemacro{\next}{\i+1}
        \draw[edge] (v\i\j) -- (v\next\j);
    }
}
\foreach \i in {0,...,\numexpr\rows-1\relax}{
    \foreach \j in {0,...,\numexpr\cols-1\relax}{
        \node[vertex] at (v\i\j) {};
    }
}
}

\begin{scope}[shift={(-7,-1.25)}]
    \draw[Gray, rectangle, fill=white, opacity=0.85, rounded corners, line width=1pt] (-0.05,1.2) rectangle (2.1,3);
    \node[circ] (A) at (0.3,2.75) {};
    \node[anchor=west] at (0.5, 2.75) {$V_{\leq t}$};
    \fill[past] (0.05,1.35) rectangle (0.55,1.85);
    \node[anchor=west] at (0.5,1.6) {$\bigsqcup_{\tau < t}C^{(\tau)}$};
    \fill[commit] (0.05,2) rectangle (0.55,2.5);
    \node[anchor=west] at (0.5,2.25) {$C^{(t)}$};
\end{scope}

\begin{scope}[shift={(-4,0)}]
\node at (-0.5,4) {a)};
 \drawgrid{5}{4}
 \draw[->] (-0.5,0) -- +(0,1);
 \node[anchor=south east] at (-0.5,0) {$t$};
 \fill[past] (-0.1,-0.1) rectangle (3.1,1.9);
 \fill[commit] (-0.1,1.9) rectangle (3.1,2.9);
 \foreach\i in {0,1,...,3} {
    \foreach\j in {0,1,2}{
    \node[circ] at (\i,\j) {};
 }}
\end{scope}

\begin{scope}[shift={(0.5,0)}]
\node at (-0.5,4) {b)};
 \drawgrid{5}{4}
 \draw[->] (-0.5,0) -- +(0,1);
 \node[anchor=south east] at (-0.5,0) {$t$};
 \fill[past] (-0.1,-0.1) rectangle (3.1,0.9);
 \fill[commit] (-0.1,0.9) rectangle (3.1,1.9);
 \foreach\i in {0,1,...,3} {
    \foreach\j in {0,1,...,3} {
    \node[circ] at (\i,\j) {};
 }}
\end{scope}

\begin{scope}[shift={(5,0)}]
\node at (-0.5,4) {c)};
 \drawgrid{5}{4}
 \draw[->] (-0.5,0) -- +(0,1);
 \node[anchor=south east] at (-0.5,0) {$t$};
 \fill[past] (-0.1,-0.1) rectangle (3.1,2.9);
 \fill[commit] (-0.1,2.9) rectangle (3.1,3.9);
 \foreach\i in {0,1,...,3} {
    \foreach\j in {0,1} {
    \node[circ] at (\i,\j) {};
 }}
\end{scope}

\end{tikzpicture}
    \caption{Sketch of different scenarios for JIT decoders, based on the temporal structure of the \emph{accessible syndromes at $t$} $V_{\leq t}$ (green vertices), \emph{commit region at $t$} $C^{(t)}$ (edges in red region) and  \emph{previous commits} $\bigsqcup_{\tau < t} C^{(\tau)}$ (edges in blue region). The figure shows an illustration of a part of a decoding graph, with time running upwards. 
    a) In the scenario that we consider in the main text the syndrome acquired at $t$ live on vertices that connect to all edges in the commit region.
    It applies to circuits that require immediate corrections in the whole spacetime region that is connected to the newly-acquired syndromes.
    b) In a different scenario the JIT decoder has access to future-lying syndrome vertices that are not directly connected to the commit region.
    This scenario emerges in a setting in which the error-correcting circuit measures through a higher-dimensional resource state, keeping multiple consecutive time-layers alive at the same time. In this case, corrections only need to be applied to a part of system, before it is measured out.
    c) In the third scenario the JIT decoder only has access to syndromes in the past of the commit region. This arises when dealing with a non-negligible reaction time of the JIT decoder as it must then decide on a correction in the future.
    Both settings b) and c) have natural parameters that could be tuned: The amount of future-lying accessible syndrome information and the delay time both of which have a clear cost assigned to them.
    The accessible syndrome in the future can be increased by increasing the spatial and connectivity overhead in a protocol and the delay time can be reduced at the cost of increased classical computing resources.
    The JIT decoder is expected to have a threshold in all of the above scenarios but likely with varying numerical values.
    We expect the threshold to increase in b) and decrease in c) when compared to a) but leave a quantitative comparison to future work.}
    \label{fig:jit-versions}
\end{figure*}

We first formalize the data that captures the causal constraints that the JIT decoder is based on.
\begin{defn}[Commit region]\label{def:JITgeneral-setup}
    At each time $t\in \bZ$, we define the \emph{commit region}
    \begin{align}
        C^{(t)}\subseteq E
    \end{align}
    such that $E=\bigsqcup_t C^{(t)}$ and $C^{(t)}\cap C^{(t')} = \emptyset$ for $t\neq t'$.
    We also require that each $C^{(t)}$ is simply connected in $G$ and that $C^{(t)}$ is only shares vertices with $C^{(t-1)}$ and $C^{(t+1)}$.
    We denote the set of vertices connected to at least one edge in $C^{(t)}$ with $\partial C^{(t)}\subseteq V$.
    At each time, we define the set of \emph{syndromes obtained at $t$}, $V_t \subseteq V$, such that $V_t\cap V_{t'} = \emptyset$ for $t\neq t'$.
    The set of \emph{accessible syndrome at $t$} is defined as
    \begin{align}
        V_{\leq t} = \bigsqcup_{\tau\leq t}V_\tau.
    \end{align}
\end{defn}
Note that the time variable in Def.~\ref{def:JITgeneral-setup} labels the points in time where the JIT decoder must provide a correction and not necessarily label time steps in the quantum circuit.

We illustrate different scenarios for $C^{(t)}$ and $V_{\leq t}$ in Figure~\ref{fig:jit-versions}. Note that a valid JIT decoder can be defined even when $V_t\cap \partial C^{(t)} = \emptyset$, either because it lies in the past or the future of $\partial C^{(t)}$.
The JIT decoder presented in the main text is a simplified version in which $V_t$ is defined to be the vertices shared between $C^{(t-1)}$ and $C^{(t)}$.

The JIT decoder consists of two subroutines.
Both steps can be expressed by solving a decoding problem on $G$ with modified weights.
\begin{defn}
    At time $t\in \bZ$ we define two modified weight vectors.
    \begin{itemize}
        \item The \emph{estimating weights} $w^{\mathrm{g}}_t$ is defined to equal $w$ on edges that connect to  at least one vertex in $V_{\leq t}$ and $0$ on all other (future-lying) edges.
        \item The \emph{commit weightss} $w^{\mathrm{c}}_t$ is defined to equal $w$ on $C^{(t)}$ and $\infty$ otherwise.
    \end{itemize}
\end{defn}
The estimating weights define a decoding problem in which a syndrome can be matched into the future for free and the commit weights are chosen to enforce that the correction is supported on $C^{(t)}$.

We are now in position to define $\sDJIT$ 
as a sequence of decoders for all times $t$ based on $\sD$.
\begin{defn}
    The JIT decoder $\sDJIT$ consists of a sequence of sub decoders $\{\sDJIT(t)\}_t$.
    Each $\sDJIT(t)$ takes as an input the currently accessible syndrome $s_{\leq t} = s|_{V_{\leq t}}$ and outputs a correction $e_{\mathrm{JIT}}(t)$ supported on $C^{(t)}$ and the output of $\sDJIT(t-1)$.
    The output of $\sDJIT$ is a global correction
    \begin{align}
        e_{\mathrm{JIT}} = \sum_t e_{\mathrm{JIT}}(t)\in A^E.
    \end{align}
    Each $\sDJIT(t)$ consists of two subroutines, which we define in Algorithm~\ref{alg:DJITt}.
\end{defn}

\begin{algorithm}
\caption{$\sDJIT(t)$}
\label{alg:DJITt}
\DontPrintSemicolon
\KwIn{$s_{\leq t}$, $e_{\mathrm{JIT}}(t-1)$}
\KwOut{$e_{\mathrm{JIT}}(t)\in A^{C^{(t)}}$}
\hrulefill\;
Run $\sD(s_{\leq t},w^{\mathrm{g}}_t)$, get $e_{1}(t)$ \tcp*{Step 1}
Compute $\tilde{s}_{t} = \partial_A e_1(t)|_{\partial C^{(t)}}$\;
Compute $\tilde{s}_{t-1} = \partial_Ae_{\mathrm{JIT}}(t-1)|_{\partial C^{(t)}}$\;
Run $\sD(\tilde{s}_{t} + \tilde{s}_{t-1}, w^{\mathrm{c}}_t)$, get $e_{\mathrm{JIT}}(t)$ \tcp*{Step 2}
\Return{$e_{\mathrm{JIT}}(t)$}\;
\end{algorithm}

The JIT decoder is defined to always provide a valid correction.

\begin{thm}[Validity of corrections]
    $\sDJIT$ outputs a valid correction $\varepsilon_{\mathrm{JIT}}\in A^E$,
    \begin{align}
        \partial_A (e_{\mathrm{JIT} } + e) = 0,
    \end{align}
    where $e\in A^E$ is the unknown error that caused syndrome $s=\partial_A e$ which has been processed by $\sDJIT$.
\end{thm}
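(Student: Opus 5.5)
Since $\partial_A e = s$, the claim is equivalent to $\partial_A e_{\mathrm{JIT}} = s$, where $e_{\mathrm{JIT}}=\sum_t e_{\mathrm{JIT}}(t)$. I would prove this vertex by vertex. Because the commit regions $C^{(t)}$ partition $E$ and $C^{(t)}$ only shares vertices with $C^{(t\pm1)}$, every vertex $u\in V$ is touched by at most two consecutive commit regions $C^{(t)}$ and $C^{(t+1)}$. So $(\partial_A e_{\mathrm{JIT}})_u$ receives contributions only from $e_{\mathrm{JIT}}(t)$ and $e_{\mathrm{JIT}}(t+1)$. The whole proof then reduces to tracking what Step~2 of Algorithm~\ref{alg:DJITt} enforces on $\partial C^{(t)}$, and telescoping.

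\textbf{Step A (one-step invariant).} Step~2 is run with the commit weights $w^{\mathrm{c}}_t$, which are infinite off $C^{(t)}$. Assuming $\sD$ returns a correction matching its input syndrome whenever one exists, this gives
\begin{align}
\partial_A e_{\mathrm{JIT}}(t)\big|_{\partial C^{(t)}} = \tilde s_t + \tilde s_{t-1} = \partial_A e_1(t)\big|_{\partial C^{(t)}} + \partial_A e_{\mathrm{JIT}}(t-1)\big|_{\partial C^{(t)}} .
\end{align}
I must check that the input is solvable within $C^{(t)}$, i.e.\ that $\tilde s_t+\tilde s_{t-1}$ lies in the image of $\partial_A$ restricted to $A^{C^{(t)}}$. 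This is the step I expect to be the main obstacle. My plan is to use the hypotheses that $C^{(t)}$ is connected and that syndromes can always be pushed to the future boundary, which is the role of the vertex $w$ or the half-edges. With these, any configuration on $\partial C^{(t)}$ whose vertices inside $V_{\le t}$ balance correctly is reachable, and the excess can be dumped onto vertices shared with $C^{(t+1)}$.

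\textbf{Step B (rewrite the invariant on shared vertices).} Rearranging over $A$ gives
\begin{align}
\partial_A\big(e_{\mathrm{JIT}}(t-1)+e_{\mathrm{JIT}}(t)\big)\big|_{\partial C^{(t-1)}\cap\partial C^{(t)}} = \partial_A e_1(t)\big|_{\partial C^{(t-1)}\cap\partial C^{(t)}} .
\end{align}
For a vertex $u$ in this intersection, the left-hand side is exactly $(\partial_A e_{\mathrm{JIT}})_u$, since no other commit region touches $u$. It remains to show $(\partial_A e_1(t))_u = s_u$. In the main-text setting such a $u$ lies in $V_t\subseteq V_{\le t}$, so it is an accessible vertex. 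The Step~1 output $e_1(t)=\sD(s_{\le t},w^{\mathrm g}_t)$ reproduces the accessible syndrome exactly on accessible vertices; only inaccessible vertices, and $w$, absorb the free future-lying edges. Hence $(\partial_A e_1(t))_u=s_u$. For vertices touched by a single commit region, the same reasoning applies with one term vanishing.

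\textbf{Step C (boundary in time).} Finally, I handle the last commit region, where the decoder is run on error-free rounds past the final syndrome. There the residual syndrome $\tilde s_{t-1}$ must be cancelled entirely inside $C^{(t)}$. This again uses the solvability argument of Step~A, together with the global parity condition: $\sum_u s_u$ equals the total charge, which is accounted for by the vertex $w$ and is $0$ once no future vertices remain. Summing the vertexwise identities over all $u$ yields $\partial_A e_{\mathrm{JIT}}=s=\partial_A e$. Since $A$ is an abelian group, this is equivalent to $\partial_A(e_{\mathrm{JIT}}+e)=0$ in the paper's additive convention (literally so for $A=\mathbb Z_2$; for general $A$ one uses $-e_{\mathrm{JIT}}$).
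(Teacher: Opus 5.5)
Your Steps A and B are the paper's proof. The paper checks, vertex by vertex on $B=\partial C^{(t-1)}\cap\partial C^{(t)}$, that the Step-2 commit has boundary $\tilde s_t+\tilde s_{t-1}$ while $\tilde s_t$ reproduces $s_v$ at these accessible vertices, so $e_{\mathrm{JIT}}(t-1)+e_{\mathrm{JIT}}(t)$ has boundary $s_v$ there; it splits into the cases $s_v=0,1$, so, like your ``rearranging over $A$'', it is really a $\mathbb{Z}_2$ argument.

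The paper simply takes the solvability of Step 2 and the terminal time boundary for granted. If you want to close your Step A, note that the targets on the vertices shared with $C^{(t+1)}$ are already fixed by $\tilde s_t$, so nothing can be ``dumped'' there. What actually makes the Step-2 input solvable on the connected commit graph is the parity count $|\tilde s_t|\equiv|\tilde s_{t-1}|\equiv|s_{\le t-1}| \pmod{2}$, which follows inductively from the choice $s_w=s^{\mathrm{tot}}$ in Step 1.
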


\begin{proof}
The committed corrections at time $t$ can only have non-trivial boundary in $\partial C^{(t)}$.
Hence, we only need to check that the sum of commits from $t-1$ and $t$ have the correct boundary 
on $B = \partial C^{(t-1)}\cap \partial C^{(t)}$,
\begin{align}
    \eval{\partial_A (e_{\mathrm{JIT}}(t-1) + e_{\mathrm{JIT}}(t))}_{B} \stackrel{!}{=} \eval{s}_{B}.
\end{align}
We consider each vertex $v\in B$ and possible syndrome value $s_v$ individually.
Consider the case when $s_v = 0$. In this case $\tilde{s}_{t}|_v = 0$ and $\tilde{s}_{t-1}|_v = (\partial_Ae_{\mathrm{JIT}}(t))_v$ by definition of the second step of $\sDJIT(t)$.
It follows that $(\partial_A e_{\mathrm{JIT}})_v = 0 = s_v$.

Consider the case where $s_v=1$.
In this case, $\tilde{s}_{t}|_v = 1$.
The second matching step now commits $e_{\mathrm{JIT}}(t)$ with that has syndrome on $v$ $\tilde{s}_{t}|_v = 0$ if $\tilde{s}_{t-1}|_v = 1$ and $\tilde{s}_{t}|_v = 1$ if $\tilde{s}_{t-1}|_v = 0$.
In both cases the boundary of the combined correction, on $v$, is $\tilde{s}_{t-1}|_v + \tilde{s}_{t}|_v = 1 = s_v$ by linearity of $\partial_A$.

This proves that for all possible values of $s|_B$ $\sDJIT$ provides a valid correction after time $t$. Over the full protocol this combines to valid global correction.
\end{proof}

\section{Twisted errors in arbitrary CSS-type circuits}
In this appendix, we show how to compute the twisted errors in an arbitrary CSS-type error-correcting circuit consisting of $CX$ gates and (multi-qubit) $X$ or $Z$ measurements that is enriched with diagonal gates in the third level of the Clifford hierarchy, that is, $T$, $CS$, and $CCZ$ gates~\cite{Cui2017diagonal}.
Another example where twisted errors have been computed for error-correcting circuits with $T$ gates can be found in Ref.~\cite{bauer2025planar} -- here we give a general prescription.

Each such circuit can be mapped onto a cohomological path integral, defined on a spacetime chain complex.
The space of 1-chains of that complex are associated with locations of physical $Z$ errors and $X$-measurement outcomes.
This corresponds to the spacetime history of qubit configurations over the circuit and is similar to $a=(g,x)$ in Eq.~\eqref{eq:circuit_path_integral_derivation}.
After that identification, the circuit can be mapped onto a path integral of the form
\begin{equation}\label{app:eq:general-pathintegral}
Z[b,c]=\sum_{a\in \zz_2^X: d_1a=b} e^{2\pi i \big(S[a] + \frac12\langle c,a\rangle\big)}\;.
\end{equation}
The function $S$ encodes the diagonal non-Pauli gates in the circuit.
The sum runs over all 1-chains $a$ subject to a linear constraint, encoded in $d_1: \zz_2^X\rightarrow \zz_2^Y$, where $Y$ coincides with the space of all $X$-like errors.
We assume that the circuit implements a \emph{spacetime LDPC protocol} where $d_1$ is \emph{local} on a degree-bounded (hyper)graph, which we call the \emph{locality graph}.
That is, $d_1$ is only non-zero on pairs of locations $x\in X$ and $y\in Y$ whose combinatorial distance on the locality graph is smaller than some global constant.
This could for example be a 3D spacetime lattice for a 2+1D topological protocol, a 5D lattice for repeated syndrome extraction rounds of a bivariate bicycle code~\cite{bbcodes}, or an infinite-dimensional expander graph for some good qLDPC code family.
The charge configuration $c\in \zz_2^X$ consists of variables assigned to the same places as $a$.
The flux configuration $b\in \zz_2^Y$ has variables assigned to different places on the same locality graph.
We give a more instructive explanation of that mapping in the remark at the end of this section.

The action $S[a]$ is a function $S: \zz_2^X\rightarrow \mathbb R/\zz$.
For a fault-tolerant error-correcting circuit of the type above, the action needs to have three properties.
The first property is locality on the underlying locality graph, i.e., $S$ is of the form
\begin{equation}
S[a] = \sum_{z\in Z} S_z[a]\;,
\end{equation}
where $Z$ is a set of places on the locality graph, and $S_z$ only depends on the $a$ variables within a constant radius from $z$.
The second property of the action is \emph{gauge invariance}.
Gauge invariance ensures that $S$ only depends on the cohomology class of the spacetime qubit configuration $a$, into which the logical information is encoded.
To define gauge invariance, we first find a generating set of local $a\in \zz_2^X$, supported inside a constant-size region on the locality graph, such that $d_1a=0$.
We can combine all of these local generating $a$ to form the columns of a matrix $d_0$, such that $d_0$ is local and $d_1d_0=0$.
Then gauge invariance is given by
\begin{equation}
b=0\quad\Rightarrow\quad \Delta S[a,v]\coloneqq S[a+d_0v]-S[a] = 0\;.
\end{equation}
The third property is that $S_z$ needs to be a a \emph{third order function}~\cite{quadratic_tensors}.
This is not strictly necessary for fault tolerance, but makes the derivation of the twisted errors particularly nice algebraically.
To define a third order function, we first define the \emph{derivative} of an $n$-argument function
\begin{equation}
f: \zz_2^{X_0}\times \zz_2^{X_1} \times \zz_2^{X_2}\times \ldots \rightarrow \mathbb R/\zz
\end{equation}
as the $n+1$-argument function
\begin{equation}
\begin{gathered}
\partial f(x_0,x_1,x_2,\ldots)\\
\coloneqq f(x_0+x_1,x_2,\ldots)-f(x_0,x_2,\ldots)-f(x_1,x_2,\ldots)\;.
\end{gathered}
\end{equation}
That is, the derivative tests the linearity of $f$ in the first argument.
We denote the $i$th derivative of a 1-argument function $S$ as $S^{(i)}$.
$S$ is an $i$th order function if $S^{(i+1)}=0$, or equivalently $S^{(i)}$ is linear in the first component -- note that $S^{(i)}$ is symmetric in its arguments and is thus linear in every component.
So $S$ is a third order function if $S^{(3)}$ is trilinear.
The trilinearity of $S$ is in direct correspondence with the fact that the non-Clifford diagonal gates in the circuit are in the third level of the Clifford hierarchy.
Relatedly, it is known~\cite{quadratic_tensors} that any third order function consists of ``$T$-like'', ``$CS$-like'', and ``$CCZ$-like'' terms, i.e., it is of the form
\begin{align}
\begin{split}
S[a] =& \sum_{x_0\in X} \frac{c_{x_0}}{8} a_x + \sum_{x_0,x_1\in X} \frac{c_{x_0x_1}}{4} a_{x_0} a_{x_1}\\
&+ \sum_{x_0,x_1,x_2\in X} \frac{c_{x_0x_1x_2}}{2} a_{x_0} a_{x_1} a_{x_2}
\end{split}
\end{align}
for some coefficients $c$.
With this, the matrix $\widetilde M^b$ defining the twisted errors is given by
\begin{equation}
\langle e,\widetilde M^b v\rangle
= S^{(3)}(a_0,e,d_0v)\;,
\end{equation}
where $a_0$ is any element of $\zz_2^X$ such that $d_1a_0=b$.
The right-hand side is bilinear in $e$ and $v$ because $S^{(3)}$ is trilinear.
If $e\in X$ and $v\in Y$ are generators, it suffices to choose $a_0$ inside a small neighborhood of $e$ and $v$, due to the locality of $S$ and consequently of $S^{(3)}$.
Even though the twisted errors $\langle e,\widetilde M^b v\rangle$ depend on the choice of $a_0$, what we really care about is their syndrome 
\begin{equation}    
M^b\coloneqq d_0^T\widetilde M^b, 
\end{equation}
which is independent of $a_0$:
Since we only care about $a_0$ in the vicinity of $v$ and $e$, any other $a_0'$ with $d_1a_0'=b$ is of the form $a_0'=a_0+d_0\alpha$, and we find
\begin{align}
\begin{split}
\langle w, M^b v\rangle
=& S^{(3)}(a_0,d_0w,d_0v)\\
=& S^{(3)}(a_0,d_0w,d_0v) + S^{(3)}(d_0\alpha,d_0w,d_0v)\\
=& S^{(3)}(a_0+d\alpha,d_0w,d_0v)\\
=& S^{(3)}(a_0',d_0w,d_0v)\;,
\end{split}
\end{align}
where 
\begin{equation}
S^{(3)}(d_0\bullet,d_0\bullet,d_0\bullet)=0
\end{equation}
follows straight-forwardly from gauge invariance as $S[d_0\bullet]=0$.
The derivation that $\widetilde M^b$ gives rise to equivalences for the charge configuration $c$ is analogous to Eq.~\eqref{eq:charge_equivalence_derivation}.
In this context, we notice that the gauge variance corresponds to the second derivative of $S$,
\begin{align}
\begin{split}
\Delta S[a,v]
=& S[a+d_0v]-S[a]\\
=& S[a+d_0v]-S[a]-S[d_0v]\\
=& S^{(2)}(a,d_0v)\;.
\end{split}
\end{align}
In the same way, the second-order gauge variance corresponds to the third derivative:
\begin{align}
\begin{split}
\Delta^2 S[a,v,w]
=& \Delta S[a+d_0w,v]-\Delta S[a,v]\\
=& \Delta S[a+d_0w,v]-\Delta S[a,v]-\Delta S[d_0w,v]\\
=& S^{(2)}(a+d_0w,d_0v)-S^{(2)}(a,d_0v)\\
&-S^{(2)}(d_0w,d_0v)\\
=& S^{(3)}(a,d_0w,d_0v)\;.
\end{split}
\end{align}

\myparagraph{Chain complex from the circuit}
The identification of the circuit with a path integral can also be understood on the level of the $ZX$-like tensor-network representation of the circuit, see Refs.~\cite{Bombin2023, path_integral_qec, xyzrubycode, xy_floquet, kissinger2022phasefreezx}.
In particular, we consider a bipartite diagram built from two types of tensors, \emph{$X$ tensors},
\begin{equation}
\begin{tikzpicture}
\atoms{circ,small}{0/lab={t=$s$,p=45:0.25}}
\draw (0)edge[ind=$a$]++(0:0.5) (0)edge[ind=$b$]++(90:0.5) (0)edge[ind=$c$]++(180:0.5);
\node at (-90:0.5){$\ldots$};
\end{tikzpicture}
=
\delta_{a+b+c+\ldots=s}\;.
\end{equation}
and \emph{$Z$ tensors},
\begin{equation}
\label{eq:delta_definition}
\begin{tikzpicture}
\atoms{circ, small, all}{0/lab={t=$s$,p=45:0.25}}
\draw (0)edge[ind=$a$]++(0:0.5) (0)edge[ind=$b$]++(90:0.5) (0)edge[ind=$c$]++(180:0.5);
\node at (-90:0.5){$\ldots$};
\end{tikzpicture}
=
(-1)^s \delta_{a=b=c=\ldots}\;,
\end{equation}
where $s\in\{0,1\}=\zz_2$ is a \emph{sign} that can be added to each tensor.
Note that any diagram composed of $X$ and $Z$ tensors can be brought into bipartite form by fusing all adjacent tensors of the same type or adding two-legged sign-free tensors between tensors of the same type.
The circuits we consider are represented by a tensor network with $X$ and $Z$ tensors on the qubit worldlines and additional local third-order tensors~\cite{quadratic_tensors}, one for each non-Clifford gate in the circuit.
The map $d_1:\zz_2^X \to \zz_2^Y$ is then given by the adjacency matrix between $X$ and $Z$ tensors.
The map $d_0$ corresponds to the matrix whose column vectors are the local generators of the kernel of $d_1$.
It can be obtained from local relations among the $X$ stabilizers of the $Z$ tensors in the network when all $X$ tensors have trivial sign, similar to $X$-detectors in Clifford diagrams~\cite{xyzrubycode, rodatz2024floquetifying}.
Pauli errors, measurement errors, and $-1$ measurement outcomes in the circuit can then be mapped onto configurations of signs $s$ on the $X$ and $Z$ tensors.
These can be identified with 2-chains and 1-chains respectively, in a chain complex defined by $d_0$ and $d_1$.

\end{appendix}

\bibliography{refs}

@misc{bauer2025planar,
  title                 = {{Planar fault-tolerant circuits for non-Clifford gates on the 2D color code}},
  author                = {Bauer, Andreas and Magdalena de la Fuente, Julio C. },
  year                  = {2025},
  eprint                = {2505.05175},
  archiveprefix         = {arXiv},
  optoptoptprimaryclass = {quant-ph},
  url                   = {https://arxiv.org/abs/2505.05175}
}

@misc{davydova2025universal,
  title                 = {{Universal fault tolerant quantum computation in 2D without getting tied in knots}},
  author                = {Davydova, Margarita and Bauer, Andreas and Magdalena de la Fuente, Julio C. and Webster, Mark and Williamson, Dominic J. and Brown, Benjamin J.},
  year                  = {2025},
  eprint                = {2503.15751},
  archiveprefix         = {arXiv},
  optoptoptprimaryclass = {quant-ph},
  url                   = {https://arxiv.org/abs/2503.15751}
}

@article{Brown2020cczsurfacecode,
  title     = {{A fault-tolerant non-Clifford gate for the surface code in two dimensions}},
  volume    = {6},
  pages     = {eaay4929},
  doi       = {10.1126/sciadv.aay4929},
  journal   = {Science Adv.},
  publisher = {American Association for the Advancement of Science (AAAS)},
  author    = {Brown, Benjamin J.},
  year      = {2020}
}

@misc{MindTheGaps,
  title         = {Mind the gaps: The fraught road to quantum advantage},
  author        = {J. Eisert and J. Preskill},
  year          = {2025},
  eprint        = {2510.19928},
  archiveprefix = {arXiv},
  url           = {https://arxiv.org/abs/2510.19928}
}

@misc{bombin20182d3d,
  title                 = {{2D quantum computation with 3D topological codes}},
  author                = {Hector Bombin},
  year                  = {2018},
  eprint                = {1810.09571},
  archiveprefix         = {arXiv},
  optoptoptprimaryclass = {quant-ph},
  url                   = {https://arxiv.org/abs/1810.09571}
}

@article{Dennis2002,
  title     = {Topological quantum memory},
  volume    = {43},
  issn      = {1089-7658},
  url       = {http://dx.doi.org/10.1063/1.1499754},
  doi       = {10.1063/1.1499754},
  number    = {9},
  journal   = {J. Math. Phys.},
  publisher = {AIP Publishing},
  author    = {Dennis, Eric and Kitaev, Alexei and Landahl, Andrew and Preskill, John},
  year      = {2002},
  month     = sep,
  pages     = {4452–4505}
}

@misc{kobayashi2025cliffordcodes,
  title                 = {{Clifford hierarchy stabilizer codes: transversal non-Clifford gates and magic}},
  author                = {Ryohei Kobayashi and Guanyu Zhu and Po-Shen Hsin},
  year                  = {2025},
  eprint                = {2511.02900},
  archiveprefix         = {arXiv},
  optoptoptprimaryclass = {quant-ph},
  url                   = {https://arxiv.org/abs/2511.02900}
}

@misc{christos2026twisted-ldpc,
  title                 = {{Non-Abelian quantum low-density parity check codes and non-Clifford operations from gauging logical gates via measurements}},
  author                = {Maine Christos and Chiu Fan Bowen Lo and Vedika Khemani and Rahul Sahay},
  year                  = {2026},
  eprint                = {2602.12228},
  archiveprefix         = {arXiv},
  optoptoptprimaryclass = {quant-ph},
  url                   = {https://arxiv.org/abs/2602.12228}
}

@misc{williamson2026higher-form-gauging,
  title                 = {Fast magic state preparation by gauging higher-form transversal gates in parallel},
  author                = {Dominic J. Williamson},
  year                  = {2026},
  eprint                = {2601.22939},
  archiveprefix         = {arXiv},
  optoptoptprimaryclass = {quant-ph},
  url                   = {https://arxiv.org/abs/2601.22939}
}

@misc{zhu2026twisted-ldpc,
  title                 = {{Non-Abelian qLDPC: TQFT formalism, addressable gauging measurement and application to magic state fountain on 2D product codes}},
  author                = {Guanyu Zhu and Ryohei Kobayashi and Po-Shen Hsin},
  year                  = {2026},
  eprint                = {2601.06736},
  archiveprefix         = {arXiv},
  optoptoptprimaryclass = {quant-ph},
  url                   = {https://arxiv.org/abs/2601.06736}
}

@article{Bauer2025allAbelian,
  title     = {{Low-overhead non-Clifford fault-tolerant circuits for all non-chiral Abelian topological phases}},
  volume    = {9},
  doi       = {10.22331/q-2025-03-25-1673},
  journal   = {Quantum},
  publisher = {Verein zur Forderung des Open Access Publizierens in den Quantenwissenschaften},
  author    = {Bauer,  Andreas},
  year      = {2025},
  month     = mar,
  pages     = {1673}
}

@misc{psidecoding,
  title                 = {Modular decoding: parallelizable real-time decoding for quantum computers},
  author                = {Hector Bombin and Chris Dawson and Ye-Hua Liu and Naomi Nickerson and Fernando Pastawski and Sam Roberts},
  year                  = {2023},
  eprint                = {2303.04846},
  archiveprefix         = {arXiv},
  optoptoptprimaryclass = {quant-ph},
  url                   = {https://arxiv.org/abs/2303.04846}
}

@article{riverlanedecoding,
  title     = {Parallel window decoding enables scalable fault tolerant quantum computation},
  volume    = {14},
  doi       = {10.1038/s41467-023-42482-1},
  pages     = 7040,
  journal   = {Nature Comm.},
  publisher = {Springer Science and Business Media LLC},
  author    = {Skoric, Luka and Browne, Dan E. and Barnes, Kenton M. and Gillespie, Neil I. and Campbell, Earl T.},
  year      = {2023},
  month     = nov
}

@article{Anwar2014quditcodes,
  doi       = {10.1088/1367-2630/16/6/063038},
  url       = {https://doi.org/10.1088/1367-2630/16/6/063038},
  year      = {2014},
  month     = {jun},
  publisher = {IOP Publishing},
  volume    = {16},
  number    = {6},
  pages     = {063038},
  author    = {Anwar, Hussain and Brown, Benjamin J and Campbell, Earl T and Browne, Dan E},
  title     = {Fast decoders for qudit topological codes},
  journal   = {New J. Phys.}
}

@misc{delfosse2023spacetime,
  title                 = {{Spacetime codes of Clifford circuits}},
  author                = {Nicolas Delfosse and Adam Paetznick},
  year                  = {2023},
  eprint                = {2304.05943},
  archiveprefix         = {arXiv},
  optoptoptprimaryclass = {quant-ph},
  url                   = {https://arxiv.org/abs/2304.05943}
}

@article{Scruby2022JITnumerics,
  title     = {Numerical Implementation of Just-In-Time Decoding in Novel Lattice Slices Through the Three-Dimensional Surface Code},
  volume    = {6},
  issn      = {2521-327X},
  url       = {http://dx.doi.org/10.22331/q-2022-05-24-721},
  doi       = {10.22331/q-2022-05-24-721},
  journal   = {Quantum},
  publisher = {Verein zur Forderung des Open Access Publizierens in den Quantenwissenschaften},
  author    = {Scruby, T. R. and Browne, D. E. and Webster, P. and Vasmer, M.},
  year      = {2022},
  month     = may,
  pages     = {721}
}

@misc{scruby2025nodistillation,
  title                 = {{Fault-tolerant quantum computation without distillation on a 2D device}},
  author                = {Thomas R. Scruby and Kae Nemoto and Zhenyu Cai},
  year                  = {2025},
  eprint                = {2412.12529},
  archiveprefix         = {arXiv},
  optoptoptprimaryclass = {quant-ph},
  url                   = {https://arxiv.org/abs/2412.12529}
}

@article{fowler2012surface,
  title     = {Surface codes: Towards practical large-scale quantum computation},
  author    = {Fowler, Austin G. and Mariantoni, Matteo and Martinis, John M. and Cleland, Andrew N.},
  journal   = {Phys. Rev. A},
  volume    = {86},
  issue     = {3},
  pages     = {032324},
  numpages  = {48},
  year      = {2012},
  month     = {Sep},
  publisher = {American Physical Society},
  doi       = {10.1103/PhysRevA.86.032324},
  url       = {https://link.aps.org/doi/10.1103/PhysRevA.86.032324}
}

@article{Litinski2019game,
  title     = {A Game of Surface Codes: Large-Scale Quantum Computing with Lattice Surgery},
  volume    = {3},
  issn      = {2521-327X},
  url       = {http://dx.doi.org/10.22331/q-2019-03-05-128},
  doi       = {10.22331/q-2019-03-05-128},
  journal   = {Quantum},
  publisher = {Verein zur Forderung des Open Access Publizierens in den Quantenwissenschaften},
  author    = {Litinski, Daniel},
  year      = {2019},
  month     = mar,
  pages     = {128}
}

@article{Brown2017poking,
  title     = {{Poking holes and cutting corners to achieve Clifford gates with the surface code}},
  volume    = {7},
  pages     = 021029,
  doi       = {10.1103/physrevx.7.021029},
  number    = {2},
  journal   = {Phys. Rev. X},
  publisher = {American Physical Society (APS)},
  author    = {Brown, Benjamin J. and Laubscher, Katharina and Kesselring, Markus S. and Wootton, James R.},
  year      = {2017},
  month     = may
}

@article{Boundaries,
  title   = {The boundaries and twist defects of the color code and their applications to topological quantum computation},
  author  = {M. S. Kesselring and F. Pastawski and J. Eisert and B. J. Brown},
  journal = {Quantum},
  volume  = 2,
  pages   = 101,
  year    = 2018,
  doi     = {10.22331/q-2018-10-19-101}
}

@misc{bombin2018error-propagation,
  title                 = {{Transversal gates and error propagation in 3D topological codes}},
  author                = {H. Bombin},
  year                  = {2018},
  eprint                = {1810.09575},
  archiveprefix         = {arXiv},
  optoptoptprimaryclass = {quant-ph},
  url                   = {https://arxiv.org/abs/1810.09575}
}

@misc{surti2025logicalmagic,
  title                 = {Efficient simulation of logical magic state preparation protocols},
  author                = {Samyak Surti and Lucas Daguerre and Isaac H. Kim},
  year                  = {2025},
  eprint                = {2512.23799},
  archiveprefix         = {arXiv},
  optoptoptprimaryclass = {quant-ph},
  url                   = {https://arxiv.org/abs/2512.23799}
}

@article{Gidney2021stim,
  title     = {Stim: a fast stabilizer circuit simulator},
  volume    = {5},
  issn      = {2521-327X},
  url       = {http://dx.doi.org/10.22331/q-2021-07-06-497},
  doi       = {10.22331/q-2021-07-06-497},
  journal   = {Quantum},
  publisher = {Verein zur Forderung des Open Access Publizierens in den Quantenwissenschaften},
  author    = {Gidney, Craig},
  year      = {2021},
  month     = jul,
  pages     = {497}
}

@misc{huang2026hybridsurgery,
  title                 = {{Hybrid lattice surgery: Non-Clifford gates via non-Abelian surface codes}},
  author                = {Sheng-Jie Huang and Alison Warman and Sakura Schafer-Nameki and Yanzhu Chen},
  year                  = {2026},
  eprint                = {2510.20890},
  archiveprefix         = {arXiv},
  optoptoptprimaryclass = {quant-ph},
  url                   = {https://arxiv.org/abs/2510.20890}
}

@misc{warman2026clifford-hierarchy,
  title                 = {{Transversal Clifford-hierarchy gates via non-Abelian surface codes}},
  author                = {Alison Warman and Sakura Schafer-Nameki},
  year                  = {2026},
  eprint                = {2512.13777},
  archiveprefix         = {arXiv},
  optoptoptprimaryclass = {quant-ph},
  url                   = {https://arxiv.org/abs/2512.13777}
}

@misc{huang2025magicstates,
  title                 = {{Generating logical magic states with the aid of non-Abelian topological order}},
  author                = {Sheng-Jie Huang and Yanzhu Chen},
  year                  = {2025},
  eprint                = {2502.00998},
  archiveprefix         = {arXiv},
  optoptoptprimaryclass = {quant-ph},
  url                   = {https://arxiv.org/abs/2502.00998}
}

@article{pastawski2015nogologic,
  title     = {Fault-tolerant logical gates in quantum error-correcting codes},
  author    = {Pastawski, Fernando and Yoshida, Beni},
  journal   = {Phys. Rev. A},
  volume    = {91},
  issue     = {1},
  pages     = {012305},
  numpages  = {11},
  year      = {2015},
  month     = {Jan},
  publisher = {American Physical Society},
  doi       = {10.1103/PhysRevA.91.012305},
  url       = {https://link.aps.org/doi/10.1103/PhysRevA.91.012305}
}

@article{BravyiKoenig,
  title     = {Classification of Topologically Protected Gates for Local Stabilizer Codes},
  author    = {Bravyi, Sergey and K\"onig, Robert},
  journal   = {Phys. Rev. Lett.},
  volume    = {110},
  issue     = {17},
  pages     = {170503},
  numpages  = {5},
  year      = {2013},
  month     = {Apr},
  publisher = {American Physical Society},
  doi       = {10.1103/PhysRevLett.110.170503},
  url       = {https://link.aps.org/doi/10.1103/PhysRevLett.110.170503}
}

@misc{CaltechRSA,
  year          = {2026},
  eprint        = {2603.28627},
  archiveprefix = {arXiv},
  title         = {{Shor's algorithm is possible with as few as 10,000 reconfigurable atomic qubits}},
  author        = {Madelyn Cain and Qian Xu and Robbie King and Lewis R. B. Picard and Harry Levine and Manuel Endres and John Preskill and Hsin-Yuan Huang and Dolev Bluvstein}
}

@misc{GidneyRSA,
  year          = {2025},
  eprint        = {2505.15917},
  archiveprefix = {arXiv},
  title         = {{How to factor 2048 bit RSA integers with less than a million noisy qubits}},
  author        = {C. Gidney}
}

@misc{IcebergRSA,
  year          = {2026},
  eprint        = {2602.11457},
  archiveprefix = {arXiv},
  title         = {{The Pinnacle Architecture: Reducing the cost of breaking RSA-2048 to 100,000 physical qubits using quantum LDPC codes}},
  author        = {Paul Webster and Lucas Berent and Omprakash Chandra and Evan T. Hockings and Nouédyn Baspin and Felix Thomsen and Samuel C. Smith and Lawrence Z. Cohen}
}

@article{OConnor2018disjointness,
  title     = {Disjointness of Stabilizer Codes and Limitations on Fault-Tolerant Logical Gates},
  volume    = {8},
  pages     = 021047,
  doi       = {10.1103/physrevx.8.021047},
  journal   = {Phys. Rev. X},
  publisher = {American Physical Society (APS)},
  author    = {Jochym-O’Connor, Tomas and Kubica, Aleksander and Yoder, Theodore J.},
  year      = {2018},
  month     = may
}

@misc{bombin2016dimensionaljump,
  title                 = {Dimensional Jump in Quantum Error Correction},
  author                = {H. Bombin},
  year                  = {2016},
  eprint                = {1412.5079},
  archiveprefix         = {arXiv},
  optoptoptprimaryclass = {quant-ph},
  url                   = {https://arxiv.org/abs/1412.5079}
}

@article{beverland2021costuniversal,
  title     = {Cost of Universality: A Comparative Study of the Overhead of State Distillation and Code Switching with Color Codes},
  author    = {Beverland, Michael E. and Kubica, Aleksander and Svore, Krysta M.},
  journal   = {PRX Quantum},
  volume    = {2},
  issue     = {2},
  pages     = {020341},
  numpages  = {46},
  year      = {2021},
  month     = {Jun},
  publisher = {American Physical Society},
  doi       = {10.1103/PRXQuantum.2.020341},
  url       = {https://link.aps.org/doi/10.1103/PRXQuantum.2.020341}
}

@article{Aaronson2004simulation,
  title     = {Improved simulation of stabilizer circuits},
  author    = {Aaronson, Scott and Gottesman, Daniel},
  journal   = {Phys. Rev. A},
  volume    = {70},
  issue     = {5},
  pages     = {052328},
  numpages  = {14},
  year      = {2004},
  month     = {Nov},
  publisher = {American Physical Society},
  doi       = {10.1103/PhysRevA.70.052328},
  url       = {https://link.aps.org/doi/10.1103/PhysRevA.70.052328}
}

@article{Bravyi2012low-overheadMSD,
  title     = {Magic-state distillation with low overhead},
  volume    = {86},
  pages     = 052329,
  doi       = {10.1103/physreva.86.052329},
  number    = {5},
  journal   = {Phys. Rev. A},
  publisher = {American Physical Society (APS)},
  author    = {Bravyi, Sergey and Haah, Jeongwan},
  year      = {2012},
  month     = nov
}

@article{Litinski2019notascostly,
  title     = {Magic State Distillation: Not as Costly as You Think},
  volume    = {3},
  issn      = {2521-327X},
  url       = {http://dx.doi.org/10.22331/q-2019-12-02-205},
  doi       = {10.22331/q-2019-12-02-205},
  journal   = {Quantum},
  publisher = {Verein zur Forderung des Open Access Publizierens in den Quantenwissenschaften},
  author    = {Litinski, Daniel},
  year      = {2019},
  month     = dec,
  pages     = {205}
}

@article{Bravyi2005MSD,
  title     = {{Universal quantum computation with ideal Clifford gates and noisy ancillas}},
  author    = {Bravyi, Sergey and Kitaev, Alexei},
  journal   = {Phys. Rev. A},
  volume    = {71},
  issue     = {2},
  pages     = {022316},
  numpages  = {14},
  year      = {2005},
  month     = {Feb},
  publisher = {American Physical Society},
  doi       = {10.1103/PhysRevA.71.022316},
  url       = {https://link.aps.org/doi/10.1103/PhysRevA.71.022316}
}

@article{Lee2025MSDColor,
  title     = {Low-Overhead Magic State Distillation with Color Codes},
  volume    = {6},
  doi       = {10.1103/ch5r-cnfq},
  pages     = 030317,
  journal   = {PRX Quantum},
  publisher = {American Physical Society (APS)},
  author    = {Lee, Seok-Hyung and Thomsen, Felix and Fazio, Nicholas and Brown, Benjamin J. and Bartlett, Stephen D.},
  year      = {2025},
  month     = jul
}

@article{Gidney2019efficientmagicstate,
  doi       = {10.22331/q-2019-04-30-135},
  url       = {https://doi.org/10.22331/q-2019-04-30-135},
  title     = {Efficient magic state factories with a catalyzed {$|CCZ\rangle$} to {$2|T\rangle$} transformation},
  author    = {Gidney, Craig and Fowler, Austin G.},
  journal   = {{Quantum}},
  issn      = {2521-327X},
  publisher = {{Verein zur F{\"{o}}rderung des Open Access Publizierens in den Quantenwissenschaften}},
  volume    = {3},
  pages     = {135},
  month     = apr,
  year      = {2019}
}

@misc{gidney2019autoCCZ,
  title              = {{Flexible layout of surface code computations using AutoCCZ states}},
  author             = {Craig Gidney and Austin G. Fowler},
  year               = {2019},
  eprint             = {1905.08916},
  archiveprefix      = {arXiv},
  optoptprimaryclass = {quant-ph},
  url                = {https://arxiv.org/abs/1905.08916}
}

@misc{gidney2024cultivation,
  title                 = {{Magic state cultivation: growing $T$ states as cheap as CNOT gates}},
  author                = {Craig Gidney and Noah Shutty and Cody Jones},
  year                  = {2024},
  eprint                = {2409.17595},
  archiveprefix         = {arXiv},
  optoptoptprimaryclass = {quant-ph},
  url                   = {https://arxiv.org/abs/2409.17595}
}

@article{Psi2024postselection,
  title     = {Fault-Tolerant Postselection for Low-Overhead Magic State Preparation},
  author    = {Bombin, Hector and Pant, Mihir and Roberts, Sam and Seetharam, Karthik I.},
  journal   = {PRX Quantum},
  volume    = {5},
  issue     = {1},
  pages     = {010302},
  numpages  = {19},
  year      = {2024},
  month     = {Jan},
  publisher = {American Physical Society},
  doi       = {10.1103/PRXQuantum.5.010302},
  url       = {https://link.aps.org/doi/10.1103/PRXQuantum.5.010302}
}

@article{Scruby2022nonPauli,
  title     = {{Non-Pauli errors in the three-dimensional surface code}},
  doi       = {10.1103/physrevresearch.4.043052},
  pages     = 043052,
  volume    = {4},
  journal   = {Phys. Rev. Res.},
  publisher = {American Physical Society (APS)},
  author    = {Scruby, Thomas R. and Vasmer, Michael and Browne, Dan E.},
  year      = {2022}
}

@article{SteaneCode,
  author  = {A. M. Steane},
  journal = {Phys. Rev. Lett.},
  title   = {Error correcting codes in quantum theory},
  year    = {1996},
  pages   = {793},
  doi     = {10.1103/PhysRevLett.77.793},
  volume  = {77}
}

@article{shor1995QEC,
  title     = {Scheme for reducing decoherence in quantum computer memory},
  author    = {Shor, Peter W.},
  journal   = {Phys. Rev. A},
  volume    = {52},
  issue     = {4},
  pages     = {R2493--R2496},
  numpages  = {0},
  year      = {1995},
  month     = {Oct},
  publisher = {American Physical Society},
  doi       = {10.1103/PhysRevA.52.R2493},
  url       = {https://link.aps.org/doi/10.1103/PhysRevA.52.R2493}
}

@inproceedings{Shor1996,
  title     = {Fault-tolerant Quantum Computation},
  author    = {Shor, P. W.},
  booktitle = {Proceedings of the 37th Annual Symposium on Foundations of Computer Science},
  year      = {1996},
  address   = {Washington, DC, USA},
  pages     = {56--},
  publisher = {IEEE Computer Society},
  series    = {FOCS '96},
  acmid     = {875509},
  opturl    = {http://dl.acm.org/citation.cfm?id=874062.875509}
}

@article{Roads,
  title   = {Roads towards fault-tolerant universal quantum computation},
  author  = {E. T. Campbell and B. M. Terhal and C. Vuillot},
  doi     = {10.1038/nature23460},
  journal = {Nature},
  volume  = 549,
  pages   = {172-179},
  year    = 2017
}

@article{terhal2015review,
  title     = {Quantum error correction for quantum memories},
  author    = {Terhal, B. M.},
  journal   = {Rev. Mod.  Phys.},
  volume    = {87},
  pages     = {307--346},
  numpages  = {40},
  year      = {2015},
  publisher = {American Physical Society},
  doi       = {10.1103/RevModPhys.87.307}
}

@article{Lukin2024QEC,
  title   = {Logical quantum processor based on reconfigurable atom arrays},
  author  = {Dolev Bluvstein and Simon J. Evered and Alexandra A. Geim and Sophie H. Li and Hengyun Zhou and Tom Manovitz and Sepehr Ebadi and Madelyn Cain and Marcin Kalinowski and Dominik Hangleiter and J. Pablo {Bonilla Ataides} and Nishad Maskara and Iris Cong and Xun Gao and Pedro Sales Rodriguez and Thomas Karolyshyn and Giulia Semeghini and Michael J. Gullans and Markus Greiner and Vladan Vuletic and Mikhail D. Lukin},
  journal = {Nature},
  volume  = 626,
  pages   = {58-65},
  year    = 2024,
  doi     = {10.1038/s41586-023-06927-3}
}

@article{Google2025Threshold,
  title   = {Quantum error correction below the surface code threshold},
  author  = {{Google Quantum AI and collaborators}},
  doi     = {10.1038/s41586-024-08449-y},
  journal = {Nature},
  volume  = 638,
  pages   = {920-926},
  year    = 2025
}

@misc{Quantinuum2024demonstration,
  title                 = {Demonstration of logical qubits and repeated error correction with better-than-physical error rates},
  author                = {A. Paetznick and M. P. da Silva and C. Ryan-Anderson and J. M. Bello-Rivas and J. P. Campora III and A. Chernoguzov and J. M. Dreiling and C. Foltz and F. Frachon and J. P. Gaebler and T. M. Gatterman and L. Grans-Samuelsson and D. Gresh and D. Hayes and N. Hewitt and C. Holliman and C. V. Horst and J. Johansen and D. Lucchetti and Y. Matsuoka and M. Mills and S. A. Moses and B. Neyenhuis and A. Paz and J. Pino and P. Siegfried and A. Sundaram and D. Tom and S. J. Wernli and M. Zanner and R. P. Stutz and K. M. Svore},
  year                  = {2024},
  eprint                = {2404.02280},
  archiveprefix         = {arXiv},
  optoptoptprimaryclass = {quant-ph},
  url                   = {https://arxiv.org/abs/2404.02280}
}

@article{Daguerre2025Demonstration,
  title     = {Experimental Demonstration of High-Fidelity Logical Magic States from Code Switching},
  volume    = {15},
  doi       = {10.1103/dck4-x9c2},
  journal   = {Phys. Rev. X},
  publisher = {American Physical Society (APS)},
  pages     = 041008,
  author    = {Daguerre, Lucas and Blume-Kohout, Robin and Brown, Natalie C. and Hayes, David and Kim, Isaac H.},
  year      = {2025}
}

@misc{rosenfeld2025cultivation,
  title                 = {Magic state cultivation on a superconducting quantum processor},
  author                = {{Google Quantum AI and collaborators}},
  year                  = {2025},
  eprint                = {2512.13908},
  archiveprefix         = {arXiv},
  optoptoptprimaryclass = {quant-ph},
  url                   = {https://arxiv.org/abs/2512.13908}
}

@article{Bluvstein2025architecture,
  title     = {A fault-tolerant neutral-atom architecture for universal quantum computation},
  volume    = {649},
  issn      = {1476-4687},
  url       = {http://dx.doi.org/10.1038/s41586-025-09848-5},
  doi       = {10.1038/s41586-025-09848-5},
  number    = {8095},
  journal   = {Nature},
  publisher = {Springer Science and Business Media LLC},
  author    = {Bluvstein, Dolev and Geim, Alexandra A. and Li, Sophie H. and Evered, Simon J. and Bonilla Ataides, J. Pablo and Baranes, Gefen and Gu, Andi and Manovitz, Tom and Xu, Muqing and Kalinowski, Marcin and Majidy, Shayan and Kokail, Christian and Maskara, Nishad and Trapp, Elias C. and Stewart, Luke M. and Hollerith, Simon and Zhou, Hengyun and Gullans, Michael J. and Yelin, Susanne F. and Greiner, Markus and Vuletić, Vladan and Cain, Madelyn and Lukin, Mikhail D.},
  year      = {2025},
  month     = nov,
  pages     = {39–46}
}

@article{Vuillot2019toricGKP,
  title     = {{Quantum error correction with the toric Gottesman-Kitaev-Preskill code}},
  volume    = {99},
  doi       = {10.1103/physreva.99.032344},
  pages     = 032344,
  journal   = {Phys. Rev. A},
  publisher = {American Physical Society (APS)},
  author    = {Vuillot, Christophe and Asasi, Hamed and Wang, Yang and Pryadko, Leonid P. and Terhal, Barbara M.},
  year      = {2019}
}

@article{Wang2003statmech,
  title     = {{Confinement-Higgs transition in a disordered gauge theory and the accuracy threshold for quantum memory}},
  volume    = {303},
  doi       = {10.1016/s0003-4916(02)00019-2},
  number    = {1},
  journal   = {Ann. Phys.},
  publisher = {Elsevier BV},
  author    = {Wang, Chenyang and Harrington, Jim and Preskill, John},
  year      = {2003},
  pages     = {31–58}
}

@article{Kubica2023colorcodedecoders,
  title     = {Efficient color code decoders in $d$ dimensions from toric code decoders},
  volume    = {7},
  issn      = {2521-327X},
  url       = {http://dx.doi.org/10.22331/q-2023-02-21-929},
  doi       = {10.22331/q-2023-02-21-929},
  journal   = {Quantum},
  publisher = {Verein zur Forderung des Open Access Publizierens in den Quantenwissenschaften},
  author    = {Kubica, Aleksander and Delfosse, Nicolas},
  year      = {2023},
  month     = feb,
  pages     = {929}
}

@article{xy_floquet,
  author  = {Andreas Bauer},
  title   = {{The $x+y$ Floquet code: A simple example for topological quantum computation in the path integral approach}},
  year    = {2025},
  journal = {Phys. Rev. A},
  volume  = {111},
  doi     = {10.1103/PhysRevA.111.032413},
  pages   = {032413}
}

@article{Dijkgraaf1990,
  author  = {Dijkgraaf, R. and Witten, E.},
  title   = {Topological Gauge Theories and Group Cohomology},
  year    = {1990},
  journal = {Commun. Math. Phys.},
  volume  = {129},
  pages   = {393-429},
  doi     = {10.1007/BF02096988}
}

@phdthesis{Propitius1995,
  author        = {Propitius, Mark de Wild},
  title         = {Topological interactions in broken gauge theories},
  archiveprefix = {arXiv},
  eprint        = {hep-th/9511195},
  school        = {University of Amsterdam},
  url           = {http://arxiv.org/abs/hep-th/9511195},
  year          = {1995}
}

@misc{thesis,
  author = {A. Bauer},
  title  = {Topological Phases, Fixed-Point Models, Extended TQFT, and Fault-tolerant quantum computation-tensors in spacetime},
  note   = {{PhD thesis, Freie Universit\"at Berlin}},
  year   = {2023},
  url    = {https://refubium.fu-berlin.de/bitstream/handle/fub188/45937/Dissertation_Andreas_Bauer.pdf?sequence=4}
}

@article{Ni2015XS,
  title     = {A non-commuting stabilizer formalism},
  volume    = {56},
  pages     = 052201,
  doi       = {10.1063/1.4920923},
  journal   = {J. Math. Phys.},
  publisher = {AIP Publishing},
  author    = {Ni,  Xiaotong and Buerschaper,  Oliver and Van den Nest,  Maarten},
  year      = {2015}
}

@misc{PsiMacromux,
  title              = {Macromux: scalable postselection for high-threshold fault-tolerant quantum computation},
  author             = {Patrick Birchall and Jacob Bridgeman and Christopher Dawson and Terry Farrelly and Yehua Liu and Naomi Nickerson and Mihir Pant and Sam Roberts and Karthik Seetharam and David Tuckett},
  year               = {2026},
  eprint             = {2603.04875},
  archiveprefix      = {arXiv},
  optoptprimaryclass = {quant-ph},
  url                = {https://arxiv.org/abs/2603.04875}
}

@misc{manjunath2026GSC,
  title              = {Universal quantum computation with group surface codes},
  author             = {Naren Manjunath and Vieri Mattei and Apoorv Tiwari and Tyler D. Ellison},
  year               = {2026},
  eprint             = {2603.05502},
  archiveprefix      = {arXiv},
  optoptprimaryclass = {quant-ph},
  url                = {https://arxiv.org/abs/2603.05502}
}

@misc{sorge_pyfssa_2015,
  author    = {Andreas Sorge},
  title     = {pyfssa 0.7.6},
  month     = dec,
  year      = 2015,
  publisher = {Zenodo},
  doi       = {10.5281/zenodo.35173}
}

@misc{melchert_autoscale_2009,
  title           = {autoScale.py - A program for automatic finite-size scaling analyses},
  author          = {Oliver Melchert},
  year            = {2009},
  eprint          = {0910.5403},
  archiveprefix   = {arXiv},
  optprimaryclass = {physics.comp-ph}
}

@article{path_integral_qec,
  author  = {Andreas Bauer},
  title   = {Topological error correcting processes from fixed-point path integrals},
  year    = {2024},
  journal = {Quantum},
  volume  = 8,
  pages   = {1288},
  doi     = {10.22331/q-2024-03-20-1288}
}

@article{Steenrod1947,
  issn      = {0003486X},
  turl      = {http://www.jstor.org/stable/1969172},
  author    = {N. E. Steenrod},
  journal   = {Ann. Math.},
  number    = {2},
  pages     = {290--320},
  publisher = {Ann. of Math.},
  title     = {Products of Cocycles and Extensions of Mappings},
  volume    = {48},
  year      = {1947},
  doi       = {10.2307/1969172}
}

@article{Chen2021,
  author        = {Yu-An Chen and Sri Tata},
  title         = {Higher cup products on hypercubic lattices: application to lattice models of topological phases},
  year          = {2023},
  archiveprefix = {arXiv},
  eprint        = {2106.05274},
  journal       = {J. Math. Phys.},
  volume        = {64},
  pages         = {091902},
  doi           = {10.1063/5.0095189}
}

@misc{quadratic_tensors,
  author        = {Andreas Bauer and Seth Lloyd},
  title         = {{Quadratic tensors as a unification of Clifford, Gaussian, and free-fermion physics}},
  year          = {2026},
  archiveprefix = {arXiv},
  eprint        = {2601.15396}
}

@misc{hetenyi2026cultivationgauging,
  title              = {Constant depth magic state cultivation with Clifford measurements by gauging},
  author             = {Bence Hetényi and Benjamin J. Brown and Dominic J. Williamson},
  year               = {2026},
  eprint             = {2603.05429},
  archiveprefix      = {arXiv},
  optoptprimaryclass = {quant-ph},
  url                = {https://arxiv.org/abs/2603.05429}
}

@misc{lake2025localQEC,
  title              = {Local active error correction from simulated confinement},
  author             = {Ethan Lake},
  year               = {2025},
  eprint             = {2510.08056},
  archiveprefix      = {arXiv},
  optoptprimaryclass = {quant-ph},
  url                = {https://arxiv.org/abs/2510.08056}
}

@misc{paletta2026localdecodertoriccode,
  title              = {Local decoder for the toric code with a high pseudo-threshold},
  author             = {Louis Paletta},
  year               = {2026},
  eprint             = {2603.02328},
  archiveprefix      = {arXiv},
  optoptprimaryclass = {quant-ph},
  url                = {https://arxiv.org/abs/2603.02328}
}

@article{Cui2017diagonal,
  title   = {{Diagonal gates in the Clifford hierarchy}},
  volume  = {95},
  doi     = {10.1103/physreva.95.012329},
  pages   = 012329,
  journal = {Phys. Rev. A},
  author  = {Cui, Shawn X. and Gottesman, Daniel and Krishna, Anirudh},
  year    = {2017}
}

@misc{breuckmann2025cupsgatesicohomology,
  title           = {Cups and Gates I: Cohomology invariants and logical quantum operations},
  author          = {Nikolas P. Breuckmann and Margarita Davydova and Jens N. Eberhardt and Nathanan Tantivasadakarn},
  year            = {2025},
  eprint          = {2410.16250},
  archiveprefix   = {arXiv},
  optprimaryclass = {quant-ph},
  url             = {https://arxiv.org/abs/2410.16250}
}

@article{xyzrubycode,
  title     = {$\mathrm{XYZ}$ Ruby Code: Making a Case for a Three-Colored Graphical Calculus for Quantum Error Correction in Spacetime},
  author    = {Magdalena de la Fuente, Julio C. and Old, Josias and Townsend-Teague, Alex and Rispler, Manuel and Eisert, Jens and M\"uller, Markus},
  journal   = {PRX Quantum},
  volume    = {6},
  issue     = {1},
  pages     = {010360},
  numpages  = {56},
  year      = {2025},
  month     = {Mar},
  publisher = {American Physical Society},
  doi       = {10.1103/PRXQuantum.6.010360},
  url       = {https://link.aps.org/doi/10.1103/PRXQuantum.6.010360}
}

@article{bbcodes,
  author        = {Sergey Bravyi and Andrew W. Cross and Jay M. Gambetta and Dmitri Maslov and Patrick Rall and Theodore J. Yoder},
  title         = {High-threshold and low-overhead fault-tolerant quantum memory},
  year          = {2024},
  archiveprefix = {arXiv},
  eprint        = {2308.07915},
  journal       = {Nature},
  volume        = {627},
  pages         = {778-782},
  doi           = {10.1038/s41586-024-07107-7}
}

@article{Bombin2023,
  author  = {Hector Bombin and Daniel Litinski and Naomi Nickerson and Fernando Pastawski and Sam Roberts},
  title   = {{Unifying flavors of fault tolerance with the ZX calculus}},
  year    = {2024},
  journal = {Quantum},
  doi     = {10.22331/q-2024-06-18-1379},
  volume  = {8},
  pages   = {1379}
}

@misc{rodatz2024floquetifying,
  title           = {Floquetifying stabiliser codes with distance-preserving rewrites},
  author          = {Benjamin Rodatz and Boldizsár Poór and Aleks Kissinger},
  year            = {2024},
  eprint          = {2410.17240},
  archiveprefix   = {arXiv},
  optprimaryclass = {quant-ph},
  url             = {https://arxiv.org/abs/2410.17240}
}

@misc{kissinger2022phasefreezx,
  title           = {{Phase-free ZX diagrams are CSS codes (...or how to graphically grok the surface code)}},
  author          = {Aleks Kissinger},
  year            = {2022},
  eprint          = {2204.14038},
  archiveprefix   = {arXiv},
  optprimaryclass = {quant-ph},
  url             = {https://arxiv.org/abs/2204.14038}
}

@article{Delfosse2021unionfind,
  title     = {Almost-linear time decoding algorithm for topological codes},
  volume    = {5},
  issn      = {2521-327X},
  url       = {http://dx.doi.org/10.22331/q-2021-12-02-595},
  doi       = {10.22331/q-2021-12-02-595},
  journal   = {Quantum},
  publisher = {Verein zur Forderung des Open Access Publizierens in den Quantenwissenschaften},
  author    = {Delfosse, Nicolas and Nickerson, Naomi H.},
  year      = {2021},
  month     = dec,
  pages     = {595}
}

@article{Higgott2023BPmatching,
  title     = {Improved Decoding of Circuit Noise and Fragile Boundaries of Tailored Surface Codes},
  volume    = {13},
  pages     = 031007,
  doi       = {10.1103/PhysRevX.13.031007},
  journal   = {Phys. Rev. X},
  publisher = {American Physical Society (APS)},
  author    = {Higgott, Oscar and Bohdanowicz, Thomas C. and Kubica, Aleksander and Flammia, Steven T. and Campbell, Earl T.},
  year      = {2023}
}

@article{Jing2026heraldingD4,
  title     = {{Intrinsic Heralding and optimal decoders for non-Abelian topological order}},
  volume    = {136},
  doi       = {10.1103/ccj7-ctd8},
  pages     = 120405,
  journal   = {Phys. Rev. Lett.},
  publisher = {American Physical Society (APS)},
  author    = {Jing, Dian and Sala, Pablo and Jiang, Liang and Verresen, Ruben},
  year      = {2026}
}

@article{Hsin2025,
  title     = {{Non-Abelian self-correcting quantum memory and transversal non-Clifford gate beyond the $n/3$
               distance barrier}},
  volume    = {6},
  doi       = {10.1103/hkgn-hhqx},
  pages     = {040360},
  journal   = {PRX Quantum},
  publisher = {American Physical Society (APS)},
  author    = {Hsin, Po-Shen and Kobayashi, Ryohei and Zhu, Guanyu},
  year      = {2025}
}

\end{document}